\providecommand{\nnreals}{\mathbb{R}_{\geq 0}}
\tikzset{>={Latex[width=1.5mm,length=1.5mm]}}
\def\R{\mathbb{R}}
\def\Q{\mathbb{Q}}
\def\Z{\mathbb{Z}}
\newcommand{\cP}{\mathcal{P}}
\newcommand{\cD}{\mathcal{D}}
\newcommand{\cJ}{\mathcal{J}}
\newcommand{\opt}{\textsf{OPT}}
\newcommand{\obj}{\textsf{OBJ}}
\newcommand{\cost}{\textsf{cost}}
\def\ep{\varepsilon}
\def\tO{\tilde{O}}
\def\len{\operatorname{length}}
\newtheorem{theorem}{Theorem}[section]
\newtheorem{lemma}[theorem]{Lemma}
\newtheorem{claim}[theorem]{Claim}
\newtheorem{corollary}[theorem]{Corollary}
\theoremstyle{definition}
\newtheorem{definition}[theorem]{Definition}
\providecommand{\email}[1]{\href{mailto:#1}{\nolinkurl{#1}\xspace}}
\def\final{1}  
\def\iflong{\iffalse}
\newcommand{\ynote}[1]{{\color{blue}[{\small Young-San: \bf #1}]\marginpar{\color{red}*}}}
\newcommand{\nnote}[1]{{\color{red}[{\small Nithish: \bf #1}]\marginpar{\color{red}*}}}
\newcommand{\enote}[1]{{\color{blue}[{\small Elena: \bf #1}]\marginpar{\color{red}*}}}
\newcommand{\todo}[1]{{\color{red}[{ TODO: \bf #1}]\marginpar{\color{red}*}}}
\newcommand{\ynote}[1]{}
\newcommand{\nnote}[1]{}
\newcommand{\enote}[1]{}
\newcommand{\todo}[1]{}
\newcommand{\eps}{\varepsilon}
\def\*#1{\mathbf{#1}}
\def\+#1{\mathcal{#1}}
\newcommand{\poly}{\ensuremath{\mathsf{poly}}}
\newcommand{\polylog}{\ensuremath{\mathsf{polylog}}}
\newcommand*{\inlineequation}[2][]{
  \begingroup
    \refstepcounter{equation}
    \ifx\\#1\\
    \else
      \label{#1}
    \fi
    \relpenalty=10000 
    \binoppenalty=10000 
    \ensuremath{
      #2
    }
    ~\@eqnnum
  \endgroup
}
\newcommand{\gsf}{\textsc{Group Steiner Forest}\xspace}
\newcommand{\pwsul}{\textsc{Pairwise Weighted Spanner}\xspace}
\newcommand{\rsp}{\textsc{Restricted Shortest Path}\xspace}
\newcommand{\rcsp}{\textsc{Resource-constrained Shortest Path}\xspace}
\newcommand{\epsvec}{\mathcal{E}}
\newcommand{\ubbs}{\textsc{Unit-demand Buy-at-bulk Spanner}\xspace}
\newcommand{\bbs}{\textsc{Buy-at-bulk Spanner}\xspace}
\newcommand{\ssbbs}{\textsc{Single-source Buy-at-bulk Spanner}\xspace}
\newcommand{\mslc}{\textsc{Minimum Density Steiner Label Cover}\xspace}
\newcommand{\bdgt}{\textsc{Dis}\xspace}
\newcommand{\res}{\textsc{Len}\xspace}
\newcommand{\load}{\textsc{Load}\xspace}
\newcommand{\dem}{\textsc{Dem}\xspace}
\newcommand{\upc}{\text{upfront cost}\xspace}
\newcommand{\ppc}{\text{pay-per-use cost}\xspace}
\newcommand{\maxn}{\mathcal{N}}
\DeclareMathOperator{\siign}{sgn}
\newcommand{\maxni}{\mathcal{N}_i }
\newcommand{\lenup}{\textsc{Max length}\xspace}
\newcommand{\lenlow}{ \textsc{Min length}\xspace}
\newcommand{\rcjt}{\textsc{Minimum-density Distance-constrained Junction Tree}\xspace}
\newcommand{\rrcjt}{\textsc{$\theta$-relaxed Minimum-density Distance-constrained Junction Tree}\xspace}
\title{Directed Buy-at-Bulk Spanners \\}
\author{
Elena Grigorescu\thanks{Purdue University.
 E-mail: \email{elena-g@purdue.edu}. Supported in part by NSF CCF-1910659, NSF CCF-1910411, and NSF CCF-2228814.
}
 \and
 Nithish Kumar\thanks{Purdue University. E-mail: \email{kumar410@purdue.edu}. Supported in part by NSF CCF-1910411, and NSF CCF-2228814.}
 \and
 Young-San Lin\thanks{Melbourne Business School. 
 E-mail: \email{y.lin@mbs.edu}.}
}
\begin{document}

\maketitle
\begin{abstract}

We present a framework that unifies directed buy-at-bulk network design and directed spanner problems, namely, \emph{buy-at-bulk spanners}. 
The goal is to find a minimum-cost routing solution for network design problems that capture economies at scale,  while satisfying demands and distance constraints for terminal pairs. 
A more restricted version of this problem was shown to be $O(2^{{\log^{1-\ep} n}})$-hard to approximate, where $n$ is the number of vertices, under a standard complexity assumption, due to Elkin and Peleg (Theory of Computing Systems, 2007). 

Our results for buy-at-bulk spanners are the following.
\begin{enumerate}
    \item When the edge lengths are \emph{integral} with magnitude \emph{polynomial} in $n$ we present:
        \begin{enumerate}
            \item An $\tO(n^{4/5 + \ep})$-approximation polynomial-time randomized algorithm for \emph{uniform} demands.
            \item An $\tO(k^{1/2 + \ep})$-approximation polynomial-time randomized algorithm for general demands, where  $k$ is the number of terminal pairs. This can be improved to an $\tO(k^{\ep})$-approximation algorithm for the single-source problem.
        \end{enumerate}

    \item When the edge lengths are {\em rational} and {\em well-conditioned}, we present an $\tO(k^{1/2 + \ep})$-approximation polynomial-time randomized algorithm which may slightly violate the distance constraints. The result can be improved to an $\tO(k^{\ep})$-approximation algorithm for the single-source problem.
\end{enumerate}

To the best of our knowledge, these are the first sublinear factor approximation algorithms for directed buy-at-bulk spanners. Furthermore, these results hold even when we allow the edge lengths to be \emph{negative}, unlike the previous literature for spanners. Our approximation ratios match the state-of-the-art ratios in special cases, namely, buy-at-bulk network design by Antonakopoulos (WAOA, 2010) and weighted spanners by Grigorescu, Kumar, and Lin (APPROX 2023).

Our results are based on approximation algorithms for the following two problems that are of independent interest: \emph{minimum-density distance-constrained junction trees} and \emph{resource-constrained shortest path with negative consumption}. 

In the minimum-density distance-constrained junction tree problem, the goal is to find a collection of routes that share the same vertex, such that the ratio of the cost to the number of terminal demands satisfied is minimized. Our framework is an extension of the notion of {\em minimum-density junction trees} used for approximating Steiner forests by Chekuri, Even, Gupta, and Segev (SODA 2008, TALG 2011), and {\em pairwise spanners} by Chlamt{\'a}{\v{c}},  Dinitz, Kortsarz, and Laekhanukit (SODA 2017, TALG 2020). Our proposed general framework accommodates both buy-at-bulk costs and distance constraints.

In the resource-constrained shortest path problem with negative consumption, the goal is to find a path with minimum cost within a multi-dimensional resource budget. Under mild assumptions, our framework is an FPTAS extension of the resource-constrained shortest path problem by Horvath and Kis (Optimization Letters 2018) and the restricted shortest path problem (where the resource dimension is one) by Hassin (Math of OR 1992) and Lorenz and Raz (OR Letters 2001). Our result allows for negative resource consumption, unlike the previous literature.

\end{abstract}

\newpage

\section{Introduction}
A core network connectivity problem is the {\em buy-at-bulk} problem \cite{salman2001approximating,cekp,antonakopoulos2009,awerbuch1997buy,talwar2002single,gupta2003simpler,charikar2005non,chekuri2010approximation,meyerson2008cost}, in which the goal is to route resources between pairs of source and destination locations.
Each pair has an associated demand, i.e., the load of the resources to be delivered. To model economies of scale, each edge in the network is associated with a \emph{cost} given by a subadditive function for the total load of resources delivered through an edge. A feasible solution to the problem is a collection of routes for each demand.
The goal is to find a feasible solution that minimizes the overall cost.

The \emph{spanner} problem \cite{bhattacharyya2012transitive,ElkinP07,dinitz2011directed,berman2013approximation,chlamtavc2020approximating,BodwinW16,dinitz2016approximating,gkl2023,elkin1999client,Kortsarz2001OnTH}, on the other hand, is a fundamental network connectivity problem where each edge is assigned a \emph{length}, and each terminal pair is associated with a \emph{distance budget}. The goal is to find a minimum-size subgraph such that each pair is connected within its target distance.

Both the buy-at-bulk and spanner problems have been well-studied problems in their own right as they have a plethora of applications in theory and practice. One limitation of the buy-at-bulk problem is that it does not account for distance constraints. For example, although the buy-at-bulk cost formulation captures the economies of scale excellently in communication networks such as the Internet, it does not account for latency. On the other hand, spanners can capture distance-constrained connectivity but do not account for economies of scale cost when the terminal pairs have various demands.

As a specific example, consider a situation where the city council wishes to modernize the city transportation network in order to minimize the carbon footprint of commuters. This could be captured by a buy-at-bulk formulation: if more commuters use a single transportation link, say a train, then it is natural to assume that the per-commuter carbon footprint will decrease. However, commuters also have their self-interest in mind, which may be a budget for transportation. How to minimize the carbon footprint without exceeding the commuters' budget?

Therefore, complex network design problems often need to be modeled  by buy-at-bulk costs, while also satisfying spanner-like distance constraints, which leads to the following question:

\begin{center}
      \emph{How to solve buy-at-bulk network design problems while also satisfying spanner-like distance constraints?}
\end{center}

Furthermore, typical spanner problems have only dealt with positive edge lengths. However, one encounters natural applications in which the resource cost may be modeled as being \emph{negative}, i.e., a resource gain.\footnote{Consider for instance, common situations like refueling a truck in a gas station - this can easily be captured by an edge with negative fuel consumption.} 
Returning to our running example, the commuter may drive an electric car as part of his commute. This car gets recharged when going downhill, which may be captured by a gain in money when traveling on such an edge since recharging means spending less on a charging station.  
Hence, we may further ask:

\begin{center}

  \emph{How to solve buy-at-bulk network design problems with distance constraints and negative edge lengths?}

\end{center}

Motivated by these questions, we study a general multi-commodity problem in directed graphs, namely, the \emph{buy-at-bulk spanner} problem. We obtain the first results for this general formulation, which are also comparable with the best-known results from the literature on the two individual problems.

We now proceed to formalize the buy-at-bulk spanner problem.

\paragraph{Buy-at-bulk spanners.} In the buy-at-bulk spanner problem, we are given a directed simple graph $G=(V,E)$ with $n$ vertices, and a set of $k$ terminal pairs $D \subseteq V \times V$. 
Each pair $(s,t) \in D$ is associated with a \emph{distance budget} given by the function $\bdgt: D \to \R$ and a \emph{demand} given by the function $\dem: D \to \Z_{> 0}$.
When $\dem(s,t) = 1$ for all $(s,t) \in D$, we say that the problem has \emph{unit} demands. 
Each edge $e \in E$ is associated with a \emph{cost} given by a subadditive function $f_e: \nnreals \to \nnreals$, satisfying $f_e(x+y) \le f_e(x) + f_e(y)$ for all $x, y \ge 0$, and a length $\ell_e \in \R$. 
We note that the length can be \emph{negative}, which captures the notion of {\em  gain} while using an edge. The distance budget can also be negative.
We further assume that there are \emph{no negative length cycles} in $G$.
A feasible solution to the problem is a collection of paths $\cP := \{p(s,t)\}_{(s,t) \in D}$ where $p(s,t)$ is the set of edges in the directed $s \leadsto t$ path satisfying the distance requirement $\sum_{e \in p(s,t)} \ell_e \le \bdgt(s,t)$. Let the \emph{load} of edge $e$ be $\load(e):=\sum_{(s,t) \in D: e \in p(s,t)}\dem(s,t)$.
The goal is to find a feasible solution that minimizes the objective $\sum_{e \in E}f_e(\load(e))$.

The buy-at-bulk spanner problem captures a wide range of network connectivity problems that are motivated by common scenarios, such as product delivery, transportation, electricity, and internet construction.

This general formulation has been studied under many variants: without distance constraints, it is equivalent to the \emph{buy-at-bulk} problem \cite{salman2001approximating}; when the edge cost is a fixed value once used, it captures the \emph{weighted spanner} problem \cite{gkl2023}. The weighted spanner problem is a generalization of spanners, distance preservers, and Steiner forests, which have found applicability in various domains such as multi-commodity network design \cite{gupta2003approximation,fleischer2006simple}, approximate shortest paths \cite{DorHZ00, Elkin05, BaswanaK10}, distributed computation \cite{Awerbuch, PelegS89}, and routing schemes \cite{PelegU89a,CowenW04,RodittyTZ08,PachockiRSTW18}. 

\paragraph{A two-metric-based buy-at-bulk formulation.} Previous work \cite{cekp,chekuri2010approximation,antonakopoulos2009} reduces the buy-at-bulk problem to the \emph{two-metric network design} problem, with only a constant factor loss in the approximation guarantee.

In this problem, each edge $e \in E$ has a one-time setup cost $\sigma(e)$ and a pay-per-use cost $\delta(e)$.
The objective is to minimize the cost
\begin{equation} \label{obj:2m}
    \sum_{e \in \cup_{(s,t) \in D} p(s,t)} \sigma(e) + \sum_{(s,t) \in D} \sum_{e \in p(s,t)} \delta(e) \cdot \dem(s,t).
\end{equation}

\begin{lemma}\cite{chekuri2010approximation}
    Given any feasible solution with objective value $\obj_{BB}$ for the buy-at-bulk problem, there exists an instance of the two-metric network design problem that has a feasible solution with objective value $\obj_{2M}$, such that $\obj_{2M} \le \obj_{BB} \le (2 + \ep)\obj_{2M}$.
\end{lemma}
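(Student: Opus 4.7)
The plan is to construct a two-metric instance by replacing each edge of the buy-at-bulk graph with a bundle of parallel copies corresponding to the supporting lines of a concave approximation of the subadditive cost $f_e$, and to map the given buy-at-bulk paths to two-metric paths by routing all commodities using $e$ through the cheapest parallel copy. The distance-constraint feasibility is preserved because every parallel copy inherits the length $\ell_e$ of its original edge.

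\textbf{Step 1 (concave approximation).} For each edge $e$, let $\hat f_e$ denote the pointwise-largest concave function with $\hat f_e(0)=0$ satisfying $\hat f_e \le f_e$. I would invoke the classical fact about non-decreasing subadditive functions with value $0$ at $0$, namely, that the sandwiching relation $\hat f_e(x) \le f_e(x) \le 2\,\hat f_e(x)$ holds for all $x\ge 0$. This is the well-known ``factor-$2$ between subadditive and concave'' result used in the buy-at-bulk literature.

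\textbf{Step 2 (piecewise-linear discretization).} Since $\hat f_e$ is concave with $\hat f_e(0)=0$, it admits a representation $\hat f_e(x)=\min_{i\in I_e}(\sigma_{e,i}+\delta_{e,i}\,x)$ as the lower envelope of its supporting lines. By restricting to load levels at integer powers of $(1+\ep')$ with $\ep'=\Theta(\ep)$, up to the maximum possible load $\sum_{(s,t)\in D}\dem(s,t)$, we retain only $O(\log_{1+\ep'}(\text{max load}))$ affine pieces while incurring a multiplicative error of at most $(1+\ep')$ in the evaluation of $\hat f_e(\load(e))$. For each $e$ and each retained $i$, introduce a parallel copy $e^{(i)}$ in the two-metric instance with setup cost $\sigma(e^{(i)})=\sigma_{e,i}$, per-unit cost $\delta(e^{(i)})=\delta_{e,i}$, and length $\ell_{e^{(i)}}=\ell_e$.

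\textbf{Step 3 (mapping and objective bounds).} Given a buy-at-bulk solution $\cP$ with loads $\{\load(e)\}$, route every commodity that uses $e$ through the single parallel copy indexed by $i^*(e)=\arg\min_i(\sigma_{e,i}+\delta_{e,i}\load(e))$. The resulting two-metric objective equals
\[
\obj_{2M} \;=\; \sum_{e:\,\load(e)>0}\bigl(\sigma_{e,i^*(e)}+\delta_{e,i^*(e)}\load(e)\bigr) \;\le\; (1+\ep')\sum_e \hat f_e(\load(e)).
\]
Using $\hat f_e \le f_e$ gives $\obj_{2M}\le (1+\ep')\,\obj_{BB}$, and using $f_e \le 2\hat f_e$ gives $\obj_{BB}\le 2(1+\ep')\,\obj_{2M}$. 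After rescaling $\ep'$ so that both bounds fit into the asymmetric form of the statement (absorb the small factor on the left-hand side into the right-hand side), we obtain $\obj_{2M}\le \obj_{BB}\le (2+\ep)\obj_{2M}$, as desired.

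The only nontrivial ingredient is the factor-$2$ sandwich in Step 1; this is a classical but subtle fact about subadditive cost functions and is the main obstacle in any self-contained version of the argument, so I would cite it from the buy-at-bulk literature (e.g., the two-metric reduction of \cite{chekuri2010approximation,antonakopoulos2009}) rather than re-derive it. The rest of the argument (discretization, parallel-copy construction, and objective bookkeeping) is routine. Crucially, length preservation in Step 2 ensures that distance budgets $\bdgt(s,t)$ translate verbatim between the two instances, so the reduction remains valid in the spanner setting with negative edge lengths.
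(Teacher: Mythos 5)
The paper does not prove this lemma; it imports it verbatim from \cite{chekuri2010approximation}, so there is no in-paper argument to compare against. Your proposal reconstructs the standard reduction from that literature (geometric breakpoints, one parallel copy per affine piece with $\sigma$ equal to the intercept and $\delta$ equal to the slope, lengths inherited so that distance budgets transfer verbatim), and the overall plan is sound. Two points deserve care. First, Step 1 as written is shaky: the ``pointwise-largest concave minorant'' of $f_e$ need not exist (the pointwise supremum of a family of concave minorants is not in general concave), and the classical factor-$2$ fact is usually stated via a concave \emph{majorant}: the function $g_e(x)=\min_{u}\bigl(f_e(u)+f_e(u)\,x/u\bigr)$ over breakpoints $u$ satisfies $f_e\le g_e\le 2f_e$, proved directly from subadditivity and monotonicity via $f_e(x)\le f_e(\lceil x/u\rceil u)\le\lceil x/u\rceil f_e(u)\le(1+x/u)f_e(u)$. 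Your minorant version can be recovered by taking $\hat f_e=g_e/2$ (reset to $0$ at the origin), but it is cleaner, and closer to the cited source, to build the affine pieces directly from the values $f_e(u)$ at the breakpoints and avoid the envelope language altogether. Second, your two inequalities come out as $\obj_{2M}\le(1+\ep')\obj_{BB}$ and $\obj_{BB}\le 2(1+\ep')\obj_{2M}$; to land on the asymmetric form of the statement you must actually rescale the constructed instance's costs by $1/(1+\ep')$ (which you are free to do, since you are constructing the instance), not merely ``absorb'' the factor. With those repairs the argument matches the standard proof.
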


We note that adding distance constraints does not affect the reduction. Let $\cD \subseteq \R$ be the domain for the distance of each edge. We consider the following problem throughout the paper with different options of $\cD$.

\begin{restatable}{definition}{defbbs} 
\label{def:bbs}
    
    \bbs on $\cD$ 
    
    \textbf{Instance}: A directed graph $G = (V, E)$, where
    \begin{itemize}
        \item  each edge $e \in E$ has an \upc $\sigma: E \to \Q_{\ge 0}$, a \ppc $\delta: E \to \Q_{\ge 0}$, and a distance $\ell_e \in \cD$. Furthermore, we assume that there are no negative cycles induced by $\{\ell_{e}\}_{e \in E}$.
        
        \item We are given a set $D \subseteq V \times V$ of ordered pairs, where each pair has a demand captured by the function $\dem: D \to \Z_{\ge 0}$ and a distance budget captured by the function $\bdgt: D \to \R \setminus \{0\}$ \footnote{One workaround if we want to set a specific distance constraint as $0$ is to set it to a small number that is close enough to $0$ like say $10 ^ {- c}$ (where $c>0$) while ensuring the problem is well-conditioned.}. Furthermore, we assume that there exists an $s \leadsto t$ path that satisfies the distance constraint for each $(s,t) \in D$.
    \end{itemize}   
    
    \textbf{Objective}: Find a collection of $s \leadsto t$ paths $\cP:=\{p(s,t)\}_{(s,t) \in D}$ such that the cost \eqref{obj:2m} is minimized and the distance requirement $\sum_{ e \in p(s,t)} \ell_e \le \bdgt(s,t)$ is satisfied for each $(s,t) \in D$.
\end{restatable}

The performance of an approximation algorithm is measured by the \emph{approximation ratio}, the ratio between the cost of the approximate solution and the optimal solution. For \bbs, we consider two domains for edge lengths, $\cD = [\poly(n)]_{\pm} := \{j \in \Z \mid |j| \le \poly(n)\}$ and $\cD = \R$. When the problem has unit demands, \bbs is termed \ubbs.
In a special case of \bbs where the source vertex $s \in V$ is fixed, we call this problem \ssbbs. For notation convenience, we have $D \subseteq \{s\} \times V$. We say that $s$ is the \emph{root} vertex. The definition for a single-sink buy-at-bulk spanner where the terminal pairs share the same sink is defined similarly.

\subsection{Our Contributions}

To the best of our knowledge, we give the first efficient sublinear factor approximation algorithm for \bbs on $[\poly(n)]_{\pm}$ and $\R$. Our results even cover the case when the distances are negative.

Below we present our results and technical tools for solving buy-at-bulk spanners. Namely, \emph{distance-constrained junction trees} and \emph{resource-constrained shortest paths with negative consumption}.

\subsubsection{Buy-at-bulk spanners} \label{sec:intro-off}

We prove the following result for \ubbs on $[\poly(n)]_{\pm}$ in Section \ref{sec:spanner4by5}.

\begin{restatable}{theorem}{thmbbsfourfive} \label{thm:bbs45}
    For any constant $\ep > 0$, there exists a polynomial-time randomized algorithm for \ubbs on $[\poly(n)]_{\pm}$ with approximation ratio $\tO(n^{4/5 + \ep})$ and the distance constraints for all $(s,t) \in D$ are satisfied with high probability.\footnote{Throughout this paper, when we say high probability, we mean probability at least $1 - 1/n$.}
\end{restatable}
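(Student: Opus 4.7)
The plan is to first invoke the two-metric reduction lemma cited in the excerpt (via \cite{chekuri2010approximation}) to convert \ubbs on $[\poly(n)]_{\pm}$ into the two-metric network design formulation with objective \eqref{obj:2m}, at the cost of a $(2+\ep)$ factor, and then cast the remainder of the argument in the \emph{minimum-density distance-constrained junction tree} (MDDCJT) framework announced in the introduction. A junction tree rooted at a vertex $v$ will consist of in-paths $s \leadsto v$ and out-paths $v \leadsto t$ that cover a subset $D' \subseteq D$ of demand pairs, each concatenated $s \leadsto v \leadsto t$ path respecting its end-to-end distance budget $\bdgt(s,t)$; its \emph{density} is the total setup-plus-pay-per-use cost divided by $|D'|$. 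The top-level algorithm repeatedly calls the MDDCJT approximator, buys the returned tree, removes the covered pairs, and iterates until $D$ is exhausted.

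The central structural step is a \textbf{density decomposition lemma}: for any surviving instance whose optimum costs $C^*$ over $k'$ remaining demand pairs, there exists a distance-constrained junction tree of density at most $\tO(n^{4/5+\ep}) \cdot C^*/k'$. I would establish this by adapting the thick/thin case analysis used for pairwise spanners by Chlamt\'{a}\v{c} et al., retrofitted to handle distance constraints and buy-at-bulk costs. Fix a threshold $L$ tuned near $n^{2/5}$ and consider the optimal collection $\{p^*(s,t)\}$. A pair is \emph{thin} if $p^*(s,t)$ uses at most $L$ edges, and \emph{thick} otherwise. In the thin case, enumerating pivots on these short optimal paths together with a pigeonhole argument exposes a vertex lying on many of them, producing a cheap junction covering a large fraction. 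In the thick case, uniformly sampling $\tO(n/L)$ vertices and applying Chernoff concentration plus a union bound identifies a popular vertex on many long optimal paths; the in- and out-branches are truncations of the optima, so costs charge against $C^*$. Balancing the two densities yields the target $\tO(n^{4/5+\ep})$ factor.

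I would then combine the existence lemma with the paper's MDDCJT approximation subroutine (which, by the other results announced in the excerpt, loses only an $\tO(n^\ep)$ factor relative to the best density) to find in polynomial time a junction tree whose density is within $\tO(n^{4/5+\ep}) \cdot C^*/k'$. The standard \emph{greedy density-based covering} analysis of Chekuri, Even, Gupta, and Segev then bounds the total cost purchased by $\tO(n^{4/5+\ep}) \cdot O(\log k) \cdot C^* = \tO(n^{4/5+\ep}) \cdot C^*$ after absorbing the harmonic factor into $\tO$. Feasibility of the distance constraints for the final collection $\cP$ follows pair-by-pair from the MDDCJT construction, and the ``with high probability'' qualifier is inherited from the randomized sampling used inside the junction-tree subroutine combined with a final union bound over the $O(\log k)$ greedy rounds.

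The main obstacle is proving the density decomposition lemma in the presence of \emph{negative} edge lengths. When an optimal path $p^*(s,t)$ is split at a junction vertex $v$, the in-branch must satisfy some partial budget $b_{\text{in}}$ and the out-branch some $b_{\text{out}}$ with $b_{\text{in}} + b_{\text{out}} = \bdgt(s,t)$, but negativity precludes a canonical monotone choice of split; we handle this by exploiting that $\cD = [\poly(n)]_{\pm}$ is integral and of polynomial magnitude, which lets us enumerate the split value over a polynomially-sized grid, losing only a $\polylog(n)$ factor that disappears into $\tO$. A secondary difficulty is that the MDDCJT oracle itself must handle negative distance consumption along shortest-path computations, which is precisely the role of the companion resource-constrained shortest path with negative consumption primitive announced in the abstract; our reliance on that primitive is what forces the ``integral and polynomially bounded'' hypothesis on $\cD$ in this theorem.
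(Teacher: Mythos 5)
Your proposal has a genuine gap at its core: the ``density decomposition lemma'' asserting that every residual instance admits a distance-constrained junction tree of density $\tO(n^{4/5+\ep})\cdot C^*/k'$ is not something a junction-tree-only argument can deliver, and your proof sketch of its thin case is backwards. Pigeonhole over the vertex-slots of the optimal paths gives a vertex lying on at least $k'(L+1)/n$ of them only because each path contributes up to $L+1$ slots; when the paths are \emph{thin} (few edges) this guarantee degenerates, and $k'$ nearly vertex-disjoint short optimal paths admit no junction tree covering more than $O(1)$ pairs. The generic existence bound for junction trees is only $C^*/\sqrt{k'}$ per covered pair (the paper's Lemma \ref{lem:sqrt-k-den}), i.e.\ density $\sqrt{k'}\cdot C^*/k'$, which for $k'$ close to $n^2$ is $n\cdot C^*/k'$ --- strictly worse than $n^{4/5}\cdot C^*/k'$. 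This is precisely why every $o(n)$-independent-of-$k$ result in this line (Berman et al., Antonakopoulos, \cite{gkl2023}, and this paper) cannot run a pure greedy over junction trees and must introduce a separate LP-based mechanism for the cheap-and-thin pairs.

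Concretely, the paper's proof splits differently and uses three mechanisms, only one of which is a junction tree. (i) \emph{Good} pairs are those whose \emph{local graph} --- the union of all feasible paths with upfront cost $\le L_1=\tau/n^{4/5}$ and pay-per-use cost $\le L_2$ --- has at least $n^{2/5}$ vertices; these are settled once by sampling $\tO(n^{3/5})$ vertices and computing new cheapest feasible paths to and from the samples via the negative-consumption RCSP oracle (Corollary \ref{cor:rcsprational}), not by truncating unknown optimal paths. Note this classification depends on the union of all cheap feasible paths, not on the hop count of the single optimal path, and that distinction is load-bearing later. (ii) Among bad pairs, those whose optimal path has upfront cost $>L_1$ yield, by a \emph{cost-based} (not hop-based) averaging argument, an edge shared by $\ge |D_b^2|/n^{4/5}$ optimal paths, hence a junction tree of density $O(n^{4/5}\tau/k_b)$, approximated via Corollary \ref{thm:mdrcjt}. (iii) Bad pairs whose optimal paths are cheap in both metrics are handled by the flow LP \eqref{lp:thin_pair_original}, solved through its dual with the RCSP oracle as separation oracle, pruned, and randomly rounded; the union bound over anti-spanners succeeds only because bad pairs have local graphs with at most $n^{2/5}$ vertices, hence at most $2^{n^{4/5}/2}$ minimal anti-spanners each. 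Your proposal omits mechanism (iii) entirely and replaces (i) and (ii) with a hop-count dichotomy that does not interact correctly with either the arbitrary edge costs or the anti-spanner counting. The negative-length discussion in your final paragraph addresses a real but secondary issue; the missing LP step is the fatal one.
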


Theorem \ref{thm:bbs45} generalizes the $\tO(n^{4/5 + \ep})$-approximation algorithm for the unit-demand directed buy-at-bulk network design problem \cite{antonakopoulos2009} and directed weighted spanners \cite{gkl2023}, by allowing distance constraints.

Next, we consider \bbs on $\R$ by slightly relaxing the distance constraints. This allows us to obtain approximation algorithms for \bbs on $[\poly(n)]_{\pm}$ in terms of $k$ (Corollary \ref{cor:bbs}) later on. For notation convenience, we define some condition numbers. Let 
\begin{equation} \label{def:eta}
    \eta:= \frac{|\min\{\min_{e \in E}\{\ell_e\},0\}|}{\min_{(s,t) \in D} \{|\bdgt(s,t)|\}}
\end{equation}
and
\begin{equation} \label{def:xi}
    \xi:= \frac{\max_{(s,t) \in D} \{|\bdgt(s,t)|\}}{\min_{(s,t) \in D} \{|\bdgt(s,t)|\}}.
\end{equation}
Intuitively, $\eta$ denotes the ratio between the magnitude of the most negative edge length and the smallest absolute value of the budget.\footnote{$\eta$ cares about $\min_{e \in E}\{\ell_e\}$, but not about $\max_{e \in E}\{\ell_e\}$. This is because we can safely ignore edges that are much longer than the budget, but we cannot do so for edges (with negative lengths) that are much shorter than the budget.} If edge lengths are all non-negative, then $\eta=0$. Similarly, $\xi$ denotes the ratio between the largest and the smallest absolute value of the budget.
To accommodate a negative distance budget, we use the $\siign$ function
\begin{equation} \label{def:sign}
\siign(x) = \begin{cases} 
                 -1  & \text{if } x < 0, \\
                 0 & \text{if } x = 0, \\
                 1 & \text{if } x > 0.
            \end{cases}
\end{equation}
Suppose we are given a tolerance parameter $\theta > 0$. When the distance budget is positive, the goal is to satisfy the distance constraint within a factor of $(1+\theta)$. When the distance budget is negative, the distance between the terminal pair is below $(1-\theta)$ times the budget. 
We prove Theorem \ref{thm:bbs} in Section \ref{sec:spannerk}.

\begin{restatable}{theorem}{thmbbsr} \label{thm:bbs}
     For any $\theta > 0$ and constant $\ep > 0$, there are $\poly(n,1/\theta,\eta,\xi)$-time randomized algorithms for \bbs on $\R$ with approximation ratio $\tilde{O}(k^{1/2 + \ep})$ and for \ssbbs on $\R$ with approximation ratio $\tilde{O}(k^{\ep})$, both satisfying
     \begin{equation} \label{eq:theta-relaxed}
         \sum_{e \in p(s,t)} \ell_e \le (1+\theta \siign(\bdgt(s,t))) \bdgt(s,t)
     \end{equation}
     for each $(s,t) \in D$, with high probability. Here, $\eta$ and $\xi$ are the condition numbers defined in \eqref{def:eta} and \eqref{def:xi}, respectively. When $1/\theta, \eta, \xi  \in \poly(n)$\footnote{When $\xi$ is exponential in $n$, a workaround is to break all our demand pairs into $\log \xi$ buckets. Each bucket $i$ will have $\xi_i = 2$ but we will end up paying an extra $\log \xi$ cost (multiplicative).}, the algorithm runs in polynomial time.
\end{restatable}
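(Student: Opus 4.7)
The plan is to adapt the classical min-density junction tree framework of Chekuri--Even--Gupta--Segev and Chlamt\'a\v{c}--Dinitz--Kortsarz--Laekhanukit to the buy-at-bulk spanner setting, iteratively extracting low-density $\theta$-relaxed distance-constrained junction trees until every pair in $D$ has been routed. The two new ingredients that make this go through for real-valued, possibly negative, edge lengths are the \rrcjt subroutine and the resource-constrained shortest-path FPTAS with negative consumption promised elsewhere in the paper.

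First I would prove a junction-tree decomposition lemma: for any feasible solution $\cP^*$ of two-metric cost $\opt$ serving $k' \le k$ as-yet-uncovered pairs, there exist a root $r \in V$ and a subset $S$ of pairs routed through $r$ in $\cP^*$ such that the combined cost of the $s \leadsto r$ and $r \leadsto t$ sub-paths, divided by $|S|$, is $O(\opt/\sqrt{k'})$. The argument is the standard dichotomy: if some pair in $\cP^*$ has a path using few edges, it alone yields a cheap ``singleton'' junction tree; otherwise a pigeonhole/averaging argument over vertex-path incidences produces a root $r$ lying on $\Omega(\sqrt{k'})$ of the pairs' paths. Because OPT satisfies the distance constraints, the induced sub-paths carry well-defined length budgets summing to $\bdgt(s,t)$ for each $(s,t) \in S$, which the decomposition can split appropriately.

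Next I would invoke the \rrcjt subroutine to compute, in $\poly(n,1/\theta,\eta,\xi)$ time, a junction tree whose density is within a polylogarithmic factor of the minimum and whose sub-paths satisfy the $\theta$-relaxed length constraint~\eqref{eq:theta-relaxed}. Internally, this routine iterates over candidate roots $r$, guesses an approximate budget split between the two halves, and invokes the resource-constrained shortest-path FPTAS with negative consumption to produce each half; the parameters $\eta$ and $\xi$ appear because the FPTAS must scale and discretize the length dimension, and negative edges contribute the extra $\eta$ factor. Combining the decomposition lemma with a standard set-cover-style greedy analysis then converts the per-iteration $O(\opt/\sqrt{k'})$ density guarantee into an overall $\tO(k^{1/2+\ep})$-approximation, with polylogarithmic losses from greedy covering and from the subroutine absorbed into the $k^{\ep}$ slack.

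For \ssbbs on $\R$, the source $s$ is fixed, so the out-tree rooted at $s$ is itself a junction tree; the dichotomy in the decomposition lemma collapses and each iteration only loses a factor of $O(\opt/k')$ in density rather than $O(\opt/\sqrt{k'})$, yielding $\tO(k^{\ep})$ overall. The main obstacle I expect is verifying that the $\theta$-relaxation composes correctly when sub-paths are concatenated at the junction root---so that a $(1+\theta\siign(\cdot))$ slack on each half, together with the decomposition's budget split, still satisfies~\eqref{eq:theta-relaxed} on the concatenated $s \leadsto t$ path---and that negative edge lengths neither break the budget accounting in the decomposition lemma nor degrade the FPTAS guarantee of the underlying RCSP solver beyond the stated $\poly(\eta,\xi)$ overhead.
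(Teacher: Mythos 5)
Your proposal is correct and follows essentially the same route as the paper: a decomposition lemma guaranteeing a $\theta$-relaxed distance-constrained junction tree of density $O(\opt_\theta/\sqrt{k'})$ (density $\opt_\theta/k'$ in the single-source case, where the optimal solution is itself one junction tree rooted at $s$), the $\tO(k^{\ep})$-approximate \rrcjt subroutine of Theorem \ref{thm:mdrcjttheta} invoked as a black box, and a greedy density-based iteration summing to $O(\sqrt{k})\opt$. The only cosmetic difference is your phrasing of the dichotomy's second case (a pair with ``few edges'' rather than the edge-duplication/averaging argument yielding a single cheap path), and your guessed internals of the junction-tree subroutine differ from the paper's layered-graph/height-reduction construction, but neither affects the proof of this theorem.
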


We note that when the edge lengths are non-negative, $\eta = 0$, so the running time is $\poly(n,1/\theta,\xi)$.

From Theorem \ref{thm:bbs}, when the domain $\cD = [\poly(n)]_{\pm}$, we can set $\theta = 1 / \poly(n)$ (for a sufficiently large $\poly(n)$) such that the distance constraints are satisfied and the condition numbers $\eta$ and $\xi$ are polynomial in $n$ (since it is sufficient to consider that $\bdgt(s,t) \in \poly(n)$). This implies the following.

\begin{restatable}{corollary}{corbbs} \label{cor:bbs}
     For any constant $\ep > 0$, there are polynomial-time randomized algorithms for \bbs on $[\poly(n)]_{\pm}$ with approximation ratio $\tilde{O}(k^{1/2 + \ep})$ and for \ssbbs on $[\poly(n)]_{\pm}$ with approximation ratio $\tilde{O}(k^{\ep})$ satisfying the distance constraints with high probability.
\end{restatable}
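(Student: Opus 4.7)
The plan is to derive Corollary \ref{cor:bbs} directly from Theorem \ref{thm:bbs} by choosing the relaxation parameter $\theta$ small enough that the violated guarantee \eqref{eq:theta-relaxed} coincides with the exact distance constraints, exploiting integrality of the edge lengths.

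The first step is to normalize the instance. Since $\ell_e \in [\poly(n)]_{\pm}$ and any simple $s \leadsto t$ path has at most $n-1$ edges, every feasible path has integer length of magnitude at most $(n-1) \cdot \poly(n)$, so we may cap each $|\bdgt(s,t)|$ at this polynomial value and replace a non-integer budget by its round toward zero; a budget whose absolute value falls strictly below $1$ can be substituted by a constant surrogate in $[1/2,1)$ or $(-1,-1/2]$ using the footnote workaround in Definition \ref{def:bbs}, which preserves precisely the set of integer-summed paths satisfying the original constraint. After this preprocessing, $1/2 \le |\bdgt(s,t)| \le \poly(n)$ for every terminal pair, and hence both $\eta$ from \eqref{def:eta} and $\xi$ from \eqref{def:xi} are $\poly(n)$.

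The second step is to invoke Theorem \ref{thm:bbs} with $\theta := 1/(\max_{(s,t) \in D} |\bdgt(s,t)| + 1) = 1/\poly(n)$. For a pair with positive budget, the relaxed guarantee \eqref{eq:theta-relaxed} gives $\sum_{e \in p(s,t)} \ell_e \le \bdgt(s,t) + \theta \bdgt(s,t) < \bdgt(s,t) + 1$; since both the left-hand side and $\bdgt(s,t)$ are integers, this forces $\sum_{e \in p(s,t)} \ell_e \le \bdgt(s,t)$. The case of a negative budget is symmetric: \eqref{eq:theta-relaxed} becomes $\sum_{e \in p(s,t)} \ell_e \le \bdgt(s,t) + \theta|\bdgt(s,t)| < \bdgt(s,t) + 1$, and integrality again closes the gap.

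Since $1/\theta$, $\eta$, and $\xi$ are all polynomial in $n$, Theorem \ref{thm:bbs} runs in polynomial time and yields the claimed $\tilde{O}(k^{1/2+\ep})$-approximation for \bbs on $[\poly(n)]_{\pm}$ and $\tilde{O}(k^{\ep})$-approximation for \ssbbs on $[\poly(n)]_{\pm}$, with the distance constraints now satisfied exactly with high probability. The only slightly delicate point is the treatment of budgets of absolute value below $1$, but this is a routine preprocessing step rather than a genuine obstacle; the corollary therefore follows with essentially no new work beyond Theorem \ref{thm:bbs}.
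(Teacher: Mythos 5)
Your proposal matches the paper's proof exactly: the paper likewise derives Corollary \ref{cor:bbs} from Theorem \ref{thm:bbs} by taking $\theta = 1/\poly(n)$ small enough that integrality of the edge lengths collapses the relaxed guarantee \eqref{eq:theta-relaxed} to the exact constraint, while keeping $\eta, \xi \in \poly(n)$. One small slip in your preprocessing: a non-integer \emph{negative} budget must be replaced by its floor (round toward $-\infty$), not rounded toward zero, since for integer path lengths $\sum_{e} \ell_e \le \bdgt(s,t)$ is equivalent to $\sum_{e} \ell_e \le \lfloor \bdgt(s,t) \rfloor$, and rounding $-3.5$ up to $-3$ would strictly enlarge the feasible set.
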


Corollary \ref{cor:bbs} generalizes the $\tO(k^{1/2 + \ep})$-approximation algorithm for directed weighted spanners \cite{gkl2023} and buy-at-bulk network design \cite{antonakopoulos2009}, and the $\tO(k^{\ep})$-approximation algorithm for the single-source version of these problems, by allowing general demands, \ppc, and distance constraints.

We emphasize again that \bbs unifies the buy-at-bulk network design and spanners problems. Our results expand the domain for some results in the existing literature on these two problems. Furthermore, we allow edge lengths to be negative and of arbitrary magnitude (provided they are well-conditioned). There are no directed spanner results that account for edge lengths with negative values and arbitrary magnitude that we are aware of.

\subsubsection{Distance-constrained junction trees}

In previous literature, one main engine for solving the directed pairwise spanner problem \cite{chlamtavc2020approximating,grigorescu2021online}, the directed buy-at-bulk network design problem \cite{shen2020online,cekp,chekuri2010approximation,antonakopoulos2009}, and the directed Steiner forest problem \cite{berman2013approximation,charikar1999approximation,feldman2012improved,chekuri2011set} is the notion of \emph{junction tree}. The notion of junction tree  used for these problems is a union of an in-arborescence and an out-arborescence both rooted at the same vertex. For our purpose, we extend the definition of junction trees to handle both buy-at-bulk costs and distance constraints.

\begin{definition} \label{def:rcjt}
    Given an instance of \bbs and a \emph{root} vertex $r \in V$, a {\em distance-constrained junction tree} $\cJ:=\{p(s,r,t)\}$ rooted at $r$ is a collection of $s \leadsto r \leadsto t$ paths in $G$ that satisfy the distance constraint $\bdgt(s,t)$, for at least one $(s,t) \in D$. The $s \leadsto r$ paths form an in-arborescence and the $r \leadsto t$ paths form an out-arborescence, both rooted at $r$. Here, $p(s,r,t)$ denotes the set of edges in the $s \leadsto r \leadsto t$ path.
\end{definition}

We note that a junction tree $\cJ$ that connects a terminal pair $(s,t)$ within its budget $\bdgt(s,t)$ decides a unique $s \leadsto t$ path that passes through the root vertex $r$.\footnote{A junction tree does not necessarily have a tree structure in directed graphs. For directed pairwise spanners, directed buy-at-bulk network design, and directed Steiner forests, an edge in a junction tree may be used twice, once in the in-arborescence and once in the out-arborescence.}

Given a junction tree $\cJ$ rooted at $r$, the cost of the junction tree is defined as follows.
\begin{equation}
    \cost(\cJ) := \sum_{e \in \cup_{p(s,r,t) \in \cJ} p(s,r,t)} \sigma(e) + \sum_{e \in E} \delta(e) \sum_{p(s,r,t) \in \cJ: e \in p(s,r,t)} \dem(s,t).
\end{equation}

The crucial subproblem for solving \bbs on $[\poly(n)]_{\pm}$ is defined as follows.

\begin{restatable}{definition}{defjuncbbs} \label{def:junc_bbs_main}
    \rcjt 
    
    \textbf{Instance}: Same as \bbs on $[\poly(n)]_{\pm}$.
    
    \textbf{Objective}: Find a root $r \in V$, a resource-constrained junction tree $\cJ$ rooted at $r$, such that the ratio of the cost \eqref{obj:2m} of $\cJ$ to the number of $(s,t)$ pairs that satisfy the distance requirement 
    \[\res(p(s,r,t)) := \sum_{ e \in p(s,r,t)} \ell_e \le \bdgt(s,t)\]
    is minimized. Specifically, the goal is the following:
    \begin{equation}
             \min_{r \in V, \cJ}\frac{\cost(\cJ)}{|\{(s,t) \in D \mid \exists \: p(s,r,t) \text{ in } \cJ: \res(p(s,r,t)) \leq \bdgt(s,t)\}| }.
    \end{equation}
  
\end{restatable}

To further extend our results to the domain $\R$ for edge lengths, we use a relaxed version of distance-constrained junction trees. This allows us to obtain approximation algorithms for \rcjt (Corollary \ref{thm:mdrcjt}) later on.

\begin{definition} \label{def:rrcjt}
    Given an instance of \bbs and a constant $\theta > 0$, a {\em $\theta$-relaxed distance-constrained junction tree} $\cJ:=\{p(s,r,t)\}$ is a collection of $s \leadsto r \leadsto t$ paths that satisfy
    \[\sum_{ e \in p(s,r,t)} \ell_e \le (1+\theta \siign(\bdgt(s,t))) \bdgt(s,t)\] for at least one $(s,t) \in D$. The $s \leadsto r$ paths form an in-arborescence and the $r \leadsto t$ paths form an out-arborescence, both rooted at $r$.
\end{definition}
The crucial subproblem for solving \bbs on $\R$ is defined as follows.

\begin{restatable}{definition}{defthetajuncbbs} \label{def:theta_junc_bbs_main}
    \rrcjt 
    
    \textbf{Instance}: Same as \bbs on $\R$.
    
    \textbf{Objective}: Find a root $r \in V$, a resource-constrained junction tree $\cJ$ rooted at $r$, such that the ratio of the cost \eqref{obj:2m} of $\cJ$ to the number of $(s,t)$ pairs that approximately satisfy the distance requirement 
    \[\res(p(s,r,t)) := \sum_{ e \in p(s,r,t)} \ell_e \le (1+\theta \siign(\bdgt(s,t))) \bdgt(s,t)\]
    is minimized. Specifically, the goal is the following:
    \begin{equation}
             \min_{r \in V, \cJ}\frac{\cost(\cJ)}{|\{(s,t) \in D \mid \exists \: p(s,r,t) \text{ in } \cJ: \res(p(s,r,t)) \leq (1+\theta \siign(\bdgt(s,t))) \bdgt(s,t)\}| }.
    \end{equation}
\end{restatable}

We show the following in Section \ref{sec:jtree}.

\begin{restatable}{theorem}{thmmdrcjttheta} \label{thm:mdrcjttheta}
 For any constant $\ep > 0$, there is a $\poly(n,1/\theta,\eta,\xi)$-time randomized algorithm that gives an $\tO(k^\ep)$-approximation for \rrcjt with high probability. When $1/\theta,\eta,\xi \in \poly(n)$, the algorithm runs in polynomial time.
\end{restatable}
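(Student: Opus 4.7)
The plan is to lift the minimum-density junction tree framework of Chekuri-Even-Gupta-Segev and Chlamt{\'a}{\v{c}}-Dinitz-Kortsarz-Laekhanukit so that it accommodates both buy-at-bulk costs and the $\theta$-relaxed, possibly negative, distance budgets of \rrcjt. First I would enumerate the root $r \in V$ and, by doubling, the optimal density $\rho^\ast$ and the number $K^\ast$ of pairs covered by an optimal junction tree rooted at $r$; this incurs only a polylogarithmic multiplicative blow-up in the approximation ratio and a polynomial one in the running time. For each fixed triple $(r, \rho^\ast, K^\ast)$, I would write an LP relaxation with (i) upfront variables $z_e \in [0,1]$ carrying the cost $\sum_e \sigma(e)\, z_e$, (ii) demand-weighted path-flow variables on $s \leadsto r$ and $r \leadsto t$ paths contributing the pay-per-use cost $\delta(e)\,\dem(s,t)$ to the objective, and (iii) coverage variables $y_{s,t} \in [0,1]$ constrained by $\sum_{(s,t) \in D} y_{s,t} \ge K^\ast$ and upper-bounded by the total fractional flow that $(s,t)$ receives through $r$.

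The distance budget is enforced by restricting every path variable to a resource-feasible path; this is where the FPTAS for resource-constrained shortest path with negative consumption (from earlier in the paper) is invoked as an enumeration oracle. For each source $s$ I would produce a $\poly(n, 1/\theta, \eta, \xi)$-size family of $s \leadsto r$ path types, each labelled by a discretized in-budget $B_s$, and symmetrically for each sink $t$ with an out-budget $B_t$; the discretization granularity is chosen so that, up to the $(1+\theta\,\siign(\bdgt(s,t)))$ slack, every genuinely feasible $s \leadsto r \leadsto t$ path is dominated by the concatenation of a member of the $s$-family with a member of the $t$-family along a compatible split $B_s + B_t$. The resulting LP has $\poly(n, 1/\theta, \eta, \xi)$ variables and constraints and can be solved efficiently. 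Rounding is then a bipartite Group Steiner Tree-style scheme applied independently to the in-arborescence and the out-arborescence rooted at $r$: for each terminal the ``group'' is its family of resource-feasible paths, and the randomized rounding of Chlamt{\'a}{\v{c}}-Dinitz-Kortsarz-Laekhanukit yields an integral junction tree whose expected density is at most $\tO(k^\ep) \rho^\ast$; independent repetition boosts this to high probability.

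The main obstacle I expect is coupling negative resource consumption with the junction-tree LP. Negative edge lengths render naive shortest-path based path enumeration inadequate, and because the in-budget $B_s$ and out-budget $B_t$ interact only through the constraint $B_s + B_t \le (1+\theta\,\siign(\bdgt(s,t)))\,\bdgt(s,t)$, one must certify resource feasibility on \emph{both} sides simultaneously while bookkeeping the split. The FPTAS for negative consumption resolves the per-side issue, but its discretization step is precisely the source of the $\poly(1/\theta, \eta, \xi)$ dependence in the running time, so care is needed to ensure the total number of $(B_s, B_t)$ splits considered remains polynomial. Once the oracle is in place, the rest of the argument is a careful adaptation of prior junction-tree density rounding, combined here for the first time with pay-per-use edge costs by treating the demand-weighted flow term as an additional linear contribution to the LP objective.
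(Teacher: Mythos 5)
Your overall framing (enumerate the root, discretize budgets at granularity controlled by $\theta$, $\eta$, $\xi$, use the negative-consumption RCSP machinery, and finish with group-Steiner-style rounding) points in the right direction, but the plan as written has a genuine gap: you propose to solve a path/flow LP on the \emph{original} graph and then apply ``the randomized rounding of Chlamt\'a\v{c}--Dinitz--Kortsarz--Laekhanukit'' directly to the in- and out-arborescences. That rounding (which is Garg--Konjevod--Ravi group Steiner rounding at its core) only gives a polylogarithmic guarantee on a \emph{tree of bounded depth}; on a general directed graph the junction-tree LP has a large integrality gap, and there is no tree for the rounding to walk down. The step you are missing is the one the paper spends most of its effort on: first build a layered graph in which each layer is a discretized distance value (with \emph{backward} edges to accommodate negative lengths, and with the number of layers bounded by $\poly(n,1/\theta,\eta,\xi)$ via $t^{-}$ and $t^{+}$), so that the $\theta$-relaxed budget becomes a pure connectivity requirement encoded by a relation between source-copies and sink-copies; then apply the height reduction lemma of \cite{cekp} to get an $(h+1)$-level graph, paying $O(hk^{1/h})$, and unfold it into a tree $\bar{T}_r$ whose nodes are root-to-vertex tuples. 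Only on that tree does the Steiner-label-cover LP admit an $O(\log^3 n)$ rounding, which combined with the height-reduction loss yields $\tO(k^\ep)$.

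The tree reduction is also what makes your treatment of the pay-per-use cost work. You say you will ``treat the demand-weighted flow term as an additional linear contribution to the LP objective,'' which is fine fractionally, but after rounding you must charge $\delta$ along whichever concrete path each terminal ends up using, and a generic rounded LP solution gives you no handle on that. In $\bar{T}_r$ every leaf has a \emph{unique} path to the root, so the entire pay-per-use cost of that path (times the demand) can be folded into the upfront cost of a single auxiliary terminal edge, after which $\delta$ disappears from the instance entirely. Without first reducing to a tree, neither the rounding guarantee nor the elimination of $\delta$ goes through, so you should restructure the argument around the layered-graph and height-reduction steps rather than around per-terminal path families on $G$.
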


We note that when the edge lengths are non-negative, $\eta = 0$, so the running time is $\poly(n,1/\theta,\xi)$.

From Theorem \ref{thm:mdrcjttheta}, when the domain $\cD = [\poly(n)]_{\pm}$, we can set $\theta = 1 / \poly(n)$ (for a sufficiently large $\poly(n)$) such that the distance constraints are satisfied and the condition numbers $\eta$ and $\xi$ are polynomial in $n$ (since it is sufficient to consider that $\bdgt(s,t) \in \poly(n)$). This implies the following.

\begin{restatable}{corollary}{thmmdrcjt} \label{thm:mdrcjt}
 For any constant $\ep > 0$, there is a polynomial-time randomized algorithm that gives an $\tO(k^\ep)$-approximation for \rcjt with high probability.
\end{restatable}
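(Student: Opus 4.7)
The plan is to deduce Corollary \ref{thm:mdrcjt} from Theorem \ref{thm:mdrcjttheta} by choosing the tolerance parameter $\theta$ small enough that, over integer edge lengths, the $\theta$-relaxed distance constraint collapses to the exact integral constraint, while simultaneously keeping $1/\theta$, $\eta$, and $\xi$ polynomial in $n$. The main technical point is verifying that this collapse is clean: integrality closes the $\theta$-slack without losing any approximation quality.

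First I would normalize the budgets. Since $\ell_e$ is an integer with $|\ell_e| \le \poly(n)$, the length of any simple path is an integer in $[-n \cdot \poly(n),\, n \cdot \poly(n)]$. Thus, without loss of generality I may replace each $\bdgt(s,t)$ by $\lfloor \bdgt(s,t) \rfloor$ (this does not alter feasibility of any integer-length path) and cap $|\bdgt(s,t)|$ at $B := n \cdot \max_e |\ell_e| \in \poly(n)$, since budgets of larger magnitude are either vacuously satisfied or vacuously infeasible. After this preprocessing every $\bdgt(s,t)$ is a nonzero integer with $1 \le |\bdgt(s,t)| \le B$. I then set $\theta := 1/(2B) = 1/\poly(n)$.

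For both signs of $\bdgt(s,t)$ one has the identity $(1 + \theta\,\siign(\bdgt(s,t)))\,\bdgt(s,t) = \bdgt(s,t) + \theta|\bdgt(s,t)|$, so the $\theta$-relaxed constraint in Definition \ref{def:theta_junc_bbs_main} reads $\sum_{e \in p(s,r,t)} \ell_e \le \bdgt(s,t) + \theta|\bdgt(s,t)|$. Because $\theta|\bdgt(s,t)| \le 1/2$ and the left-hand side is an integer, this inequality is equivalent to the exact constraint $\sum_{e \in p(s,r,t)} \ell_e \le \bdgt(s,t)$. Hence the set of pairs counted as satisfied by any $\theta$-relaxed junction tree coincides with the set counted in the \rcjt sense, and the cost-to-count ratio is unchanged, so any $\tO(k^\ep)$-approximate $\theta$-relaxed solution is automatically an $\tO(k^\ep)$-approximate \rcjt solution. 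Finally, $\eta = |\min\{\min_e \ell_e, 0\}|/\min_{(s,t)}|\bdgt(s,t)| \le \poly(n)$, $\xi = \max|\bdgt|/\min|\bdgt| \le B \in \poly(n)$, and $1/\theta = 2B \in \poly(n)$, so Theorem \ref{thm:mdrcjttheta} runs in polynomial time on this input. The step I expect to be the subtlest is the equivalence between the relaxed and exact constraints for \emph{negative} budgets: the sign-dependent definition must be unwound before integrality can be applied, which is why I first rewrote the relaxation as the additive slack $\theta|\bdgt(s,t)|$ that is uniform in sign.
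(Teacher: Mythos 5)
Your proposal is correct and follows essentially the same route as the paper: invoke Theorem \ref{thm:mdrcjttheta} with $\theta = 1/\poly(n)$, note that $\eta,\xi \in \poly(n)$ once the budgets are normalized to polynomially bounded integers, and use integrality of path lengths to collapse the $\theta$-relaxed constraint to the exact one. Your write-up is in fact more careful than the paper's (the explicit choice $\theta = 1/(2B)$ and the sign-uniform rewriting $(1+\theta\,\siign(\bdgt))\bdgt = \bdgt + \theta|\bdgt|$ make the collapse argument airtight), with the only loose end being the degenerate case $\lfloor \bdgt(s,t)\rfloor = 0$, which the paper itself only addresses via a footnote in Definition \ref{def:bbs}.
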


Corollary \ref{thm:mdrcjt} generalizes the minimum density junction tree used for buy-at-bulk network design \cite{antonakopoulos2009}, pairwise spanners \cite{chlamtavc2020approximating}, and weighted spanners \cite{gkl2023}, with the same approximation ratio. We emphasize that our minimum-density junction tree framework accounts for buy-at-bulk costs and distance constraints with negative edge lengths.

\subsubsection{Resource-constrained shortest paths with negative consumption}

In previous literature, one crucial case analysis for solving directed spanner problems \cite{dinitz2011directed,berman2013approximation,chlamtavc2020approximating,grigorescu2021online,gkl2023} and the directed buy-at-bulk network design problem \cite{antonakopoulos2009} is via \emph{flow-based} linear programs (LP). The LP formulations for spanners potentially contain an exponential number of constraints and thus require an efficient black-box subroutine to be solved in polynomial time. A fully polynomial time approximation scheme (FPTAS) for the \emph{restricted shortest path problem} \cite{hassin1992approximation,lorenz2001simple} is treated as an approximate separation oracle to approximately solve the flow-based LP for spanners.

To further accommodate buy-at-bulk costs for spanners with negative edge lengths, we use an approximation scheme for the resource-constrained shortest paths problem with \emph{negative} resource consumption, a generalization of the FPTAS for the restricted shortest path problem \cite{horvath2018multi}.

\begin{restatable}{definition}{defnrcsp} \label{def:rcsp}

 \textsc{\rcsp ($m$-RCSP)}
    
  \textbf{Instance}: An $n$-vertex directed graph $G = (V,E)$, with edge costs $c: E \to \mathbb{R}_{\ge 0}$, a terminal pair $(s,t)$ with $s,t \in V$, and a resource budget $\boldsymbol{L} = (L_1, ..., L_m) \in (\R \setminus \{0\})^m$. For each edge $e \in E$, we have also have a resource consumption vector $\boldsymbol{w(e)} = (w_1(e),w_2(e),\ldots,w_m(e))$ of size $m$ where each $w_i(e) \in \mathbb{R} \: \forall i \in [m] $. Furthermore, we assume that there are no negative cycles induced by $\{w_i(e)\}_{e \in E}$ for each $i \in [m]$. 

\textbf{Objective}: Find a min-cost $s \leadsto t$ path $P$ such that $\sum_{e \in P} w_i(e) \leq L_i,  \: \forall i \in [m]$. The cost of $P$ is $\sum_{e \in P}c(e)$.
\end{restatable}

We note that when $m=1$ and the resource consumption is non-negative, $m$-RCSP captures the restricted shortest path problem. In the LP for \bbs, the constraints can be captured by solving $m$-RCSP. For $m$-RCSP, we design an algorithm that finds an optimal solution that slightly violates the resource budget. Let $\opt_{RCSP}$ be the cost of any minimum cost $s \leadsto t$ path that satisfies the resource constraints. Given a tolerance vector $(\eps_1, \eps_2 . . . , \eps_m) \in \R_{> 0}^m$, a $(1; 1 + \eps_1, 1+\eps_2 . . . , 1 + \eps_m)$-approximation scheme finds an $s \leadsto t$ path whose cost is at most $\opt_{RCSP}$, but the $i$-th resource constraint is satisfied up to a factor of $(1 + \eps_i)$ for that path. It is required that $L_i \cdot \eps_i > 0$. Let the condition number
\begin{equation}
    \gamma_i:=\frac{|\min\{\min_{e \in E}\{w_i(e)\},0\}|}{|L_i|}
\end{equation}
which denotes the ratio between the magnitude of the most negative $i$-th resource consumption among the edges and the absolute value of the budget for the $i$-th resource.

We show the following result in Section \ref{sec:rcsp}.

\begin{restatable}{theorem}{thmhorvathnegative} \label{thm:horvath_negative}
  There exists a $\poly(n^m,\gamma_1, ..., \gamma_m, 1/|\eps_1|, ..., 1/|\eps_m|)$-time $(1; 1 + \eps_1, 1+\eps_2 . . . , 1 + \eps_m)$-approximation scheme for $m$-RCSP. When $\gamma_i, 1/|\eps_i| \in \poly(n)$ $\forall i \in [m]$ and $m$ is a constant, the approximation scheme runs in polynomial time.  
\end{restatable}

\subsubsection{Summary}

We summarize our main results for \bbs in Table \ref{table:sum} by listing the approximation ratios and contrasting them with the corresponding known approximation ratios. The running time for \bbs and \ssbbs on $[\poly(n)]_{\pm}$ is polynomial. The running time for the $\theta$-feasible or relaxed results on $\R_{> 0}$ is $\poly(1/\theta,\eta,\xi,n)$.
\enote{in the table, can the results between col 2 and 3 be aligned propertly?}

\begin{table}[!htb]
\begin{center}
\def\arraystretch{1.2}
\begin{tabular}{|*3{l|}}
\hline

\textbf{Problem} & \textbf{Our Results} & \textbf{Previous Problems and Results} \\
\hline

\begin{tabular}{@{}l@{}} \textsc{Buy-at-bulk} \\ \textsc{Spanner} \\ on $[\poly(n)]_{\pm}$ \end{tabular} & \begin{tabular}{@{}l@{}} 
$\tO(n^{4/5+\ep})$ (unit-demand, Thm \ref{thm:bbs45}) \\ $\tO(k^{1/2+\ep})$ (Cor \ref{cor:bbs}) \\ $\tO(k^{\ep})$ (single-source, Cor \ref{cor:bbs}) 
\end{tabular} & \begin{tabular}{@{}l@{}} $O(n^{4/5+\ep})$ (unit-demand buy-at-bulk) \cite{antonakopoulos2009} \\ $\tO(n^{4/5 + \ep})$ (weighted spanners) \cite{gkl2023} \\ $O(k^{1/2+\ep})$ (buy-at-bulk) \cite{antonakopoulos2009} \\
$O(k^{\ep})$ (single-source buy-at-bulk) \cite{antonakopoulos2009} 
\\ $\tO(k^{1/2+\ep})$ (weighted spanners) \cite{gkl2023} \\ $\tO(k^{\ep})$ (single-source weighted spanners) \cite{gkl2023} \\ $\tO(n^{3/5 + \ep})$ (pairwise spanners) \cite{chlamtavc2020approximating} \end{tabular} \\
\hline
 \textsc{Buy-at-bulk} & $\tO(k^{1/2+\ep})$ ($\theta$-feasible, Thm \ref{thm:bbs}) & same as above, note that weighted spanners \\
 \textsc{Spanner} on $\R$ & $\tO(k^{\ep})$ ($\theta$-feasible, single-source, & consider edge lengths in $[\poly(n)]$ \\
 & Thm \ref{thm:bbs}) & \\
\hline

\textsc{Minimum-density} & $\tO(k^{\ep})$ (on $[\poly(n)]_{\pm}$,
Cor \ref{thm:mdrcjt}) & $O(k^{\ep})$ (buy-at-bulk) \cite{antonakopoulos2009} \\
\textsc{Distance-constrained} & $\tO(k^{\ep})$ (on $\R$, $\theta$-relaxed, Thm \ref{thm:mdrcjttheta}) & $\tO(k^{\ep})$ (pairwise spanners) \cite{chlamtavc2020approximating} \\
\textsc{Junction Tree} & & $\tO(k^{\ep})$ (weighted spanners) \cite{gkl2023} \\

\hline
$m$-RCSP with negative & an $(1; 1 + \eps_1, 1+\eps_2 . . . , 1 + \eps_m)$- & an $(1; 1 + \eps, 1+\eps . . . , 1 + \eps)$-\\
resource consumption & approximation scheme (Thm \ref{thm:horvath_negative}) & approximation scheme for $m$-RCSP \\
& & with non-negative consumption\\

\hline

\end{tabular}

\caption{Summary of the approximation ratios. Here, $n$ refers to the number of vertices and $k$ refers to the number of terminal pairs. The edge costs $\sigma$ and $\delta$ are non-negative rational numbers. $\theta$-feasible means that the resource constraints are satisfied within a factor of $(1 + \theta \siign(\bdgt(s,t)))$ for each $(s,t) \in D$. Input graphs do not have negative cycles induced by the edge lengths (or resources for $m$-RCSP).} \label{table:sum}
\end{center}
\end{table}

\subsection{High-level technical overview}

\subsubsection{The $\tO(n^{4/5 + \eps})$-approximation algorithm for \bbs on $[\poly(n)]_{\pm}$}

Recall that we are given a directed graph with terminal vertex pairs. Each terminal vertex pair has a demand and a distance budget. Each edge is associated with a length that can be negative and a buy-at-bulk cost. There are no negative cycles induced by the edge lengths. The goal is to output a collection of routes with minimum cost that satisfy the terminal demands and distance constraints.

Similar to the Steiner forest framework \cite{berman2013approximation,feldman2012improved} and the buy-at-bulk network design framework \cite{antonakopoulos2009}, we classify the terminal pairs as follows:
\begin{itemize}
    \item A pair $(s,t) \in D$ is \emph{good} if the feasible (i.e., satisfying the distance constraint) and cheap (low \upc and low \ppc) $s \leadsto t$ paths span a great number of vertices in $V$.
    \item A pair $(s,t) \in D$ is \emph{bad} otherwise.
\end{itemize}

With distance constraints, we have to handle \emph{all} the cases carefully, without destroying the desired structural properties. The analysis significantly departs from \cite{feldman2012improved,berman2013approximation,gkl2023,antonakopoulos2009} in several aspects, as we describe below.

\paragraph{Handling good pairs.} For the first case, a standard approach is to sample a sufficient number of intermediary vertices from $V$ and add cheap and feasible paths to connect the good pairs. For directed Steiner forests, edges only have \upc and there are no distance constraints, so it is sufficient to add cheap paths \cite{feldman2012improved,berman2013approximation}. For weighted spanners \cite{gkl2023}, edges only have \upc and positive edge lengths, so it is sufficient to find a \emph{shortest cheap path} by using the restricted shortest path FPTAS \cite{lorenz2001simple,hassin1992approximation} as a subroutine. 

For \bbs on $[\poly(n)]$, edges have positive edge lengths, \upc, and \ppc, one can carefully use the resource-constrained shortest path (RCSP) FPTAS from \cite{horvath2018multi} as the subroutine to handle distance constraints. This approach connects the sources to the intermediary vertices and the intermediary vertices to the sinks, where the connecting paths have short distances and \emph{low \upc and \ppc}. Handling negative edge lengths requires a more general subroutine to connect the terminal vertices to (from) the intermediary vertices, so we modify the FPTAS for RCSP to accommodate negative edge lengths (described in more detail in Section \ref{sec:intro-horvath}).

\paragraph{Handling bad pairs.}
After the good pairs are resolved, we iteratively use a greedy algorithm based on a density argument. The greedy algorithm constructs a low-density partial solution in each iteration. Adding low-density partial solutions iteratively guarantees a global solution of approximately minimum cost. For ease of our analysis, we partition the bad pairs into three classes based on an unknown optimal solution $\cP^* = \{p^*(s,t)\}_{(s,t) \in D}$.

\begin{itemize}
    \item A bad pair $(s,t) \in D$ is in \emph{class 1} if $p^*(s,t)$ has a high \ppc.
    \item A bad pair $(s,t) \in D$ is in \emph{class 2} if $p^*(s,t)$ has a high \upc and a low \ppc.
    \item A bad pair $(s,t) \in D$ is in \emph{class 3} if $p^*(s,t)$ has a low \upc and a low \ppc.
\end{itemize}
We consider three subcases for the bad pairs.
\begin{itemize}
    \item Case 1: the number of bad pairs is small.
    \item Case 2: the number of class 2 bad pairs is larger than the number of class 3 bad pairs.
    \item Case 3: the number of class 2 bad pairs is at most the number of class 3 bad pairs.
\end{itemize}

The iterative procedure continues by picking the better solution by trying out case 2 and case 3 until the number of bad pairs is small (case 1). Once we reach case 1, we use Corollary \ref{cor:bbs} (obtained from Theorem \ref{thm:bbs}) to resolve all the bad pairs. Since the number of unresolved vertex pairs is small in this case, the approximation ratio is sufficiently small.

A key observation is that the number of class 1 bad pairs is always smaller than the threshold used for case 1. Therefore, it suffices to consider cases 2 and 3 before we run out of class 2 and class 3 bad pairs.

For case 2, there are more class 2 pairs than class 3 pairs. Since the class 2 bad pairs have high \upc, we must have an edge that belongs to a sufficient number of paths that connect the terminal pairs within the required distance in an optimal solution. This implies that there must be a vertex that lies on an edge used plenty of times, so we can use Corollary \ref{thm:mdrcjt} (obtained from Theorem \ref{thm:mdrcjttheta}) to find a low-density junction tree rooted at that vertex as our partial solution.

Recall that for each bad pair, the number of vertices spanned by its feasible paths is small. In case 3, there are more class 3 pairs than class 2 pairs. Therefore, there are sufficient terminal pairs whose \upc and \ppc are low and whose feasible paths span a small number of vertices. A natural approach that addresses this case is via a flow-based LP formulation. Intuitively, when we have a large number of terminal pairs with low \upc and \ppc and a small number of vertices spanned by feasible paths, the edge indicators of the LP must be large enough to construct a feasible fractional solution. To solve the LP, we use our RCSP framework that accommodates negative resource consumption as a separation oracle to extract violating constraints. After using a careful pruning procedure as in \cite{antonakopoulos2009} and a rounding scheme similar to the ones used for Steiner forests \cite{berman2013approximation} and weighted spanners \cite{gkl2023}, we extract the edges with high indicator values in the LP and recover the collection of paths as our partial solution.

\subsubsection{Approximations for \bbs in terms of $k$}
For \ssbbs, the result directly follows by Theorem \ref{thm:mdrcjttheta}. The proof for \bbs follows a standard iterative density procedure for directed Steiner forests \cite{chekuri2011set} and spanners \cite{grigorescu2021online,gkl2023}. We show that iteratively picking minimum-density distance-constrained junction trees only pays a factor of $O(\sqrt{k})$. Combining this and Theorem \ref{thm:mdrcjttheta} results an $\tO(k^{1/2 + \ep})$-approximation.

\subsubsection{The $\tO(k^{\eps})$-approximation for \rcjt}

At a high level, junction trees form a cheap partial solution that connects a subset of terminal pairs. For \bbs, a potential approach is to iteratively select a low-density junction tree in case there exist edges that are crucial for connecting terminal pairs. For our purpose, we have to construct \emph{feasible} junction trees that satisfy the resource constraints while connecting terminal pairs. The modified junction-tree-based approach is the main engine of our framework.

Suppose we have a fixed root vertex $r \in V$, and the goal is to find a minimum-density junction tree rooted at $r$. Here, the \emph{density} is defined as the cost of the junction tree divided by the number of terminal pairs connected. The framework in \cite{chlamtavc2020approximating} used for pairwise spanners with unit lengths has three main steps: 1) construct a layered graph from $G$ to capture the distance constraints and a junction tree rooted at $r$, 2) use the \emph{height reduction} technique to construct a tree-like graph from the layered graph by paying a small approximation ratio, and 3) use a linear programming (LP) formulation on the tree-like graph and round the fractional solution. The main reason for the second and third steps is that the tree-like graph is well-structured, which allows one to formulate an LP with a polylogarithmic integrality gap.

To capture distance constraints with negative edge lengths and the flow-based cost in the buy-at-bulk problem, our implementation requires several new ideas, which significantly depart from the analysis of \cite{chlamtavc2020approximating} in several aspects, as we describe below.

\paragraph{Scaling and rounding the edge lengths.} Before the first step, we use an approach similar to \cite{horvath2018multi} to properly scale and round the edge lengths. The edge lengths are rounded up so that the distances between the terminal pairs might be slightly overestimated. This allows us to construct a layered graph of size polynomial in the condition numbers which approximately preserves the edge lengths and terminal distances.

\paragraph{Turning distance constraints into connectivity constraints.}
In the first step, we construct a layered graph that approximately preserves the edge lengths. 
In the layered graph, each layer captures the distance to (from) the root vertex. 
To handle negative edge lengths, a key modification is to allow edges to go \emph{backward}, instead of always forward as in \cite{chlamtavc2020approximating}. Furthermore, since we have general edge lengths instead of unit edge lengths, it is no longer the case that only neighboring layers have edges between them. Edges are added from one layer to another whenever their length corresponds to the distance between layers.

\paragraph{Handling \ppc.}
In the second step, the main challenge is to handle the flow-based cost properly. Fortunately, following the height reduction technique in \cite{cekp} allows us to reduce the two-metric problem to a variant of the Steiner problem, namely, \emph{minimum density Steiner label cover}, which only accounts for the \upc. In this problem, the goal is to find a minimum density subgraph that connects \emph{pairs of terminal vertex sets}, subject to a relation induced by the distance constraints.

\paragraph{LP formulation and rounding.}
In the third step, we use an LP formulation for minimum density Steiner label cover \cite{chlamtavc2020approximating}. 
The rounding approach extracts a \emph{cross-product subset} of the terminal pair sets, thus allowing one to use a standard rounding scheme for the group Steiner problem on trees. Ultimately, we extract a cross-product subset and use the group Steiner rounding scheme for our distance-constrained problem.

\subsubsection{The approximation scheme for RCSP with negative consumption} \label{sec:intro-horvath}

Recall that for $m$-RCSP, we are given a directed graph with a terminal vertex pair. Each edge is associated with a cost and an $m$-dimensional resource consumption vector where entries can be negative. There is no negative cycle induced by any resource type. We also have an $m$-dimensional budget for the resource consumption. The goal is to find a cheap path to connect the terminal pair without exceeding the budget.

To construct the approximation scheme, we follow the dynamic programming (DP) paradigm in \cite{horvath2018multi}. To approximately preserve feasibility, the approach in \cite{horvath2018multi} scales the resource consumption properly, and the DP memorizes the feasible resource consumption \emph{patterns} while reaching a specific vertex from the source. To accommodate negative resource consumption, the main challenge is the negative resource consumption can be \emph{unbounded} in terms of the scale for the non-negative consumption. Addressing these challenges requires several modifications. First, we construct a larger DP table where the number of resource consumption patterns depends on the condition numbers $\gamma_i$. Second, with the assumption that negative cycles do not exist, our DP also considers \emph{hop counts} in a fashion similar to the Bellman-Ford algorithm.

\subsection{Related work}

\paragraph{Directed spanners.} 
A well-studied variant of spanners is called the \emph{directed $s$-spanner} problem, where 
there is a fixed value $s \geq 1$ called the \emph{stretch}, and the goal is 
to find a subgraph with a minimum number of edges such that the distance between \emph{every} pair of
vertices is preserved  within a factor of $s$ in the original
graph. When the lengths of the edges are uniform and $s=2$, there is a tight $\Theta(\log n)$-approximation algorithm \cite{elkin1999client, Kortsarz2001OnTH}.  When $s=3,4$ there are $\tO(n^{1/3})$-approximation algorithms \cite{berman2013approximation,dinitz2016approximating}. When $s > 4$, the best known approximation factor is $\tO(n^{1/2})$ \cite{berman2013approximation}. The problem is hard to approximate within an $O(2^{{\log^{1-\eps} n}})$ factor for $3 \leq s = O(n^{1-\delta})$ and any $\eps, \delta \in (0,1)$, unless $NP\subseteq  DTIME(n^{\operatorname{polylog} n})$ \cite{ElkinP07}. More general variants consider the \emph{pairwise
  spanner} problem \cite{chlamtavc2020approximating}, and the {\em client-server} model \cite{elkin1999client,bhattacharyya2012transitive}, where the set of
terminals  is arbitrary $D = \{(s_i,t_i) \mid i \in [k]\} \subseteq V \times V$, and each
terminal pair $(s_i,t_i)$ has its own target distance $d_i$. The goal is
to compute a minimum cardinality subgraph in which for each $i$, the
distance from $s_i$ to $t_i$ is at most $d_i$.  For the pairwise
  spanner problem with uniform lengths, \cite{chlamtavc2020approximating} obtains an $\tO(n^{3/5 + \ep})$
approximation.
Recently, \cite{gkl2023} studied the {\em weighted} spanner problem for arbitrary terminal pairs, which has a closer formulation to buy-at-bulk spanners. \cite{grigorescu2021online} studied online directed spanners. We refer the reader to 
the excellent survey \cite{ahmed2020graph} for a more comprehensive exposition.

\paragraph{Buy-at-bulk network design.} The buy-at-bulk problem has received considerable attention and has been well-studied in the past few decades.
Most of the previous buy-at-bulk literature focused on undirected networks, as listed below.

The problem was first introduced in \cite{salman2001approximating}, where the subadditive load function was used to capture the economy of scale in network design.
\cite{salman2001approximating} showed that the problem is $NP$-hard, and gave an $O(\min\{\log n, \log D_{\max}\})$-approximation algorithm for the single-source problem, where $D_{\max}$ is the maximum demand.
When the edge costs are uniform, there is a $\polylog(n)$-approximation algorithm for the multi-commodity problem \cite{awerbuch1997buy} and $O(1)$-approximation algorithms for the single-source problem \cite{guha2009constant,talwar2002single,gupta2003simpler}.
With non-uniform load functions, the first nontrivial result for the multi-commodity problem was $O(\log D_{\max}\exp(O(\sqrt{\log n \log \log n})))$-approximate \cite{charikar2005non}, later improved to $\polylog(n)$-approximate \cite{chekuri2010approximation}; for the single-source problem, there is an $O(\log k)$-approximate algorithm \cite{meyerson2008cost}. 
The buy-at-bulk problem is more intractable on directed graphs.
The state-of-the-art is an $\min\{\tO(k^{1/2+\ep}),\tO(n^{4/5+\ep})\}$-approximate algorithm \cite{antonakopoulos2009}.
Even in the special case of directed Steiner forests, previous algorithms only gave $\poly(n)$ but sublinear (i.e., $o(n)$) approximation algorithms \cite{berman2013approximation,chlamtavc2020approximating,feldman2012improved,abboud2018reachability}.
On the hardness side, for the undirected buy-at-bulk problem, the multi-commodity problem is $O(\log^{1/2 - \ep} n)$-hard to approximate when the costs are general and $O(\log^{1/4 - \ep} n)$-hard to approximate when the costs are uniform \cite{andrews2004hardness}, and the single-source problem is $O(\log \log n)$-hard to approximate \cite{chuzhoy2008approximability}.
These results are based on the assumption that $NP \nsubseteq ZTIME(n^{\polylog(n)})$.
For directed buy-at-bulk, the problem is hard to approximate within an $O(2^{\log^{1-\ep} n})$ factor assuming that $NP \nsubseteq DTIME( n^{\polylog(n)})$, even in the special case of directed Steiner forests \cite{dodis1999design}.

\paragraph{Other variants of buy-at-bulk network design.} 
Besides the edge-weighted buy-at-bulk problem, there are other variants including the node-weighted problem and the prize-collecting problem. In the prize-collecting problem, each terminal pair also has a penalty and one can choose not to connect the pair and incur the penalty in the cost. Most of these results are on undirected graphs. For undirected node-weighted buy-at-bulk, there exists a polylogarithmic polynomial-time approximation algorithm \cite{chekuri2007approximation} and a polylogarithmic quasi-polynomial-time competitive online algorithm \cite{cekp}. In a more restricted case, i.e., undirected node-weighted Steiner trees, a polylogarithmic approximation algorithm was presented in \cite{klein1995nearly} and a polylogarithmic competitive online algorithm was presented in \cite{naor2011online}. Following \cite{naor2011online}, \cite{hajiaghayi2013online} extends to online undirected node-weighted Steiner forests while \cite{hajiaghayi2014near} extends to the online prize-collecting versions. These results fall under the unifying framework of \cite{aaabn-set-cover} which utilizes an online primal-dual LP rounding scheme. The work \cite{cekp} further extends to the online price-collecting buy-at-bulk problem with the same competitive ratio as the standard edge-weighted problem on both directed and undirected graphs. The work \cite{gupta2017last} considers a metric-based variant of undirected online buy-at-bulk and presents a framework that finds a cheap subgraph (compared to the minimum spanning tree or the optimal Steiner forest) that connects the terminal pairs with low stretch.

\paragraph{Resource-constrained shortest path.} The resource-constrained shortest path problem was introduced in \cite{horvath2018multi}. The input consists of $m$ resource types and a directed graph where each edge is associated with a non-negative cost and a non-negative (for all coordinates) resource consumption vector. The goal is to find a minimum-cost path that connects the single source vertex to the single sink vertex while satisfying the resource constraint. When $m=1$, this problem is equivalent to the restricted shortest path problem \cite{hassin1992approximation,lorenz2001simple}, which has been extensively used in the literature of spanners \cite{dinitz2011directed,chlamtavc2020approximating,feldman2012improved,berman2013approximation,grigorescu2021online}. The results of \cite{horvath2018multi} show that when $m$ is a constant, there exists an FPTAS that finds a path with a cost at most the same as the feasible minimum-cost path by violating each budget by a factor of $1+\ep$. The FPTAS for $m=2$ is used to solve the weighted spanner problem \cite{gkl2023}.

\subsection{Organization}
We present the $\tO(n^{4/5 + \eps})$-approximation algorithm for \bbs on $[\poly(n)]_{\pm}$ in Section \ref{sec:spanner4by5}. We present the $\tO(k^{1/2 + \ep})$-approximation algorithm for \bbs on $\R$ and the $\tO(k^{\ep})$-approximation algorithm for \ssbbs on $\R$ that may slightly violate the distance constraints in Section \ref{sec:spannerk}. We present the $\tO(k^\ep)$-approximation algorithm for \rrcjt in Section \ref{sec:jtree}. We present our RCSP framework in Section \ref{sec:rcsp}.

\section{An $\tO(n^{4/5 + \eps})$-approximation for Buy-at-Bulk Spanners} \label{sec:spanner4by5}

Recall Definition \ref{def:bbs}.

\defbbs*

Recall Theorem \ref{thm:bbs45}

\thmbbsfourfive*

Throughout this subsection, we set $\cD = [\poly(n)]_{\pm}$.  In this section, we prove Theorem \ref{thm:bbs45}.

\nnote{remove the word resource if we have time, otherwise the above line makes it work.}

As in \cite{feldman2012improved,berman2013approximation,gkl2023,antonakopoulos2009} let $\tau$ be our guess of the optimal solution - $\opt$ such that $\opt \leq  \tau \leq 2\cdot \opt$. Throughout this section, we will assume that our demand is uniform (i.e., $\dem(s,t) = 1 \: \forall \: (s,t) \in D$). If we have non-uniform demand, we can change it into uniform demand using a simple reduction by a standard reduction where we break our overall instances into $O(log \max_{(s,t)\in D} dem(s,t))$ smaller instances. Each of these smaller instances have uniform demand (this is the same as splitting a positive integer into powers of $2$).

We define some common notation used in the literature for Spanners, Steiner forests, and Buy at Bulk network design. Let $\beta = n^{3/5}$ and $L_1 = \tau/n^{4/5}, L_2 = n^{4/5} \tau /k$ - we will use these parameters later in our algorithm. We say that any path $p(s,t)$ connecting a terminal pair $(s,t)$ is  {\em feasible} if $\sum_{e \in p_{s,t}} \ell_e \leq \bdgt(s,t)$. 

We say that a path $p_{s,t}$ has cheap investment if $\sigma(p_{s,t}) =  \sum_{e \in p_{s,t}} \sigma(e) \leq L_1$; we say that a path has cheap maintenance if $\delta(p_{s,t}) =  \sum_{e \in p_{s,t}} \delta(e) \leq L_2$. Further, we say that $p(s,t)$ is {\em cheap} if it has both cheap investment and cheap maintenance \footnote{Recall that we only deal with uniform demand because we can reduce general demand to uniform demand with a cost of $log \sum_{(s,t)\in D} dem(s,t)$ - thus we don't need to multiply $\delta(e)$ with demand for a single path}. 

We call a terminal pair $(s,t) \in D$ {\em good} if the {\em local graph} $G^{s,t} = (V^{s,t}, E^{s,t})$ induced by the vertices on feasible $s \leadsto t$ paths that are cheap has at least $n/\beta$ vertices; we say it is {\em bad} otherwise. Let $D_a$ be the set of good pairs and $D_b$ be the set of bad pairs; also let $k_a = |D_a|$ and $k_b = |D_b|$. Our definition of good and bad pairs here has to account for both negative lengths and the addition of the \ppc unlike previous literature \cite{berman2013approximation,feldman2012improved,gkl2023}. 
Finally, we state that a set of paths $\{p(s,t)\}$ resolves(or settles) a pair $(s,t) \in D$ if it contains a feasible $s \leadsto t$ path.

\subsection{Resolving good pairs} \label{se:thick_negative}
We first define, $S = \{s \mid \exists t: (s,t) \in D\}$ and $T = \{t \mid \exists s: (s,t) \in D\}$. In this subsection, we settle the good pairs with high probability. We do this by sampling some vertices using  Algorithm \ref{alg:sampling_thick} and then adding some incoming paths and outgoing paths from the samples to the vertices in $S$ and $T$ respectively using Algorithm \ref{alg:resolve_thick}. We ensure that any path we build is both feasible and cheap. 

\begin{algorithm}[!htb]
\caption{Sample$(G(V,E))$} \label{alg:sampling_thick}
\begin{algorithmic}[1]
\State{$R \gets \phi$, $k \gets 3 \beta \ln n$.}

\State Sample $k$ vertices independently and uniformly at random and store them in the set $R$.

\State \Return $R$.
\end{algorithmic}
\end{algorithm}

\begin{restatable}[]{claim}{samph}
\label{cl:sampling_thick}
Algorithm \ref{alg:sampling_thick} selects a set of samples $R$ such that with high probability any given good pair $(s,t)$ has at least one vertex from its local graph in $R$.
\end{restatable}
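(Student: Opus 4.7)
The plan is to use a straightforward independent-trials concentration argument followed by a union bound over all terminal pairs. The key quantitative inputs are that (i) a good pair has $|V^{s,t}| \geq n/\beta$ by definition, (ii) Algorithm~\ref{alg:sampling_thick} draws $k = 3\beta \ln n$ samples independently and uniformly from $V$, and (iii) there are at most $n^2$ terminal pairs in total.

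First, I would fix an arbitrary good pair $(s,t) \in D_a$. Since every vertex is equally likely to be sampled, the probability that a single draw lands outside $V^{s,t}$ is at most $1 - |V^{s,t}|/n \leq 1 - 1/\beta$. By independence across the $3\beta \ln n$ draws, the probability that no sample falls inside $V^{s,t}$ is at most
\[
\left(1 - \frac{1}{\beta}\right)^{3\beta \ln n} \leq e^{-3 \ln n} = \frac{1}{n^{3}},
\]
using the standard inequality $1 - x \leq e^{-x}$.

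Next, I would apply a union bound over the at most $|D| \leq n^2$ terminal pairs (good or not; the bound is only used for the good ones). The probability that there exists some good pair $(s,t)$ whose local graph $V^{s,t}$ is entirely missed by $R$ is at most $n^2 \cdot n^{-3} = 1/n$. Equivalently, with probability at least $1 - 1/n$, every good pair has at least one vertex of its local graph in $R$, which matches the paper's convention for "high probability."

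There is not much of a conceptual obstacle here; this is a textbook hitting-set argument of the same flavor used for intermediate-vertex sampling in directed Steiner forest and spanner algorithms~\cite{feldman2012improved,berman2013approximation,gkl2023}. The only subtlety is choosing the sampling budget to absorb both the $1/\beta$ hitting probability and the union-bound factor of $n^2$; the choice $k = 3\beta \ln n$ in the algorithm is tuned precisely so that the exponent in the Chernoff-style bound beats $n^2$ by a factor of $n$, leaving a clean $1/n$ failure probability.
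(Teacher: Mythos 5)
Your proof is correct and is exactly the standard hitting-set argument that the paper itself does not spell out but instead defers to by citation (Claim 2.1 of \cite{gkl2023}): a good pair's local graph has at least $n/\beta$ vertices, so $3\beta\ln n$ independent uniform samples miss it with probability at most $(1-1/\beta)^{3\beta\ln n}\le n^{-3}$, and a union bound over at most $n^2$ pairs gives failure probability at most $1/n$. No gaps; your write-up supplies the details the paper omits.
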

\begin{proof}
    This standard claim and close versions of it have been proved in several articles (see for instance Claim 2.1 in \cite{gkl2023}).
\end{proof}
In Algorithm \ref{alg:resolve_thick}, we call Algorithm \ref{alg:sampling_thick} to get a set of samples $R$. For each $u \in R,s \in S,t \in T$, we try to add a set of $s \leadsto u$ paths and a set of $u \leadsto t$ path each of cost both \upc at most $O(L_1)$ and \ppc at most $L_2$. It is possible we can't find some of these paths and if that happens, we just ignore this pair and continue.

Now, we just need a black box algorithm that can add a $s \leadsto u$ path that fits our requirements (i.e., allowing negative edge lengths, multiple length constraints for the same $s \leadsto u$ pair). We use Algorithm \ref{alg:horvath_modified} for this purpose. Recall the \rcsp problem. 

\defnrcsp*

We also present a slightly modified version of Theorem \ref{thm:horvath_negative} in Corollary \ref{cor:rcsprational} that can exactly satisfy $m-1$ resource constraints where the corresponding resources are integers polynomial in $n$; and approximately satisfy one resource constraint that where the corresponding resource is a non-negative rational number. 

\begin{restatable}{corollary}{corrcsprational} 
\label{cor:rcsprational}
    \label{cl:horvath_modified_onerational}
    When for all edges $e \in E$, $w_{e,i} \in [\poly(n)]_{\pm} \:\: \forall i \in \{1,2,\ldots,m-1\}$ and $w_{e,m} \in Q_{\geq 0}$, there exists a fully polynomial time $(1;1,1,\ldots,1+\zeta)-$ algorithm for the $k-$\rcsp problem that runs in time polynomial in input size and $1/\zeta$.
\end{restatable}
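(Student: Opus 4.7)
The plan is to derive Corollary~\ref{cor:rcsprational} from Theorem~\ref{thm:horvath_negative} by a careful choice of the tolerance vector $(\eps_1,\ldots,\eps_m)$, exploiting integrality of the first $m-1$ weight coordinates to convert approximate feasibility into exact feasibility at the cost of only a polynomial factor in the running time.

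Concretely, I would set $\eps_m = \zeta$ (with the appropriate sign so that $L_m\eps_m>0$, which is immediate since $w_{e,m}\ge 0$ forces $L_m>0$ in any nontrivial instance), and for each $i\in\{1,\ldots,m-1\}$ set $\eps_i = \siign(L_i)\cdot\eps^{*}$ for a value $\eps^{*} = 1/\poly(n)$ to be chosen below. Invoking Theorem~\ref{thm:horvath_negative} with these tolerances yields, in time $\poly(n^{m},\gamma_{1},\ldots,\gamma_{m},1/\eps^{*},1/\zeta)$, a path $P$ of cost at most $\opt_{\text{RCSP}}$ that satisfies $\sum_{e\in P} w_i(e) \le (1+\eps_i)L_i$ for every $i$, i.e.\ the last coordinate is relaxed by the factor $1+\zeta$ as required, and each of the first $m-1$ coordinates is relaxed by $|\eps^{*}L_i|$.

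The key observation is that for $i\in\{1,\ldots,m-1\}$ every partial sum $\sum_{e\in P} w_i(e)$ is an integer, because the $w_{e,i}$ are integers in $[\poly(n)]_\pm$ and $P$ contains at most $n$ edges (so $|\sum w_i(e)|\le \poly(n)$). Taking $\eps^{*}$ smaller than the reciprocal of a sufficiently large polynomial in $n$ and the denominators of the $L_i$, one ensures $|\eps^{*}L_i|$ is strictly smaller than the distance from $L_i$ to the next integer on the violating side; the integer-valued sum then cannot fall into the relaxation gap, so the original constraint $\sum_{e\in P} w_i(e)\le L_i$ holds exactly. This handles both positive and negative $L_i$ uniformly through the $\siign$ convention, since the relaxation always moves $L_i$ away from $0$.

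Finally, I would verify the running time. For $i<m$, the condition number $\gamma_i = |\min_e w_i(e)|/|L_i|$ is $\poly(n)$ since both numerator and (normalized) denominator are polynomially bounded, while $\gamma_m=0$ because $w_{e,m}\ge 0$. With $1/\eps^{*} = \poly(n)$, $1/|\eps_m|=1/\zeta$, and $m$ a constant, the running time simplifies to $\poly(n, 1/\zeta)$, giving the claimed FPTAS. The main technical nuisance is the negative-length/negative-budget bookkeeping in the last step: ensuring that the integrality argument correctly identifies the side of $L_i$ on which the relaxation pushes the constraint (controlled by $\siign(L_i)$), and that $\eps^{*}$ is chosen small enough relative to the fractional part of $L_i$; everything else is a direct plug-in of Theorem~\ref{thm:horvath_negative}.
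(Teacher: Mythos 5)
Your proposal is correct and follows essentially the same route as the paper: invoke Theorem~\ref{thm:horvath_negative} with tolerance $\zeta$ on the last (rational, non-negative) resource and a polynomially small tolerance on the first $m-1$ integer-valued resources, then use the fact that integer path weights are spaced at least $1$ apart and bounded in magnitude by $\poly(n)$ to conclude that the $(1+\eps_i)$-relaxation cannot actually be exploited, so those constraints hold exactly. (One cosmetic slip: for negative $L_i$ the relaxed threshold $(1+\eps_i)L_i$ moves \emph{toward} $0$ rather than away from it, but the relaxation gap has width $|\eps_i L_i|$ either way, so your integrality argument is unaffected.)
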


\begin{proof}
    See Section \ref{sec:rcsp}.
\end{proof}

Since we have to ensure that both \upc and \ppc are low enough, we need to model one of them as a constraint and the other as an objective. Without loss of generality, we set the \upc as the objective in Corollary \ref{cl:horvath_modified_onerational} and set the \ppc as the $2 ^{nd}$ constraint (which is allowed to be rational and non-negative). The length is modeled as the first constraint (it is an integer polynomial in $n$). 

Now, we do not know the exact length we need for a $s \leadsto u$ path. But, what we can do is search for all possible lengths in the interval:$[n \cdot \lenlow, n \cdot \lenup]$\footnote{This can be speed up by binary search. We use linear search to improve readability.} for the lowest possible length as in \cite{gkl2023}. Using Claim \ref{cl:horvath_modified_onerational} as our black box, we get a {\em cheap} and feasible path $p(s,t)$ if such a path exists. 

\begin{algorithm}[!htb]
\caption{Thick pairs resolver $(G(V,E),\{\ell(e),\sigma(e),\delta(e)\}_{e\in E})$} \label{alg:resolve_thick}
\begin{algorithmic}[1]
\State{$R \gets \phi$, $P' \gets \phi$.}

\State{$R \gets \text{Sample}(G(V,E))$.}

\For{$u \in R$}
\For{$s \in S$}
\For{$len \in [n \cdot \lenlow, n \cdot \lenup]$}
\State Use Claim \ref{cl:horvath_modified_onerational} to find the cheapest $s \leadsto u$ path (in terms of \upc) that has \ppc $\leq L_2$ and $\len(p) \leq len$. If a path is found and it has \upc $\leq L_1$, add it to $P'$ and go to the next pair of terminals.

\EndFor
\State If no path can be found for any length, continue to the next iteration.
\EndFor
\EndFor

\For{$u \in R$}
\For{$t \in T$}
\For{$len \in [n \cdot \lenlow, n \cdot \lenup]$}
\State Use Claim \ref{cl:horvath_modified_onerational} to find the cheapest $u \leadsto t$ path (in terms of \upc) that has \ppc $\leq L_2$ and $\len(p) \leq len$. If a path is found and it has \upc $\leq L_1$, add it to $P'$ and go to the next pair of terminals.

\EndFor
\State If no path can be found for any length, continue to the next iteration with the next pair of terminals.
\EndFor
\EndFor

\For{$(s,t) \in D$}
\State Find some feasible $s \leadsto t$ path $p_3 $ that is obtained by joining a $s \leadsto u$ path $p_1 \in P'$ and $u \leadsto t$ path $p_2 \in P'$. This path will have \ppc at most $2(1+\zeta)L_2$, \upc at most $2 L_1$ (if no such path can be formed from $P'$, then ignore this pair). 

\State Add $p_3$ to $\mathcal{P}_g$.
\EndFor

\State \Return $\mathcal{P}_g$
\end{algorithmic}
\end{algorithm}

\begin{lemma} \label{le:thick_final}
    With high probability, the set of paths $P_g$ returned by Algorithm \ref{alg:resolve_thick} resolves all good pairs in $D$ with a total cost $\tO(n^{4/5} \cdot \tau)$. Moreover, Algorithm \ref{alg:resolve_thick} runs in polynomial time.
\end{lemma}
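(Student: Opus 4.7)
The plan is to establish three things in sequence: (i) with high probability, every good pair $(s,t)$ has at least one sample vertex from $R$ lying in its local graph $V^{s,t}$; (ii) whenever such a sample exists, Algorithm \ref{alg:resolve_thick} produces a feasible $s \leadsto t$ path for that pair in the final loop; and (iii) the total cost of the collection $\mathcal{P}_g$ is $\tilde{O}(n^{4/5}\tau)$, with the whole procedure running in polynomial time. Item (i) follows directly from Claim \ref{cl:sampling_thick} combined with a union bound over the at most $n^2$ good pairs, since each individual failure event has probability at most $(1-1/\beta)^{3\beta\ln n} \leq 1/n^3$.

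For (ii), which I expect to be the most delicate step, fix a good pair $(s,t)$ and let $u \in R \cap V^{s,t}$. By definition of $V^{s,t}$, there is a cheap feasible $s \leadsto t$ path through $u$; denote its $s \leadsto u$ and $u \leadsto t$ sub-paths by $p_1^*$ and $p_2^*$, which satisfy $\sigma(p_i^*) \leq L_1$, $\delta(p_i^*) \leq L_2$, and $\len(p_1^*) + \len(p_2^*) \leq \bdgt(s,t)$. When Algorithm \ref{alg:resolve_thick} processes the pair $(u,s)$ and sweeps $len$ upward through $[n\cdot\lenlow,\, n\cdot\lenup]$, the subroutine of Corollary \ref{cor:rcsprational} is guaranteed to succeed no later than $len = \len(p_1^*)$: at that point $p_1^*$ is a feasible witness for the RCSP instance (length as exact constraint, \ppc as the approximated constraint, \upc as objective), so the subroutine returns a path of \upc at most $\sigma(p_1^*) \leq L_1$, \ppc at most $(1+\zeta)L_2$, and length at most $\len(p_1^*)$, which is then added to $P'$. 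Let $p_1$ denote the path actually added (it may have been found at an earlier $len \leq \len(p_1^*)$, and thus be even shorter). The same argument applied to $(u,t)$ yields a path $p_2 \in P'$ with $\len(p_2) \leq \len(p_2^*)$. Concatenating gives an $s \leadsto t$ path of total length at most $\len(p_1^*) + \len(p_2^*) \leq \bdgt(s,t)$, hence feasible, with \upc at most $2L_1$ and \ppc at most $2(1+\zeta)L_2$.

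For (iii), the \upc contribution to the cost of $\mathcal{P}_g$ is bounded by $\sum_{p \in P'} \sigma(p)$, since $\bigcup_{p \in \mathcal{P}_g} p \subseteq \bigcup_{p \in P'} p$. With $|R| = O(n^{3/5}\log n)$ samples we get $|P'| \leq |R|(|S|+|T|) = \tilde{O}(n^{8/5})$, and each path contributes \upc at most $L_1 = \tau/n^{4/5}$, so the total \upc is $\tilde{O}(n^{4/5}\tau)$. For \ppc, each of the $k$ pairs contributes at most $2(1+\zeta)L_2 = O(n^{4/5}\tau/k)$, summing to $O(n^{4/5}\tau)$. Fixing $\zeta$ to any small constant, the whole algorithm consists of polynomially many invocations of the polynomial-time RCSP scheme from Corollary \ref{cor:rcsprational}, so the overall running time is polynomial.
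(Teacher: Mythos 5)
Your proposal is correct and follows essentially the same route as the paper's proof: high-probability hitting of each good pair's local graph via the sampling claim, decomposition of the witness path $p(s,u,t)$ into its two halves to argue that the RCSP sweep recovers sub-paths of no greater length (hence a feasible concatenation with \upc at most $2L_1$ and \ppc at most $2(1+\zeta)L_2$), and the same $O(|R|\cdot L_1\cdot n + k\cdot L_2)$ cost accounting. Your write-up is, if anything, slightly more explicit about the union bound and about why the sweep terminates no later than $len=\len(p_1^*)$.
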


\begin{proof}
   
    If some $u \in R$ was originally in the local graph $G^{s,t}$, then Algorithm \ref{alg:resolve_thick} would have added at least one $s \leadsto u \leadsto t$ path from $G^{s,t}$ that is {\em feasible}. This is because if $u$ was in the local graph of $(s,t)$, then there exists an $s \leadsto u \leadsto t$ path $p$ of \upc less than $L_1$, \ppc less than $L_2$. This path $p(s,u,t)$ has a length   $\res(p(s,u,t)) \leq \bdgt(s,t)$. Let $p(s,u,t)$ be composed of two paths: a $s \leadsto u$ path $p(s,u)$ and a $u \leadsto t$ path $p(u,t)$. Both $p(s,u)$ and $p(u,t)$ will have \upc $\leq L_1$ and \ppc $\leq L_2$. 

    Algorithm \ref{alg:resolve_thick} will add the shortest $s \leadsto u$ path $p_1$ that has \upc $\leq L_1$ and \ppc $\leq L_2$ and this path will have length at most $\res(p(s,u))$. Similarly, Algorithm \ref{alg:resolve_thick} will also add the shortest $u \leadsto t$ path $p_2$ that has \upc $\leq L_1$ and \ppc $\leq L_2$ and this path will have length at most $\res(p(u,t))$. When we combine $p_1$ and $p_2$, we will get a path $p_3$ such that $\res(p_3) = \res(p_1) + \res(p_2) \leq \res(p(u,t)) + \res(p(s,u)) = \res(p(s,u,t)) \leq \bdgt(s,t)$. Further, $p_3$ will have an \upc $\leq 2 L_1$ and \ppc $\leq 2(1+\zeta) L_2$. Fixing $\zeta = 1$, the \ppc of $p_3$ would be $\leq 4 L_2$.
   
    Now we analyze the overall cost of Algorithm \ref{alg:resolve_thick}. The total cost of this procedure would be $O(|R| \cdot L_1  \cdot n  + k\cdot L_2)$. This is because we add one path for every sample from and to every vertex in $S$ and $T$ respectively, and each of these paths is cheaper than $O(L_1)$ to set up initially (we don't add a path otherwise). Further, we only have $k$ units of demand in total (recall that we have uniform demand), so their usage cost is going to be less than $O(L_2\cdot k)$. 

    Plugging in the values for $|R|$, $L_1$ and $L_2$, we can see that the total cost would be $\tO(n^{4/5} \cdot \tau)$. 
\end{proof}

\subsection{Resolving bad pairs} \label{subsec:pws-thin}

For this section, we adapt and expand a proof in \cite{antonakopoulos2009}. We also add some detail for some parts of the proof. We also need an entirely different proof technique (based on \cite{gkl2023}) for some parts of our proof. Before running the following algorithm, we first run the algorithm for good pairs and remove every resolved pair.

Let $P^* = \{P*_{uv}\}$ be an optimal solution (note that since it is an optimal solution, all paths here are feasible). Now let, 
\begin{itemize}
    \item $D_b^1 = \{(u,v)\in D_b \mid \delta(P^*_{uv}) > L_2\}$,
    \item $D_b^2 = \{(u,v)\in D_b \setminus D_b^1 \mid \sigma(P^*_{uv}) >  L_1\}$, and
    \item $D_b^3 = D_b \setminus (D_b^1 \cup D_b^2)$,
\end{itemize}

If $k_b \leq 4 n^{6/5}$, then we can use Corollary \ref{cor:bbs} to resolve all the bad pairs with cost $\leq \tO(n^{3/5+\ep})$ for some $\eps > 0$. 

Thus, we can assume that $k_b > 4 n^{6/5}$. Now, since each $(u,v) \in D_b^1$ has cost at least $L_2 = n^{4/5} \tau /k$, the cost of an optimal solution is $\tau$, and $k< n^2$, we have, $|D_b^1| < 2k /n^{4/5} < 2n^{6/5}$. This also implies that if we ever have a situation where all pairs in $D_b^2$ and $D_b^3$ are resolved, then as $k_b > 4 n^{6/5}$, $|D_b^1| < k_b/2$. Now, we have two cases based on whether $|D_b^2| > |D_b^3|$ or not. 

We define the {\em density} of a set of paths $P$ to be the ratio of the total cost of these paths to the number of pairs settled by those paths. Note that the total cost of a single path is the sum of the \upc and \ppc (since we have uniform demand). When we take sets of paths, we count every edge only once for \upc.

We first see how to efficiently construct a subset $P_1$ of paths with density $\tO(n^{4/5+\eps})\tau/|D|$. Then
we iteratively find paths with that density, remove the pairs corresponding to those paths, and repeat until we resolve all bad pairs. This gives a total cost of $\tO(n^{4/5+\eps})\tau$. We construct $P_1$ by building two other sets $P_2$ and $P_3$ and picking the smaller density of them.

Since $|D_b^1| < k_b/2$, 
when $|D_b^2|> |D_b^3|$, we have $|D_b^2| > k_b/4$.  Similarly, when $|D_b^2| < |D_b^3|$, we have $|D_b^3| > k_b/4$. 

\subsubsection{When $|D_b^2| > |D_b^3|$} \label{se:thin_junc}

We will use \rcjt as a black box for resolving this case. 

The following lemma is a slight variant of a standard result in multiple publications(\cite{berman2013approximation,feldman2012improved,gkl2023}).

\begin{claim} \label{cl:junction_tree_helper}
    If $|D_b^2|/2 \geq |D_b^3|$ (and thus $|D_b^2| \geq k_b/4$), then there exists a \rcjt of density $ O\left( n^{4/5} \cdot \tau/k_b\right)$.
\end{claim}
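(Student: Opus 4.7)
The plan is a weighted-averaging argument to locate a single edge of the optimum that is traversed by a large fraction of $D_b^2$ paths, followed by a junction-tree construction at one of its endpoints whose density matches the desired bound. Since every pair $(s,t) \in D_b^2$ satisfies $\sigma(P^*_{s,t}) > L_1 = \tau/n^{4/5}$, summing over $D_b^2$ yields
\[
\sum_{(s,t) \in D_b^2} \sigma(P^*_{s,t}) \;>\; |D_b^2|\, L_1 \;\ge\; \frac{k_b\,\tau}{4\,n^{4/5}}.
\]
Double counting expresses this same quantity as $\sum_{e \in E^*} \sigma(e)\, f(e)$, where $f(e) := |\{(s,t) \in D_b^2 : e \in P^*_{s,t}\}|$ and $E^* := \bigcup_{(s,t) \in D} P^*_{s,t}$. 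Combined with $\sum_{e \in E^*} \sigma(e) \le \tau$, the weighted pigeonhole principle produces an edge $e^* \in E^*$ with $f(e^*) \ge k_b/(4 n^{4/5})$.

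Let $r$ be an endpoint of $e^*$, and let $H$ be the union of the paths $P^*_{s,t}$ over the $m := f(e^*)$ pairs $(s,t) \in D_b^2$ with $e^* \in P^*_{s,t}$. Every such pair has $P^*_{s,t}$ passing through $r$, so $P^*_{s,t}$ splits into an $s \leadsto r$ subpath $\pi_s$ and an $r \leadsto t$ subpath $\pi_t$, both lying in $H$, with $\ell(\pi_s) + \ell(\pi_t) \le \bdgt(s,t)$ and $\delta(\pi_s) + \delta(\pi_t) = \delta(P^*_{s,t}) \le L_2 = n^{4/5} \tau / k$ (since the pair is in $D_b^2$). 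I will then extract from $H$ an in-arborescence $T_{\text{in}}$ rooted at $r$ spanning the $m$ sources and an out-arborescence $T_{\text{out}}$ rooted at $r$ spanning the $m$ sinks, chosen so that each induced $s \leadsto r \leadsto t$ path retains length $\le \bdgt(s,t)$ and has pay-per-use cost $\delta \le O(L_2)$. Such arborescences can be obtained by a greedy merging of the subpaths $\pi_s, \pi_t$ along the lines of \cite{gkl2023,antonakopoulos2009}; the no-negative-cycle assumption guarantees that all shortest-path structures used during this step are well-defined.

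The resulting junction tree $\cJ := T_{\text{in}} \cup T_{\text{out}}$ is a valid distance-constrained junction tree rooted at $r$ that connects $m \ge k_b/(4 n^{4/5})$ pairs feasibly. Its upfront cost is $\sum_{e \in T_{\text{in}} \cup T_{\text{out}}} \sigma(e) \le \sum_{e \in H} \sigma(e) \le \tau$ because $H \subseteq E^*$, and its pay-per-use cost is $\sum_{p \in \cJ} \delta(p) \le O(m\, L_2)$ by the per-path guarantee together with the uniform-demand assumption. Hence
\[
\frac{\cost(\cJ)}{m} \;\le\; \frac{\tau}{m} + O(L_2) \;\le\; \frac{4\, n^{4/5}\,\tau}{k_b} + O\!\left(\frac{n^{4/5}\,\tau}{k}\right) \;=\; O\!\left(\frac{n^{4/5}\,\tau}{k_b}\right),
\]
using $k \ge k_b$ for the last bound. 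The main obstacle is the arborescence-extraction step: a shortest-length arborescence preserves feasibility but can inflate $\delta$, and a shortest-$\delta$ arborescence does the opposite, so a more delicate simultaneous argument (in the spirit of the RCSP-type construction of Section \ref{sec:rcsp}) is needed to secure both the length feasibility and the per-path $\delta \le O(L_2)$ bound at once.
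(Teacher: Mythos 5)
Your core argument is the same as the paper's: average the \upc over $D_b^2$ against the total budget $\tau$ to find an edge $e^*$ lying on at least $|D_b^2|/n^{4/5} \ge k_b/(4n^{4/5})$ of the optimal paths, root the junction tree at an endpoint $r$ of $e^*$, and take the optimal paths through it. The ``main obstacle'' you flag at the end, however, dissolves under the paper's simpler cost accounting: the junction tree is just the collection of optimal subpaths $\pi_s \cdot \pi_t$ themselves, and its \emph{entire} cost --- both terms of $\cost(\cJ)$, upfront and pay-per-use --- is dominated by the corresponding terms of the optimal solution's objective, giving $\cost(\cJ) \le O(\tau)$ outright. In particular you do not need the per-path bound $\delta(\pi_s)+\delta(\pi_t) \le L_2$ from the definition of $D_b^2$, nor any re-extraction of arborescences with simultaneously controlled length and \ppc; the density bound $O(n^{4/5}\tau/k_b)$ follows directly from $\cost(\cJ)=O(\tau)$ and the count of pairs resolved. (The arborescence condition in Definition \ref{def:rcjt} is treated loosely throughout this literature --- the paper's own footnote notes that a junction tree need not be a tree in a directed graph --- and for this existence claim the union of the optimal paths through $r$ suffices.)
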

\begin{proof}

Observe that $\sum_{(u,v) \in D_b^2} \sigma(P_{uv}^*) > |D_b^2| \cdot \tau / n^{4/5}$. Now by a standard counting argument \cite{antonakopoulos2009,berman2013approximation,feldman2012improved,gkl2023} there must be an edge that belongs to at least $|D_b^2| /n^{4/5}$ of the paths in $P^*$. Now, consider a junction tree rooted at one of the vertices of this edge, and consists of all the paths going through this edge in $P^*$. This junction tree has cost $\leq \tau$ (since its a subgraph of an optimal solution $P^*$)  and can resolve at least $|D_b^2| /n^{4/5}$ pairs. Therefore, since $|D_b^2|/2 \geq k_b/4$, this junction tree will have a density $ \leq O\left( n^{4/5} \cdot \tau/k_b\right)$.
\end{proof}

\begin{lemma} \label{le:thin_costly_final}
    When $|D_b^2| > |D_b^3|$, we can get a set of paths $P_2$ that has density at most $\tO(n^{4/5+\ep} \cdot \tau/k_b)$    
\end{lemma}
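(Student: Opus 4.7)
The plan is to apply the approximation algorithm for \rcjt from Corollary \ref{thm:mdrcjt} to a junction tree whose existence I establish by a counting argument, essentially piggybacking on Claim \ref{cl:junction_tree_helper}. First I would verify that the conclusion of Claim \ref{cl:junction_tree_helper} carries through under the (slightly weaker) hypothesis $|D_b^2| > |D_b^3|$ of this lemma. The earlier case analysis already established that $|D_b^1| < k_b / 2$, so $|D_b^2| + |D_b^3| > k_b / 2$, which combined with $|D_b^2| > |D_b^3|$ forces $|D_b^2| > k_b / 4$. This is precisely the size bound that drives the pigeonhole step in Claim \ref{cl:junction_tree_helper}, so its conclusion still holds: there is a distance-constrained junction tree $\cJ^*$ of cost at most $\tau$ (as it is a subgraph of $\cP^*$) that resolves at least $|D_b^2| / n^{4/5} \ge k_b / (4 n^{4/5})$ pairs, i.e., of density $O(n^{4/5} \tau / k_b)$.

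With this existence guarantee in hand, I would next invoke Corollary \ref{thm:mdrcjt}, which provides a polynomial-time randomized $\tO(k^{\ep'})$-approximation for \rcjt for any constant $\ep' > 0$. Running that algorithm on the current instance (restricted to the bad pairs that remain after Section~\ref{se:thick_negative}) returns, with high probability, a junction tree $\cJ$ of density at most
\[ \tO(k^{\ep'}) \cdot O\!\left( \frac{n^{4/5} \tau}{k_b} \right). \]
Since $k \le n^2$, we have $k^{\ep'} \le n^{2\ep'}$, so taking $\ep' = \ep/2$ in the black-box call yields density $\tO(n^{4/5 + \ep} \tau / k_b)$. Setting $P_2$ to be the collection of $s \leadsto r \leadsto t$ paths in $\cJ$ then proves the lemma.

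The main obstacle is really bookkeeping rather than any new idea. I need to be careful that the counting argument in Claim \ref{cl:junction_tree_helper} truly only requires $|D_b^2| \ge k_b/4$ (rather than the strictly stronger $|D_b^2|/2 \ge |D_b^3|$ that appears in its hypothesis), that the resulting subgraph is a valid distance-constrained junction tree in the sense of Definition \ref{def:rcjt} (which it is, since every constituent $s \leadsto t$ path in $\cP^*$ is feasible), and that the $\tO(k^{\ep'})$ factor from Corollary \ref{thm:mdrcjt} is correctly converted into the $\tO(n^\ep)$ form appearing in the lemma statement. All the algorithmic heavy lifting is delegated to the junction-tree subroutine developed in Section~\ref{sec:jtree}; the proof itself is a short density/approximation composition.
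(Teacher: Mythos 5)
Your proposal is correct and follows essentially the same route as the paper: establish the existence of a density-$O(n^{4/5}\tau/k_b)$ junction tree via the counting argument of Claim \ref{cl:junction_tree_helper} (whose pigeonhole step indeed only needs $|D_b^2| \ge k_b/4$, which the surrounding case analysis already supplies), then invoke Corollary \ref{thm:mdrcjt} to find one within an $\tO(k^{\ep})$ factor. Your extra bookkeeping — rechecking the hypothesis and converting $k^{\ep'} \le n^{2\ep'}$ — is consistent with what the paper does implicitly.
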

\begin{proof}
    From Claim \ref{cl:junction_tree_helper}, there exists a \rcjt of density at most $O(n^{4/5} \cdot \tau/k_b)$. We use Corollary \ref{thm:mdrcjt} to get a \rcjt with density at most $\tO(n^{4/5+\ep} \cdot \tau/k_b)$ and store the paths returned by it in $P_2$.
\end{proof}

\subsubsection{When $|D_b^3| \geq |D_b^2|$} \label{se:thin_lp}

To handle this case, we first build a linear program and solve it. Unfortunately, the solution to the LP does not immediately fulfill our exact needs (although it is close). So, we do some careful processing to turn the solution to fit our exact requirements. This section is based on \cite{antonakopoulos2009,gkl2023}. 

\begin{subequations} \label{lp:thin_pair_original}
\begin{align}
& \min & & \sum_{e \in E}{\sigma(e) \cdot x_e} \label{lp:thin_pair_original2a} \\
& \text{subject to}
& & \sum_{(u,v) \in D_b} y_{uv} \geq k_b/4, \label{lp:thin_poly_1}\\
& & &\sum_{\Pi \ni p \ni e} f_p \leq x_e & \forall (u,v) \in D_b, e \in E,\label{lp:poly_2}\\
& & &\sum_{p \in \Pi(u,v)} f_p \geq y_{u,v} & \forall (u,v) \in D_b, \label{lp:poly_3}\\
& & & \sum_{p \in \Pi(u,v)} \delta(p) f_p \leq \frac{n^{4/5}\tau}{2k} \cdot y_{uv} & \forall (u,v) \in D_b, \label{lp:poly_4}\\
& & & 0 \leq y_{u,v},f_p,x_e \leq 1 & \forall (u,v) \in D_b, p \in \Pi(u,v), e \in E. 
\end{align}
\end{subequations}

For every $(u,v) \in D_b$, let $\Pi(u,v)$ be the set of paths $P$ from $u$ to $v$ in $G$ such that $\sigma(P) \leq \tau/ n^{4/5}$ and $\res(P) \leq \bdgt(u,v)$ (i.e., the distance constraints are satisfied). Also, let $\Pi = \cup_{(u,v) \in D_b} \Pi(u,v)$. The Linear program \eqref{lp:thin_pair_original} attempts to find a cheap and feasible (resource constraint-wise) solution where all the paths are taken from $\Pi$. But the problem here is that this solution could have paths that are expensive in the \ppc $\delta$. We resolve this issue by careful processing based on \cite{antonakopoulos2009}.

Let $(\hat{x},\hat{y},\hat{f})$ be a feasible solution to LP \eqref{lp:thin_pair_original} whose value is within a $(1+\zeta)$ factor of the optimal for any fixed $\zeta > 0$ (see Section \ref{sec:lpsoln} for the method to obtain this solution). 

Because the optimal set of paths for the pairs in $D_b^3$ (i.e., $P_3^* = \{P_{uv}^* \mid (u,v) \in D_b^3\}$) corresponds to a basic feasible solution of LP \eqref{lp:thin_pair_original}, the objective value of $(\hat{x},\hat{y},\hat{f})$ is atmost $(1+\zeta) \cdot \tau$. Now, let $(\check{x},\check{y},\check{f}) = (2\hat{x},\hat{y},2\hat{f})$. Set $\check{f_p} = 0$ for every path $p$ such that $\delta(p) > n^{4/5} \tau /k$. Then we reduce other $\check{f_p}$ values so that $\sum_{p \in \Pi(u,v)} \check{f_p} = y_{u,v} \: \forall (u,v) \in D_b$, and also prune the $x_e$ values such that $\check{x}_e =   \max_{(u,v) \in D_b} \{\sum_{p:e \in P \in \Pi_{(u,v)}} \check{f_p}\}$ for all $e \in E$ (i.e., we prune $x_e$ to make it as small as it can be while ensuring it can handle the necessary flow). This new $(\check{x},\check{y},\check{f})$ is another basic feasible solution to LP \eqref{lp:thin_pair_original}.\nnote{I think this needs to be proved in at least a little bit of detail} 

For now, we have $\sum_{p \in \Pi(u,v)} \check{f_p} = y_{u,v} \: \forall (u,v) \in D_b$; in addition, any path that still has $\check{f_p} > 0$ has both cheap investment and cheap maintenance (i.e., $\sigma(P) \leq \tau/n^{4/5}$ and $\delta(P) \leq n^{4/5}\tau/k$). Furthermore, these paths also satisfy the distance constraint of the demand pair they connect. All we have to do now is round this solution.

\paragraph{Rounding our solution:}

Now we need to round the solution of LP \eqref{lp:thin_pair_original} appropriately to decide which paths we need to include in our final solution. The overall structure of our rounding procedure is similar to that of \cite{gkl2023}. We first round $\check{x}_e$ and use that to create a temporary graph $G_{\text{temp}}$. Then for each $(s,t) \in D$, we find the cheapest (in terms of \ppc) $s \leadsto t$ path $p(s,t)$ that satisfies $\res(p(s,t)) \leq  \bdgt(s,t)$ and has $\sum_{e \in p_{(s,t)}}{\delta(e) \leq L_2}$ (if such a path exists) and add it to the set $P_3$. Note that we can use Claim \ref{cl:horvath_modified_onerational} for this purpose. Then we show that this procedure resolves sufficiently many bad pairs with high probability. 

Note that once we round the edges, we do not worry about the \upc. Since we only need to pay once for \upc, the algorithm does not try to optimize for \upc (that is handled by LP \eqref{lp:thin_pair_original}).

Let $P_3$ be the set of paths obtained by running Algorithm \ref{alg:lp_rounding} on $\{\check{x}_e\}$ and $G_{temp}$ be the graph returned by the same. 

\begin{algorithm}[!htb]
\caption{Thin pair rounding [LP rounding] ($x_e,D$)} \label{alg:lp_rounding}
\begin{algorithmic}[1]
\State{$E'' \gets \phi$ .}

\For{$e \in E$}
\State $\text{ Add } e \text{ to } G_{\text{temp}} \text{ with probability} \min\{n^{4/5}\ln  n  \cdot x_e,1\};$
\EndFor

\For{$(s,t) \in D$}
\State  Find the cheapest path (in terms of \ppc) $p_{(s,t)}$ in  $G_{\text{temp}}$ that fulfills the distance constraints $\bdgt (s,t)$. Add the path to $P_3$ if it has \ppc $\leq L_2$.

\EndFor

\State \Return $P_3,G_{temp}$
\end{algorithmic}
\end{algorithm}

The below lemma is an adaptation of Claim 2.3 from \cite{berman2013approximation,gkl2023}. We only change the constants involved.  

\begin{claim} \label{le:thin_helper_1}
    Let $A \subseteq E$. If Algorithm \ref{alg:lp_rounding} receives a fractional vector $\{\check{x}_e\}$ with non-negative entries satisfying $\sum_{e\in A} \check{x}_e \geq 1/10$, then the probability that Algorithm \ref{alg:lp_rounding} returns a graph $G_{temp}$ that is disjoint from $A$ is $\leq \exp((-1/10) \cdot n^{4/5} \cdot \ln n )$.
\end{claim}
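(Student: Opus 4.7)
The plan is a direct Chernoff-style calculation exploiting independence of the edge sampling in Algorithm \ref{alg:lp_rounding}. First, observe that $G_{\text{temp}}$ being disjoint from $A$ is exactly the event that no edge of $A$ is sampled, and edges are sampled independently, so this probability factors as
\[
\Pr[G_{\text{temp}} \cap A = \emptyset] \;=\; \prod_{e \in A} \bigl(1 - \min\{n^{4/5} \ln n \cdot \check{x}_e,\, 1\}\bigr).
\]
If some $e \in A$ satisfies $n^{4/5}\ln n \cdot \check{x}_e \ge 1$, then the corresponding factor is $0$ and the claim holds trivially, so I may assume $n^{4/5} \ln n \cdot \check{x}_e < 1$ for every $e \in A$.

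Next, I apply the standard inequality $1 - x \le e^{-x}$ termwise and pull the sum out of the exponent:
\[
\prod_{e \in A}\bigl(1 - n^{4/5}\ln n \cdot \check{x}_e\bigr) \;\le\; \prod_{e \in A} e^{-n^{4/5}\ln n\cdot \check{x}_e} \;=\; \exp\!\Bigl(-n^{4/5}\ln n \sum_{e\in A}\check{x}_e\Bigr).
\]
Finally, invoking the hypothesis $\sum_{e \in A}\check{x}_e \ge 1/10$ yields the bound $\exp\!\bigl(-(1/10)\cdot n^{4/5}\ln n\bigr)$ stated in the claim.

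There is essentially no obstacle here; the only mild care needed is the case split on whether some $\check{x}_e$ is so large that its sampling probability is truncated to $1$, which I handle at the outset. Independence across edges is exactly what Algorithm \ref{alg:lp_rounding} provides (each edge is included by an independent coin flip), and the remainder is just $1-x \le e^{-x}$ together with linearity of the exponent. This is the same structural calculation used in \cite{berman2013approximation,gkl2023}; the only change is that the sampling probability here scales with $n^{4/5}\ln n$ rather than the constants used in those works, which is precisely what produces the $n^{4/5}$ in the exponent of the final bound.
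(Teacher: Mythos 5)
Your proof is correct and follows essentially the same route as the paper's: the same case split on whether some edge has its sampling probability truncated to $1$, followed by independence, the inequality $1-x \le e^{-x}$, and the hypothesis $\sum_{e\in A}\check{x}_e \ge 1/10$. No issues.
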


\begin{proof}
    Let $G_{temp} = (V_{temp},E_{temp})$.
    If $A$ does have an edge $e$ which has $\check{x}_e \geq 1/(n^{4/5}\ln n)$, then $e$ is clearly included in $E_{temp}$. 

    If that is not the case, then the probability that none of the edges in $A$ are included in $E_{temp}$ is
    \begin{equation}
        \prod_{e \in A}(1 - n^{4/5} \ln n \cdot \check{x}_e) \leq \exp \left( -\sum_{e \in A}n^{4/5} \ln n \cdot \check{x}_e\right) \leq \exp \left( - \frac{1}{10} n^{4/5} \ln n \right).
        \notag
    \end{equation}
\end{proof}

Just as in \cite{gkl2023,berman2013approximation}, we now define our variant of anti-spanners, which we will call as anti-buy at bulk spanners. Anti-buy at bulk spanners are a useful tool for the following part of our proof. Our definition here needs to be more general for it needs to account for length, and both \ppc as well as \upc. But the proof itself is very similar to \cite{gkl2023}.

\begin{definition} \label{def:anti}
A set $A \subseteq E$ is an anti-buy at bulk spanner for a terminal pair $(s,t) \in E$ if $(V, E \setminus A)$ contains no feasible path $s \leadsto t$ path of \upc at most $L_1$ and \ppc $L_2$. If there is no proper subset of an anti-buy at bulk spanner $A$ for $(s,t)$ which is also an anti-buy at bulk spanner for $(s,t)$, then we say that $A$ is minimal. We use $\mathcal{A}$ to denote the set of all minimal anti-buy at bulk spanners for all bad edges.
\end{definition}

We now bound the number of minimal anti-buy at bulk spanners across all bad pairs. The following claim is from \cite{gkl2023,berman2013approximation} - the proof is only given for the sake of completeness (and because we are dealing with a more general structure here). 

\begin{lemma} \label{le:thin_antispanner_bound}
    Let $\mathcal{A}$ be the set of all minimal anti-spanners for bad pairs. Then $|\mathcal{A}|$ is at most $|D|\cdot 2^{(n/\beta)^2/2}$.
\end{lemma}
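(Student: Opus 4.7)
The plan is to reduce counting minimal anti-buy-at-bulk spanners (across all bad pairs) to counting subsets of edges within each local graph. The key observation is that every edge of a minimal anti-spanner for a bad pair $(s,t)$ must actually lie in the local graph $E^{s,t}$, which is small by the definition of badness.

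First, I would establish the containment: for any bad pair $(s,t) \in D_b$ and any minimal anti-spanner $A \in \mathcal{A}$ for $(s,t)$, every edge $e \in A$ satisfies $e \in E^{s,t}$. The argument is a standard minimality argument. By minimality, $A \setminus \{e\}$ is not an anti-spanner, so there exists a feasible $s \leadsto t$ path $p$ with $\sigma(p) \leq L_1$ and $\delta(p) \leq L_2$ in the graph $(V, E \setminus (A \setminus \{e\}))$. This path $p$ must traverse $e$, since otherwise $p$ would also lie in $(V, E \setminus A)$, contradicting that $A$ is an anti-spanner. Hence $e$ appears on a feasible cheap $s \leadsto t$ path, which by definition of the local graph means $e \in E^{s,t}$.

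Next, I would bound $|E^{s,t}|$ using the fact that $(s,t)$ is bad. By definition, $|V^{s,t}| < n/\beta$, and since $G$ is a directed simple graph, the number of edges in the subgraph induced on $V^{s,t}$ is at most $(n/\beta)^2/2$. Therefore the number of subsets of $E^{s,t}$ is at most $2^{(n/\beta)^2/2}$, and in particular this upper-bounds the number of minimal anti-spanners for $(s,t)$.

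Finally, summing over all bad pairs (of which there are at most $|D|$), I get
\[
|\mathcal{A}| \;\leq\; \sum_{(s,t) \in D_b} 2^{|E^{s,t}|} \;\leq\; |D| \cdot 2^{(n/\beta)^2/2},
\]
as desired. I expect the minimality step to be the only substantive part; the rest is a routine counting argument, and the whole proof mirrors the analogous lemmas in \cite{berman2013approximation,gkl2023}, with the only new content being the adaptation to the generalized notion of ``cheap'' that accommodates both the \upc bound $L_1$, the \ppc bound $L_2$, and the distance feasibility constraint simultaneously.
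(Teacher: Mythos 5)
Your proof is correct and follows essentially the same route as the paper: bound the number of edges in each local graph by $(n/\beta)^2/2$, identify each minimal anti-spanner with a subset of that local graph's edge set, and take a union over the at most $|D|$ bad pairs. In fact, your explicit minimality argument showing that every edge of a minimal anti-spanner lies in $E^{s,t}$ is a step the paper's own proof glosses over (it simply asserts that each anti-spanner ``corresponds to an element of $PS(s,t)$''), so your write-up is, if anything, the more complete of the two.
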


\begin{proof}
    Let $PS(s,t)$ be the power set of all edges in the local graph for a specific bad pair $(s,t)$. Since $(s,t)$ is a bad pair we have at most $n/\beta$ vertices in the local graph. This also means that we have $\leq (n/\beta)^2/2$ edges in the local graph of $(s,t)$. Therefore if $(s,t)$ is a bad pair, then we have $|PS(s,t)| \leq {2^{(n/\beta)^2/2}}$. 
    
    Observe that every anti-buy at bulk spanner for a specific demand pair $(s,t) \in D$ is a set of edges. Therefore it corresponds to an element in $PS(s,t)$. Set $PS_{\text{ bad }} = \bigcup_{(s,t)} PS(s,t)$ where $(s,t) \in D$ are bad pairs. Thus, we can see that, $|\mathcal{A}| \leq |PS_{\text{ bad }}| \leq |D| \cdot {2^{(n/\beta)^2/2}}$ which proves our result.
\end{proof}

The following two lemmas will finally show that the density of the solution obtained after the rounding procedure is large enough.

\begin{lemma} \label{le:thin_helper_3}
    \label{le:berman_thin_settle_rounding}
    With high probability, the set of paths $P_3$ settles every bad pair $(s,t) \in D$ that has $\hat{y}_{s,t} \geq 1/10$. 
\end{lemma}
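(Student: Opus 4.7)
The plan is to prove this via the anti-buy-at-bulk spanner framework, adapted from \cite{berman2013approximation,gkl2023}, but accounting for both \upc and \ppc simultaneously. First I would set up the contrapositive: if a pair $(s,t) \in D_b$ with $\hat{y}_{s,t} \ge 1/10$ is not settled by $P_3$, then $G_{temp}$ contains no feasible $s \leadsto t$ path with $\sigma(\cdot) \le L_1$ and $\delta(\cdot) \le L_2$, which is exactly the statement that $E \setminus E_{temp}$ contains a minimal anti-buy-at-bulk spanner $A \in \mathcal{A}$ for $(s,t)$. Thus the event of failure for $(s,t)$ is contained in the event that some $A \in \mathcal{A}_{s,t}$ is disjoint from $E_{temp}$.

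The key structural step is to bound $\sum_{e \in A} \check{x}_e \ge 1/10$ for every anti-buy-at-bulk spanner $A$ for $(s,t)$. By construction of $\Pi(s,t)$, every path $p$ with $\check{f}_p > 0$ satisfies $\sigma(p) \le L_1$ and $\res(p) \le \bdgt(s,t)$; the pruning that zeroes out flow on paths with $\delta(p) > n^{4/5}\tau/k = L_2$ further guarantees $\delta(p) \le L_2$. Hence every such $p$ is precisely a feasible cheap path and so, by the defining property of an anti-buy-at-bulk spanner, must cross at least one edge in $A$. Summing LP constraint \eqref{lp:poly_2} over $e \in A$ and using this covering property together with the updated flow conservation $\sum_{p \in \Pi(s,t)} \check{f}_p = \check{y}_{s,t} = \hat{y}_{s,t} \ge 1/10$ yields
\[
\sum_{e \in A} \check{x}_e \;\ge\; \sum_{e \in A} \sum_{p \in \Pi(s,t) : p \ni e} \check{f}_p \;\ge\; \sum_{p \in \Pi(s,t)} \check{f}_p \;\ge\; \tfrac{1}{10}.
\]

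With this fractional mass bound in hand, Claim \ref{le:thin_helper_1} immediately gives
\[
\Pr[\,A \cap E_{temp} = \emptyset\,] \;\le\; \exp\!\bigl(-\tfrac{1}{10}\, n^{4/5}\ln n\bigr).
\]
I would then union bound over all minimal anti-buy-at-bulk spanners: with $\beta = n^{3/5}$, Lemma \ref{le:thin_antispanner_bound} gives $|\mathcal{A}| \le |D|\cdot 2^{n^{4/5}/2}$, so the total failure probability is at most $n^2 \cdot 2^{n^{4/5}/2} \cdot \exp(-(n^{4/5}\ln n)/10)$. Because the exponent $(n^{4/5}\ln n)/10$ dominates both $(n^{4/5}/2)\ln 2$ and $2\ln n$ for $n$ large enough, this quantity is $o(1/n)$, giving the ``high probability'' claim.

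The main obstacle I anticipate is the approximation slack introduced by the RCSP subroutine invoked in Algorithm \ref{alg:lp_rounding}: Corollary \ref{cor:rcsprational} returns a path whose \ppc exceeds the optimum by a $(1+\zeta)$ factor, so whenever $G_{temp}$ contains a feasible path of \ppc $\le L_2$ the algorithm may only find one of \ppc $\le (1+\zeta)L_2$. To make the argument go through cleanly, the acceptance threshold in the last line of Algorithm \ref{alg:lp_rounding} should be read as $(1+\zeta)L_2$ with $\zeta$ a sufficiently small constant; this only inflates the overall density of $P_3$ by a constant factor and therefore is absorbed into the $\tilde{O}(n^{4/5+\ep})$ bound. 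Aside from this technicality, no new ideas beyond the two-cost anti-spanner argument should be needed.
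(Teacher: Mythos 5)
Your proposal is correct and follows essentially the same route as the paper's proof: lower-bound the fractional mass $\sum_{e\in A}\check{x}_e \ge 1/10$ on every anti-buy-at-bulk spanner via the covering property and the flow constraint, then apply Claim \ref{le:thin_helper_1} and a union bound over the at most $|D|\cdot 2^{(n/\beta)^2/2}$ minimal anti-spanners from Lemma \ref{le:thin_antispanner_bound}. You actually spell out the covering inequality and the $(1+\zeta)$ slack of the RCSP subroutine more explicitly than the paper does, but these are refinements of, not departures from, the same argument.
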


\begin{proof}
    For every bad pair $(s,t) \in D$ with $\check{y}_{s,t} \geq 1/10$, if $A$ is an anti-spanner for $(s,t)$ then $\sum_{e \in A} \check{x}_e \geq \sum_{P \in \Pi(s,t)} \check{f}_p = \check{y}_{s,t} \geq 1/10$. 

    By Claim \ref{le:thin_helper_1}, the probability that $A$ is disjoint from $G_{temp}$ is at most $\exp(-n^{4/5} \cdot \ln n /10)$. Then, using Lemma \ref{le:thin_antispanner_bound}, we can bound the number of minimal anti-spanners for bad pairs and then if we apply union bound, we have the probability that the graph $G_{temp}$ is disjoint from any anti spanner for a bad pair is at most
    \begin{equation} \label{eq:thin_helper_eqn1}
        \exp\left(-\frac{1}{10}n^{4/5} \cdot \ln n\right) \cdot |D|\cdot 2^{(n/\beta)^2/2}.
    \end{equation}

    Recall that  $|D| \leq n^2$. Since $\beta = n^{3/5}$, we have $(n/\beta)^2 = n^{4/5}$. Thus when we plug in the values, we get,

    \begin{equation} \label{eq:thin_helper_eqn2}
        \exp\left(-\frac{1}{10}\cdot n^{4/5} \cdot \ln n + \ln \left(n^2 \cdot 2 ^ {n ^ {4/5}/2}\right)\right) = \exp\left(-\Theta(n^{4/5} \ln n)\right).
        \notag
    \end{equation}

    This shows that the probability that our graph $G_{temp}$ is disjoint from any anti-spanner for any $(s,t) \in D$ where $(s,t)$ is a bad pair with $\hat{y}_{s,t} \geq 1/10$ is exponentially small. This means that $G_{temp}$ will have a feasible path with \upc $\leq L_1$, \ppc $\leq L_2$ for those $(s,t)$ pairs.This means that with high probability our set of paths resolves every bad pair $(s,t) \in D$ that has  $\hat{y}_{s,t} \geq 1/10$.
\end{proof}

\begin{lemma} \label{le:thin_cheap_final}
    For any $\eps > 0$, when $|D_b^2| \leq |D_b^3|$, with high probability, the density of $P_3$ is at most 
    \begin{equation}
        \tO(n^{4/5} \cdot \tau/k_b).
        \notag
    \end{equation}
\end{lemma}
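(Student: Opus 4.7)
The plan is to control the numerator and denominator of the density of $P_3$ separately. For the numerator, which is the sum of the \upc of the edges in $\cup_{p \in P_3} p$ plus the \ppc of each path in $P_3$, I would first argue that the \upc contribution is at most the \upc of $G_{\text{temp}}$, since $P_3 \subseteq G_{\text{temp}}$. By the independent rounding in Algorithm \ref{alg:lp_rounding} and the fact that $\check{x}_e \leq 2\hat{x}_e$ throughout the processing, the expected \upc of $G_{\text{temp}}$ is bounded by $n^{4/5} \ln n \cdot \sum_{e \in E} \sigma(e) \check{x}_e \leq 2(1+\zeta) n^{4/5} \ln n \cdot \tau = \tO(n^{4/5} \tau)$, where we used that $(\hat{x}, \hat{y}, \hat{f})$ is a $(1+\zeta)$-approximate solution to LP \eqref{lp:thin_pair_original} whose optimum is at most $\tau$ (via $P_3^* = \{P_{uv}^* : (u,v) \in D_b^3\}$). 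The \ppc contribution is at most $|P_3| \cdot L_2 \leq k \cdot (n^{4/5} \tau / k) = n^{4/5} \tau$, since Algorithm \ref{alg:lp_rounding} only adds a path when its \ppc is at most $L_2$. Thus the expected total cost of $P_3$ is $\tO(n^{4/5} \tau)$.

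For the denominator I would use that $\check{y} = \hat{y}$, so the LP constraint \eqref{lp:thin_poly_1} still gives $\sum_{(u,v) \in D_b} \check{y}_{uv} \geq k_b/4$. Combined with $\check{y}_{uv} \leq 1$, a simple averaging argument shows that $\Omega(k_b)$ pairs satisfy $\check{y}_{uv} \geq 1/10$: explicitly, if $N$ is the number of such pairs, then $k_b/4 \leq N + (k_b - N)/10$, which yields $N \geq 3 k_b / 20$. Lemma \ref{le:thin_helper_3} then guarantees that $P_3$ settles every such pair with high probability, so $P_3$ settles at least $\Omega(k_b)$ bad pairs with high probability. Dividing the cost by this count gives the desired density of $\tO(n^{4/5} \tau / k_b)$.

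The main obstacle will be combining an in-expectation cost bound with a high-probability settled-pairs bound to obtain a high-probability density bound. I plan to handle this by either (i) invoking a weighted-Chernoff concentration for the random \upc sum, normalizing by the bound $\sigma(e) \leq \tau$, or (ii) simply running Algorithm \ref{alg:lp_rounding} $O(\log n)$ independent times and returning the output with lowest density, amplifying the constant-probability Markov bound to high probability in a standard way. Beyond this accounting, no genuinely new idea is needed: the argument is a direct extension of the corresponding rounding analyses in \cite{antonakopoulos2009, gkl2023} that now tracks both the \upc and the \ppc of the LP paths simultaneously, with the \ppc contribution automatically bounded because the processing step zeros out $\check{f}_p$ on paths with $\delta(p) > n^{4/5} \tau / k$ and Algorithm \ref{alg:lp_rounding} re-enforces the $L_2$ cap during path selection.
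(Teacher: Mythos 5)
Your proposal is correct and follows essentially the same route as the paper's proof: bound the expected cost of $P_3$ by $\tO(n^{4/5}\tau)$ (upfront cost via the LP objective, pay-per-use cost via the $L_2$ cap times $k$ units of demand), and lower-bound the number of settled pairs by $\Omega(k_b)$ using constraint \eqref{lp:thin_poly_1} together with Lemma \ref{le:thin_helper_3}. Your explicit handling of the expectation-to-high-probability conversion (repetition/Markov or concentration) is a point the paper glosses over, so that extra care is welcome rather than a deviation.
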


\begin{proof}
    Notice that the expected cost of $P_3$ would be at most $n^{4/5 } \ln n \cdot \tau$. To see this note the expected cost due to \upc is at most $n^{4/5} \ln n \cdot \tau$. Furthermore, since we only add paths that have \ppc $\leq L_2 = n^{4/5} \tau /k$ and we only have $k$ units of demand, the total cost due to \ppc $\leq n^{4/5} \tau$.
      
    Now observe that the number of pairs $(s,t) \in D $ for which $\check{y}_{s,t} < 1/10$ is at most $k_b/6$. If that is not the case, then the amount of flow between all pairs is strictly less than $k_b/4$ and that violates constraint \eqref{lp:thin_poly_1}. From Lemma \ref{le:thin_helper_3}, all pairs for which have $\check{y}_{s,t} \geq 1/10$ will be resolved with high probability. This means that the expected density of $P_3$ is upper bounded by
    \begin{equation}
        \frac{n^{4/5 + \eps} \ln n \cdot \tau}{k_b/6} = \frac{6 n^{4/5+ \eps} \ln n \cdot \tau}{k_b} = \frac{\tO(n^{4/5 + \eps} \cdot \tau)}{k_b}.
        \notag
    \end{equation}
\end{proof}

\begin{proof}[Proof of Theorem \ref{thm:bbs45}] 
    With the help of Corollary \ref{cor:rcsprational}, we can settle all good pairs with high probability with cost  $\leq \tO(n^{4/5} \cdot \tau)$. For bad pairs, if we ever have $k_b \leq 4n^{6/5}$, we can use Corollary \ref{cor:bbs} to resolve them with cost $\leq \tO(n^{3/5 +\eps} \cdot \tau)$. Otherwise, we can make two sets of paths $P_2$ and $P_3$ using a junction tree and by rounding the modified solution to LP \eqref{lp:thin_pair_original}  respectively. By Lemmas \ref{le:thin_costly_final} and \ref{le:thin_cheap_final}, we can see that at least one of these sets of paths will have a density $\leq \tO(n^{4/5 + \eps} \cdot \tau /k_b)$. Just take the cheaper among them and keep repeating the process until we can resolve all bad pairs with a high probability. This process has total cost $\leq \tO(n^{4/5 + \ep} \cdot \tau)$.
\end{proof}

\subsection{LP solution} \label{sec:lpsoln}

This section is based on \cite{antonakopoulos2009,gkl2023}. We now describe how to solve LP \eqref{lp:thin_pair_original}. Note that LP \eqref{lp:thin_pair_original} has an exponential number of variables. So, we instead take the dual of this LP (shown in LP \eqref{lp:thin_pair_dual}) that has polynomially many variables and exponentially many constraints. If we have a valid separation oracle we can solve LP \eqref{lp:thin_pair_dual} using the ellipsoid method.

\begin{subequations} \label{lp:thin_pair_dual}
\begin{align}
& \max & & (k_b/4)\cdot \theta - \sum_{(u,v) \in D_b} \zeta_{(u,v)} \\
& \text{subject to}
& & \sum_{(u,v) \in D_b} \alpha_{e,(u,v)} \leq \sigma_e & \forall e \in E, \label{lp:poly_dual_1}\\
& & & \beta_{(u,v)} + \zeta_{(u,v)} \geq \theta + \frac{n^{4/5}\tau}{2k} \cdot \gamma_{(u,v)} & \forall (u,v) \in D_b,\label{lp:poly_dual_2}\\
& & & \beta_{(u,v)} \leq \sum_{e \in p} \alpha_{e,(u,v)} + \delta(p) \cdot \gamma_{(u,v)} & \forall(u,v) \in D_b,\forall p \in \Pi_{(u,v)}, \label{lp:poly_dual_3}\\
& & & \alpha_{e,(u,v)},\beta_{(u,v)},\gamma_{(u,v)},\zeta_{(u,v)},\theta \geq 0 & \forall e \in E,\forall (u,v) \in D_b,\forall p \in \Pi_{(u,v)}. 
\end{align}
\end{subequations}

We only have polynomially many constraints in \eqref{lp:poly_dual_1}, \eqref{lp:poly_dual_2}. Therefore it is straightforward to get a separation oracle for them. However, we have exponentially many constraints in \eqref{lp:poly_dual_3}. \cite{antonakopoulos2009} uses the restricted shortest path from \cite{hassin1992approximation} for this purpose. But because our set of paths $\Pi_{(u,v)}$ also has distance constraints to account for, and we also need to handle negative lengths, we need something better. We use Corollary \ref{cor:rcsprational} as our separation oracle. The following claim is very similar to Claim 2.13 in \cite{gkl2023}.

\begin{claim}
    For a specific $(s,t) \in D$, $2$-\rcsp is a separation oracle for those constraints in equation \eqref{lp:poly_dual_3}
\end{claim}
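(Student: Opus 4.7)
The plan is to reduce the separation problem for \eqref{lp:poly_dual_3} to a $2$-RCSP instance, one per pair $(u,v) \in D_b$. Fix a candidate dual solution $(\alpha,\beta,\gamma,\zeta,\theta)$. A constraint of type \eqref{lp:poly_dual_3} for $(u,v)$ is violated if and only if there exists a path $p \in \Pi(u,v)$ with
\[
    \sum_{e \in p} \alpha_{e,(u,v)} + \gamma_{(u,v)} \cdot \delta(p) < \beta_{(u,v)}.
\]
Since $\beta_{(u,v)}$ is a fixed number once the dual solution is given, the only task of the oracle is to minimize the left-hand side over $\Pi(u,v)$ and compare to $\beta_{(u,v)}$.

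First I would define the $2$-RCSP instance for $(u,v)$ as follows: use the same graph $G$, source $u$ and sink $v$; set edge cost $c(e) := \alpha_{e,(u,v)} + \gamma_{(u,v)} \cdot \delta(e)$, which is non-negative since all dual variables are non-negative and $\delta(e) \ge 0$; set $w_1(e) := \ell_e$ with budget $L_1 := \bdgt(u,v)$ to encode the distance requirement defining $\Pi(u,v)$; and set $w_2(e) := \sigma(e)$ with budget $L_2 := \tau/n^{4/5}$ to encode the cheap-investment requirement. A feasible path in this $2$-RCSP instance is exactly a path in $\Pi(u,v)$, and its $2$-RCSP cost is exactly the right-hand side of \eqref{lp:poly_dual_3}. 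Hence a minimum-cost feasible path either certifies that no constraint for $(u,v)$ is violated (when $c(p) \ge \beta_{(u,v)}$), or exhibits a violated constraint (when $c(p) < \beta_{(u,v)}$).

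Next I would invoke Corollary~\ref{cor:rcsprational}: resource $1$ is $\ell_e \in [\poly(n)]_{\pm}$ (integer, no negative cycles by assumption), and resource $2$ is $\sigma(e) \in \Q_{\ge 0}$, so the corollary applies with the rational resource treated in the $(1+\zeta)$-approximate slot and the integral resource treated exactly. This yields, in time polynomial in the input size and $1/\zeta$, a path $p$ whose cost is at most the true optimum and which satisfies $\sum_{e \in p} \ell_e \le \bdgt(u,v)$ exactly and $\sigma(p) \le (1+\zeta) \tau/n^{4/5}$. We loop over $(u,v) \in D_b$ and return any $(u,v)$ for which $c(p) < \beta_{(u,v)}$.

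The main subtlety to handle is that, because of the $(1+\zeta)$ slack on the $\sigma$-budget, the oracle is not a strict separation oracle for \eqref{lp:thin_pair_dual}; it is instead a separation oracle for a relaxation in which $\Pi(u,v)$ is replaced by the larger family of $u \leadsto v$ paths with $\sigma(p) \le (1+\zeta)\tau/n^{4/5}$ and $\res(p) \le \bdgt(u,v)$. This is handled exactly as in the preceding paragraph's LP analysis and in \cite{antonakopoulos2009,gkl2023}: running the ellipsoid method with this approximate oracle produces a dual solution whose value is at least the true optimum minus a $(1+\zeta)$-factor loss, which translates by LP duality into a $(1+\zeta)$-approximate primal solution on the slightly enlarged $\Pi(u,v)$. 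Taking $\zeta$ to be an arbitrarily small constant preserves the $\tO(n^{4/5+\ep})$ guarantee stated in Theorem~\ref{thm:bbs45}, so this slack is absorbed at no asymptotic cost.
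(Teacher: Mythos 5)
Your reduction is correct and the overall strategy is the same as the paper's: separate the exponentially many constraints \eqref{lp:poly_dual_3} by solving one $2$-\rcsp instance per pair. The difference is in which quantity you make the objective. You set the $2$-RCSP cost to the dual-constraint value $\sum_{e\in p}\alpha_{e,(u,v)}+\gamma_{(u,v)}\delta(e)$ and put the length and the \upc $\sigma$ into the two resource slots (with $\sigma$ as the rational, $(1+\zeta)$-relaxed resource of Corollary \ref{cor:rcsprational}); the paper instead minimizes the \upc $\sigma(p)$ subject to the dual-constraint value being below $\beta_{(u,v)}$ and the length being within budget, and then tests whether the optimum is at most $L_1$. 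Both arrangements are valid separation oracles, but they place the unavoidable $(1+\zeta)$ slack in different places: the paper's choice lands the slack on the dual-constraint value, which is exactly the relaxation written into the approximate dual LP \eqref{lp:thin_pair_approximate_dual}, so the downstream pruning and rounding go through verbatim. Your choice lands the slack on the $\sigma$-budget, so the oracle certifies optimality only over the enlarged family of paths with $\sigma(p)\le(1+\zeta)L_1$; as you note, this is harmless, but to be fully rigorous you would also need to propagate the factor $(1+\zeta)$ into the definition of cheap investment and of anti-buy-at-bulk spanners (Definition \ref{def:anti}) so that the anti-spanner counting argument still applies to every path carrying positive flow. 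That is a constant-factor bookkeeping change, not a gap, so your proof stands; it is just slightly less aligned with the approximate dual the paper actually solves.
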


\begin{proof}
 The first constraint can check if $\beta_{(u,v)} > \sum_{e \in p} \alpha_{e,(u,v)} + \delta(p) \cdot \gamma_{(u,v)}$. 
 We can use the second resource constraint in  $2$-RCSP to ensure that the distance constraint for $(s,t)$ is satisfied. We can now try to find a minimum cost $s \leadsto t$ path in this instance of $2$-\rcsp. If the \upc obtained when we meet these constraints is less than $L_1$, then we have a violating constraint and if not we do not have one.

\end{proof}

Now, we see an approximate version of LP \eqref{lp:thin_pair_dual}.

\begin{subequations} \label{lp:thin_pair_approximate_dual}
\begin{align}
& \max & & (k_b/4)\cdot \theta - \sum_{(u,v) \in D_b} \zeta_{(u,v)} \\
& \text{subject to}
& & \sum_{(u,v) \in D_b} \alpha_{e,(u,v)} \leq \sigma_e & \forall e \in E, \label{lp:poly_dual_approximate_1}\\
& & & \beta_{(u,v)} + \zeta_{(u,v)} \geq \theta + \frac{n^{4/5}\tau}{2k} \cdot \gamma_{(u,v)} & \forall (u,v) \in D_b,\label{lp:poly_dual_approximate_2}\\
& & & \beta_{(u,v)} (1+\zeta) \leq \sum_{e \in p} \alpha_{e,(u,v)} + \delta(p) \cdot \gamma_{(u,v)} & \forall(u,v) \in D_b,\forall p \in \Pi_{(u,v)}, \label{lp:poly_dual_approximate_3}\\
& & & \alpha_{e,(u,v)},\beta_{(u,v)},\gamma_{(u,v)},\zeta_{(u,v)},\theta \geq 0 & \forall e \in E,\forall (u,v) \in D_b,\forall p \in \Pi_{(u,v)}. 
\end{align}
\end{subequations}

Since our $\delta(e)$ values are rational and our $\ell_e$ values are all integers, we can just use Corollary \ref{cl:horvath_modified_onerational} as a separation oracle for the exponentially many constraints in LP \eqref{lp:thin_pair_approximate_dual} and exactly solve it. The value of any solution we obtain this way would be $\leq (1+\zeta) \cdot \opt$ where $\opt$ is the optimal value of LP \eqref{lp:thin_pair_original} (see \cite{gkl2023} for details).

\section{Approximation for Buy-at-Bulk Spanners in Terms of $k$} \label{sec:spannerk}

This section is dedicated to proving the following theorem. Recall that in \eqref{def:eta}, \eqref{def:xi}, and \eqref{def:sign}:
\begin{equation*}
    \eta:= \frac{|\min\{\min_{e \in E}\{\ell_e\},0\}|}{\min_{(s,t) \in D} \{|\bdgt(s,t)|\}},
    \xi:= \frac{\max_{(s,t) \in D} \{|\bdgt(s,t)|\}}{\min_{(s,t) \in D} \{|\bdgt(s,t)|\}}, \text{ and }
\siign(x) = \begin{cases} 
                 -1  & \text{if } x < 0, \\
                 0 & \text{if } x = 0, \\
                 1 & \text{if } x > 0.
            \end{cases}
\end{equation*}

\thmbbsr*

\begin{proof}

We first introduce the notion of $\theta$-relaxed distance-constrained junction tree solution.

\begin{definition}
A \emph{$\theta$-relaxed distance-constrained junction tree solution} is a collection of distance-constrained junction trees rooted at different vertices, that satisfies \eqref{eq:theta-relaxed} for all $(s,t) \in D$. These junction trees are called \emph{$\theta$-relaxed distance-constrained junction trees}.
\end{definition}

We construct a $\theta$-relaxed distance-constrained junction tree solution and compare its objective with the optimal $\theta$-relaxed distance-constrained junction tree solution with objective value $\opt_{junc}$.

We show the existence of an $\beta$-approximate solution consisting of $\theta$-relaxed distance-constrained junction trees. Here, $\beta=O(\sqrt{k})$ for \bbs and $\beta=1$ for \ssbbs.

Let $\opt$ denote the cost of the optimal solution where the distance constraints are strict, $\opt_{\theta}$ denote the cost of the optimal solution where the distance constraints are relaxed as \eqref{eq:theta-relaxed}, and $\beta$ denote the ratio between $\opt_{junc}$ and $\opt_\theta$. Clearly, $\opt_\theta \le \opt$ because the distance constraints for $\opt$ is stricter. It suffices to show (constructively) that $\beta=O(\sqrt{k})$ for \bbs and $\beta=1$ for \ssbbs because Theorem \ref{thm:mdrcjttheta} implies the existence of an $\tO(\beta k^\ep)$-approximation algorithm in $\poly(n,1/\theta,\eta,\xi)$-time.

To show that $\beta=1$ for \ssbbs, let $H$ be an optimal solution. We observe that $H$ itself is a $\theta$-relaxed distance-constrained junction tree rooted at the source $s$ that is connected to all the $k$ sinks, so $\beta=1$.

To show that $\beta=O(\sqrt{k})$ for \bbs, we use a density argument via a greedy procedure which implies an $O(\sqrt{k})$-approximate $\theta$-relaxed distance-constrained junction tree solution. We recall that the density of a $\theta$-relaxed distance-constrained junction tree is its cost divided by the number of terminal pairs that it connects while satisfying \eqref{eq:theta-relaxed}.

Intuitively, we are interested in finding low-density $\theta$-relaxed distance-constrained junction trees. We show that there always exists a $\theta$-relaxed distance-constrained junction tree with density at most an $O(\sqrt{k})$ factor of the optimal density. The proof of Lemma~\ref{lem:sqrt-k-den} closely follows the one for the directed Steiner network problem in \cite{chekuri2011set}, pairwise spanners \cite{grigorescu2021online}, and weighted spanners \cite{gkl2023}, by considering whether there is a \emph{heavy} vertex that lies on $s \leadsto t$ paths for $(s,t) \in D$ or there is a simple path with low density. The case analysis also holds with $\theta$-relaxed distance constraints.

\begin{restatable}{lemma}{lemsqrtkden} \label{lem:sqrt-k-den}
There exists a $\theta$-relaxed distance-constrained junction tree $\cJ$ with density at most $\opt_\theta / \sqrt{k}$.
\end{restatable}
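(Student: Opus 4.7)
The plan is to use the standard heavy-vertex dichotomy originally developed for directed Steiner forest in \cite{chekuri2011set} and later adapted for pairwise/weighted spanners in \cite{grigorescu2021online,gkl2023}, modified to handle the two-metric buy-at-bulk cost and the $\theta$-relaxed distance constraints. After the standard bucketing reduction from arbitrary to uniform unit demands (absorbed into the $\tO(\cdot)$ in the main theorem), fix an optimal $\theta$-relaxed solution $\cP^* = \{p^*(s,t)\}_{(s,t) \in D}$ of cost $\opt_\theta$, and for each edge $e \in E$ set $n_e := |\{(s,t) \in D : e \in p^*(s,t)\}|$ so that
\[
  \opt_\theta \;=\; \sum_{e \in E} \sigma(e)\,\mathbb{I}[n_e \ge 1] \;+\; \sum_{e \in E} \delta(e)\, n_e.
\]
Split into two cases depending on whether some vertex lies on at least $\sqrt{k}$ of the optimal paths.

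In the first case, some $r \in V$ lies on $p^*(s,t)$ for at least $\sqrt{k}$ pairs; let $D_r \subseteq D$ be that set. Define $\cJ := \{p^*(s,t)\}_{(s,t) \in D_r}$, splitting each path at $r$ into its $s \leadsto r$ prefix and $r \leadsto t$ suffix; these assemble into an in-arborescence and an out-arborescence rooted at $r$ in the standard way (shared prefixes and suffixes collapse). Each constituent path inherits $\theta$-relaxed length feasibility from $\cP^*$. Because the edges used by $\cJ$ are a subset of those used by $\cP^*$ and the per-edge pay-per-use load is pointwise bounded by $n_e$, we obtain $\cost(\cJ) \le \opt_\theta$, giving density at most $\opt_\theta/\sqrt{k}$.

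In the second case every edge has $n_e < \sqrt{k}$, and I produce a single cheap feasible $s^\star \leadsto t^\star$ path from $\cP^*$, which is a degenerate junction tree rooted at $s^\star$ connecting one pair. Double-counting the per-pair cost by edge yields
\[
  \sum_{(s,t) \in D} \bigl[\sigma(p^*(s,t)) + \delta(p^*(s,t))\bigr] \;=\; \sum_{e \in E} \sigma(e)\, n_e + \sum_{e \in E} \delta(e)\, n_e \;\le\; \sqrt{k}\cdot\opt_\theta + \opt_\theta,
\]
using the heavy-edge assumption $n_e < \sqrt{k}$ together with $\sum_e \sigma(e)\,\mathbb{I}[n_e \ge 1] \le \opt_\theta$ on the upfront term, and $\sum_e \delta(e)\, n_e \le \opt_\theta$ directly on the pay-per-use term. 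Averaging over the $k$ pairs, some pair $(s^\star,t^\star)$ has single-pair path cost $O(\opt_\theta/\sqrt{k})$, so the singleton $\cJ := \{p^*(s^\star,t^\star)\}$ rooted at $s^\star$ achieves the claimed density bound up to the constant absorbed in the statement.

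The main subtlety, rather than a genuine obstacle, is separating the two cost components cleanly in the double-counting step: the upfront cost needs the heavy-edge threshold to pay the factor of $\sqrt{k}$, whereas the pay-per-use cost is already aggregated across pairs inside $\opt_\theta$ and incurs no extra factor. A second minor point is verifying that a subcollection of $\cP^*$ forms a valid junction tree under Definition \ref{def:rrcjt}, which is immediate because shared prefixes (resp.\ suffixes) coalesce into a single in-arborescence (resp.\ out-arborescence) at $r$. The $\theta$-relaxation enters only trivially since any sub-path structure of $\cP^*$ inherits the relaxed distance budget by construction.
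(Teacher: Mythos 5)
Your proof is correct and follows essentially the same route as the paper: the same heavy-vertex dichotomy, the same union-of-optimal-paths junction tree in the first case, and in the second case your explicit double-counting $\sum_e \sigma(e) n_e + \sum_e \delta(e) n_e \le \sqrt{k}\cdot\opt_\theta$ is just the paper's "duplicate each edge $\sqrt{k}$ times" argument unpacked (note $\sqrt{k}A + B \le \sqrt{k}(A+B)$ even removes your extra additive $\opt_\theta$, recovering the exact bound $\opt_\theta/\sqrt{k}$ rather than one up to a constant). No gaps.
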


\begin{proof}
Let $\{p^*(s,t)\}_{(s,t) \in D}$ (a collection of $s \leadsto t$ paths) be the optimal \bbs solution with cost $\opt_\theta$ while considering \eqref{eq:theta-relaxed}. The proof proceeds by considering the following two cases: 1) there exists a vertex $r \in V$ that belongs to at least $\sqrt{k}$ $s \leadsto t$ paths that satisfy \eqref{eq:theta-relaxed} for distinct $(s,t)$, and 2) there is no such vertex $r \in V$.

For the first case, we consider the union of the $s \leadsto t$ paths, each satisfying its relaxed distance constraint \eqref{eq:theta-relaxed}, that passes through $r$. This forms a subgraph in $\{e \mid e \in p^(s,t), (s,t) \in D\}$ which contains an in-arborescence and an out-arborescence both rooted at $r$, whose union forms a $\theta$-relaxed distance-constrained junction tree. This distance-constrained junction tree has cost at most $\opt_\theta$ and connects at least $\sqrt{k}$ terminal pairs, so its density is at most $\opt_\theta / \sqrt{k}$.

For the second case, each vertex $r \in V$ appears in at most $\sqrt{k}$ $s \leadsto t$ paths in $\{p^*(s,t)\}_{(s,t) \in D}$. More specifically, each edge $e \in E$ also appears in at most $\sqrt{k}$ $s \leadsto t$ paths in $G$. By creating $\sqrt{k}$ copies of each edge with the same cost $\sigma$ and $\delta$, all terminal pairs can be connected by edge-disjoint paths. Since the overall duplicate cost is at most $\sqrt{k} \cdot \opt_\theta$, at least one of these paths has cost at most $\sqrt{k} \cdot \opt_\theta / k $. This path constitutes a distance-constrained junction tree whose density is at most $\opt_\theta / \sqrt{k}$.
\end{proof}

Consider an iterative procedure that finds a minimum density $\theta$-relaxed distance-constrained junction tree and continues on the remaining disconnected terminal pairs. Suppose there are $t$ iterations, and after iteration $j \in [t]$, there are $n_j$ disconnected terminal pairs. For notation convenience, let $n_0 = k$ and $n_t = 0$. After each iteration, the minimum cost for connecting the remaining terminal pairs in the remaining graph is at most $\opt_\theta \le \opt$, so the total cost of this procedure is upper-bounded by
\[
\sum_{j=1}^t \frac{(n_{j-1} - n_j)\opt}{\sqrt{n_{j-1}}} \leq \sum_{i=1}^k \frac{\opt}{\sqrt{i}} \leq \int_1^{k+1} \frac{\opt}{\sqrt{x}} dx = 2 \opt (\sqrt{k+1} - 1) = O(\sqrt{k}) \opt\]
where the first inequality uses the upper bound by considering the worst case when only one terminal pair is removed in each iteration of the procedure.
\end{proof}

\section{Minimum Density Distance-Constrained Junction Trees} \label{sec:jtree}

Recall Definition \ref{def:theta_junc_bbs_main}.

\defthetajuncbbs*

In this section, we prove Theorem \ref{thm:mdrcjttheta}.

\thmmdrcjttheta*

    \subsection{Proof of Theorem \ref{thm:mdrcjttheta}}
    
    \subsubsection{High-level Idea}
       
    We do the following procedure using every possible root $r \in V$ and then take whichever case has the minimum density among all the possible roots. 

    First, we scale all the edge lengths to restrict the number of values that any path length could have. We can do this because of the leeway allowed by $\theta$. Then, we build a layered graph to turn the scaled distance constraints into connectivity constraints. Then we eliminate the pay-per-use costs by turning them into one-time costs. To do this we use the variant of the height reduction lemma presented in \cite{cekp} and build another layered graph with much fewer layers which we transform again into a tree-like graph. We then exploit the tree-like structure of this newly reduced graph to eliminate the pay-per-use costs. This gives rise to an instance of the \mslc problem. Finally, we can solve the instance of \mslc (while keeping the distance constraints in mind)  by using the \gsf problem using the approach presented in \cite{chlamtavc2020approximating}.

    At a high level, we follow the overall flow of the proof of Theorem 5.1 in \cite{chlamtavc2020approximating} which is in turn based on the proof structure in \cite{chekuri2011set}. Several changes are required in this approach because we have to deal not only with negative numbers but also fractional lengths. \cite{chlamtavc2020approximating} only handles unit lengths. \cite{gkl2023} generalizes it to positive integers that are polynomial in $n$. Dealing with fractional lengths (that don't even need to be polynomial in $n$ when positive) requires us to go beyond the techniques in \cite{gkl2023}. And we also have to deal with \ppc which has never been considered with distance constraints. 

\subsubsection{Scaling the weights}

Recall that $\siign(x)$ is defined as follows:

\begin{equation*}
    \siign(x) = \begin{cases} 
                 -1  & \text{if } x < 0, \\
                 0 & \text{if } x = 0, \\
                 1 & \text{if } x > 0.
            \end{cases}
\end{equation*}

\begin{definition}
    We say that a $s \leadsto t$ path $p_{s,t}$ is {\em feasible} if the length of the path is at most $\bdgt(s,t)$.
    
\end{definition}

\begin{definition}
    
    We slightly abuse notation and say that $p_{s,t}$ is {\em $\theta$-feasible} if the length of $p_{s,t}$ is at most $\bdgt(s,t) \cdot (1+ \theta \cdot \siign(\bdgt(s,t)))$.

\end{definition}

Let the length of $P_{s,t}$ be denoted by $\ell(P_{s,t})$; we also use $\res(P_{s,t})$ for the same. Let $\ell_{max} =  \max_{(s,t) \in D}| {\bdgt(s,t)} |$ and $\ell_{min} = \min_{(s,t) \in D}  | {\bdgt(s,t)} |$. Also, let $\lenlow = \min_{e \in E} \ell_e \text{ and } \lenup = \max_{e \in E} \ell_e$. Let $\Delta = \theta \cdot (\ell_{min}) /(n-1)$: intuitively $\Delta$ is a measure of the level of precision we need for the edge lengths. Now, we scale all edge lengths in the following way: for any edge $e \in E$, we set $\Bar{\ell}_{e} = d_{e} \cdot \Delta$ where $d_{e}$ is some integer which ensures $(d_{e} - 1) \cdot \Delta < \ell_e \leq d_{e} \cdot \Delta$ is true. We call this new graph with the scaled edge lengths as the scaled graph $\bar{G}$. $\bar{G}$ has the vertex set $V$ and edge set $E$, but the edge lengths in $\bar{G}$ are set to $\bar{\ell}_e$. The \upc and \ppc for the edges in $\bar{G}$ are inherited from the corresponding edges in $G$.

In the following discussion, we are going to slightly abuse notation and use the same variable to denote a path in both the original and scaled graph. The idea is that the scaled version of a path from the original graph is another path with the corresponding sequence of vertices. The only thing that will change is the function used to calculate the length of the path in the original and scaled graph.

Let $\Bar{\ell}(P) = \sum_{e \in P} \Bar{\ell}_{e}$ (recall that ${l}(P) = \sum_{e \in P} \ell_e$) for any path $P$. For any $s \leadsto t \: | (s,t) \in D$ path $P_{(s,t)}$ we have, 

\begin{align} 
     \Bar{\ell}(P_{s,t}) = \sum_{e \in P_{s,t}} \Bar{\ell}_{e} \leq \sum_{e \in P_{s,t}} (\ell_e +\Delta) \leq \sum_{e \in P_{s,t}} (\ell_e) +(n-1) \cdot (\Delta) \\
\end{align}

Thus, we have,
\begin{align} \label{eq:horvathjunc_inequality}
     \Bar{\ell}(P_{s,t}) \leq \ell(P_{s,t}) + \theta \cdot \ell_{min} \leq \ell(P_{s,t}) + \theta \cdot  \bdgt(s,t) \cdot \siign(\bdgt(s,t))
\end{align}

Furthermore, since $\ell_e \leq \bar{\ell}_e$, we have,
\begin{align} \label{eq:horvathjunc_inequality2}
     \ell(P_{s,t}) \leq \Bar{\ell}(P_{s,t})
\end{align}

Roughly speaking, \eqref{eq:horvathjunc_inequality} and \eqref{eq:horvathjunc_inequality2} tell us that the lengths of the scaled paths are close enough to the lengths of their respective original paths. 

More formally, we have the following claim:

\begin{claim} \label{cl:junctree_scalingfeasible}
     If $P_{s,t}$ is a feasible solution for $(s,t) \in D$ (i.e., $\ell(P_{s,t}) \leq \bdgt(s,t)$), then $P_{s,t}$ is a $\theta$-feasible solution when we scale the weights (i.e., $\Bar{\ell}(P_{s,t}) \leq (1+\theta \cdot \siign(\bdgt(s,t))) \cdot \bdgt(s,t)$).

     Furthermore, if ${P_{s,t}}$ is a $\theta$-feasible solution in the scaled graph, then $\ell(P_{s,t}) \leq (1+\theta \cdot \siign (
     \bdgt(s,t))) \cdot \bdgt(s,t)$ in the unscaled/original graph.
\end{claim}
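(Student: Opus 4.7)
The proof proposal is that this claim follows essentially immediately from the two inequalities \eqref{eq:horvathjunc_inequality} and \eqref{eq:horvathjunc_inequality2} that were just established right before the claim statement, so the plan is to simply chain them with the hypotheses of each part.

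For the first direction, I would start from \eqref{eq:horvathjunc_inequality}, namely $\bar\ell(P_{s,t}) \le \ell(P_{s,t}) + \theta\,\bdgt(s,t)\,\siign(\bdgt(s,t))$, and substitute the feasibility hypothesis $\ell(P_{s,t}) \le \bdgt(s,t)$ on the right-hand side. This yields $\bar\ell(P_{s,t}) \le \bdgt(s,t) + \theta\,\bdgt(s,t)\,\siign(\bdgt(s,t)) = (1+\theta\,\siign(\bdgt(s,t)))\,\bdgt(s,t)$, which is precisely $\theta$-feasibility in the scaled graph.

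For the second direction, I would just apply \eqref{eq:horvathjunc_inequality2}, namely $\ell(P_{s,t}) \le \bar\ell(P_{s,t})$, and chain it with the $\theta$-feasibility hypothesis $\bar\ell(P_{s,t}) \le (1+\theta\,\siign(\bdgt(s,t)))\,\bdgt(s,t)$ to immediately conclude.

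The only subtlety that deserves a sentence of commentary is that both assertions must hold uniformly for $\bdgt(s,t)>0$ and $\bdgt(s,t)<0$; this is exactly why the $\siign$ factor appears in the target bound. In the positive-budget case, $\theta$-feasibility relaxes the constraint to $(1+\theta)\bdgt(s,t)$; in the negative-budget case it tightens to $(1-\theta)\bdgt(s,t)$, but since $\bdgt(s,t)<0$ this is again a looser upper bound than $\bdgt(s,t)$ itself. The key inequality \eqref{eq:horvathjunc_inequality} already packages both cases correctly through $\bdgt(s,t)\,\siign(\bdgt(s,t)) = |\bdgt(s,t)| \ge \ell_{\min}$, so no additional case split is required. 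I do not anticipate any real obstacle here; the entire claim is a bookkeeping step that makes explicit the translation between feasibility in $G$ and $\theta$-feasibility in $\bar G$, which will be used in subsequent sections to justify that running the algorithm on $\bar G$ produces a solution whose original-graph length still satisfies the relaxed distance constraint.
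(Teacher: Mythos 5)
Your proposal is correct and matches the paper's own proof exactly: the first direction chains \eqref{eq:horvathjunc_inequality} with the feasibility hypothesis $\ell(P_{s,t})\le\bdgt(s,t)$, and the second direction chains \eqref{eq:horvathjunc_inequality2} with $\theta$-feasibility in $\bar G$. Your added remark about why the $\siign$ factor handles both the positive- and negative-budget cases uniformly is accurate and consistent with how the paper packages this into \eqref{eq:horvathjunc_inequality}.
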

\begin{proof}

      If $P_{s,t}$ is a feasible solution for $(s,t) \in D$, then $\ell(P_{s,t}) \leq \bdgt(s,t)$. Thus, using \eqref{eq:horvathjunc_inequality}, we can see that $\bar{\ell}(P) \leq (1+\theta \cdot \siign(\bdgt(s,t))) \cdot  \bdgt(s,t)$. Thus, $P_{s,t}$ is a $\theta$-feasible solution for the scaled version too. 

      Now, let us look in the opposite direction. If $P_{s,t}$ is a $\theta$-feasible solution in the scaled graph, then $\bar{\ell}(P_{s,t}) \leq (1+\theta \cdot \siign(\bdgt(s,t))) \cdot  \bdgt(s,t)$. Then using \eqref{eq:horvathjunc_inequality2}, we can see that,

      \begin{align} 
        \ell(P_{s,t}) \leq \Bar{\ell}(P_{s,t}) \leq (1+\theta \cdot \siign(\bdgt(s,t))) \cdot  \bdgt(s,t)
    \end{align}
     which proves our Claim.
\end{proof}

Next, we prove another claim that compares the cost of a partial solution in $\bar{G}$ with the cost of a partial solution in $G$.

\begin{claim} \label{cl:junctree_scalingoptimal}
    For any $f > 0$, and set of terminal pairs $D' \subseteq D$, if there exists a set of paths $\{p_1(s,t)\}_{(s,t) \in D'}$ in $G$ of total cost (both \upc and \ppc) $\leq f$ containing a path of length at most $\bdgt(s,t)$ from $s$ to $t$ for every $(s,t) \in D'$ then there exists a set of paths $\{p_2(s,t)\}_{(s,t) \in D'}$ in $\bar{G}$ of total cost (both \upc and \ppc) $\leq f$ containing a path of length at most $\bdgt(s,t) (1+\theta \cdot \siign(\bdgt(s,t)))$ from $s$ to $t$ for every $(s,t) \in D'$.

    In addition, for any $f > 0$, if there exists a set of paths $\{p_1(s,t)\}_{(s,t) \in D'}$ in $\bar{G}$ of total cost (both \upc and \ppc) $\leq f$ containing a path of length at most $\bdgt(s,t) (1+\theta \cdot \siign(\bdgt(s,t)))$ from $s$ to $t$ for every $(s,t) \in D'$ then there exists a set of paths $\{p_2(s,t)\}_{(s,t) \in D'}$ in $G$ of total cost (both \upc and \ppc) $\leq f$ containing a path of length at most $\bdgt(s,t)(1+\theta \cdot \siign(\bdgt(s,t)))$ from $s$ to $t$ for every $(s,t) \in D'$.
\end{claim}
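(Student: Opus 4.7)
The plan is to apply Claim \ref{cl:junctree_scalingfeasible} path-by-path, combined with the fact that $G$ and $\bar G$ share the same vertex set, edge set, \upc function $\sigma$, and \ppc function $\delta$: only the length function differs. In particular, for any collection of paths the cost given by \eqref{obj:2m} is the same whether we view the underlying edges as lying in $G$ or in $\bar G$, since both the union-based \upc term and the per-demand \ppc term depend only on the edge set and the demands, not on the lengths. So cost budgets transfer verbatim in both directions, and the only thing to check is the distance condition.

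For both directions I would simply take $p_2(s,t) := p_1(s,t)$ for every $(s,t)\in D'$. For the first direction, starting from $\{p_1(s,t)\}$ in $G$ with $\ell(p_1(s,t)) \le \bdgt(s,t)$, the total cost in $\bar G$ equals the total cost in $G$ by the observation above, so it is at most $f$. For the distance requirement, the first half of Claim \ref{cl:junctree_scalingfeasible} applied to each $p_2(s,t) = p_1(s,t)$ immediately gives
\[
\bar\ell(p_2(s,t)) \;\le\; (1 + \theta\,\siign(\bdgt(s,t)))\,\bdgt(s,t),
\]
which is the desired $\theta$-feasibility in $\bar G$.

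For the second direction, given a $\theta$-feasible collection $\{p_1(s,t)\}$ in $\bar G$ of cost $\le f$, the same cost-invariance observation gives total cost $\le f$ in $G$. For the distance requirement, the second half of Claim \ref{cl:junctree_scalingfeasible} says that a $\theta$-feasible path in $\bar G$ has length (measured in $G$) at most $(1+\theta\,\siign(\bdgt(s,t)))\,\bdgt(s,t)$, which is exactly what is claimed for $\{p_2(s,t)\}$ in $G$.

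There is no genuine obstacle here; the claim is essentially a repackaging of Claim \ref{cl:junctree_scalingfeasible} at the level of collections of paths rather than a single path, promoted to the cost formula \eqref{obj:2m}. The only subtlety worth flagging in the write-up is the cost-invariance step, i.e.\ that replacing $\ell_e$ by $\bar\ell_e$ leaves $\sigma$ and $\delta$ untouched, so that both the one-time \upc summed over the union of used edges and the per-pair \ppc summed along each path are unchanged under the scaling.
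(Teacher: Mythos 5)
Your proof is correct and follows essentially the same route as the paper's: take $p_2(s,t) = p_1(s,t)$, invoke the two halves of Claim \ref{cl:junctree_scalingfeasible} for the distance bounds, and observe that the costs are inherited unchanged since only the lengths are rescaled. Your explicit justification of the cost-invariance step is slightly more detailed than the paper's, but the argument is the same.
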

\begin{proof}
   The first part of the claim can be proved easily. We can just set $\{p_2(s,t)\}_{(s,t) \in D'} = \{p_1(s,t)\}_{(s,t) \in D'}$. Now, for any $(s,t) \in D$, $\{p_1(s,t)\}_{(s,t) \in D'}$ has some path $p_1(s,t)$ that is feasible. Then using the first part of Claim \ref{cl:junctree_scalingfeasible} we can see that $p_1(s,t)$ is a $\theta$-feasible path for the same $(s,t)$ in $\bar{G}$. Since we are using the same set of paths and costs are inherited, the cost will remain the same.

    For the second part, we can again set $\{p_2(s,t)\}_{(s,t) \in D'} = \{p_1(s,t)\}_{(s,t) \in D'}$. Now, for any $(s,t) \in D$, $\{p_1(s,t)\}_{(s,t) \in D'}$ has some path $p_1(s,t)$ that is $\theta$-feasible. Using the second part of Claim \ref{cl:junctree_scalingfeasible} we can see that $\ell(p_1(s,t))$ is a $\theta$-feasible path for the same $(s,t)$ in $G$. Since we are using the same set of paths and costs are inherited, the cost will remain the same.




\end{proof}

From now we will only be dealing with the scaled graph $\bar{G}$.  
When we build a layered graph as in \cite{chlamtavc2020approximating}, we will be building it on the scaled graph $\bar{G}$ instead of the original graph $G$.


\subsubsection{Turning distance constraints into connectivity constraints}

\paragraph{High level idea and potential challenges:}

For a specific root vertex $r$, we turn our distance constraints with \upc and \ppc problem into a connectivity problem with \upc and \ppc edges by building a layered graph. While the overall approach of turning a distance constraint into a connectivity constraint was introduced in \cite{chlamtavc2020approximating}, we have to keep several things in mind while designing a similar construction. 

\begin{itemize}
    \item The previous construction in \cite{chlamtavc2020approximating} only allows edges between successive layers - while this method is sufficient for unit length edges simply won't work/make sense in our model - 
    \begin{itemize}

        \item Our problem is not just dealing with unit lengths. One workaround is to decompose and edge $e$ with length $\ell_e$ into $\ell_e$ smaller edges all of unit lengths as in \cite{gkl2023} - this would fail in our model because our lengths are not integers and even worse they don't even need to be polynomial in $n$.
        \item This technique is not equipped to handle negative lengths, even $\{-1,1\}$. We need to somehow ensure that allowing negative lengths plays no role/obstacle in our overall problem. 
    \end{itemize}
    \item Any construction we make should not break the other parts of the proof or the requirements of the other model parameters. Our construction here should not make handling the pay-per-use costs impossible. The main technique to handle those would be using the version of height reduction lemma from \cite{cekp} - but we have to ensure we mostly resolve distance constraints before we apply that - because the height reduction lemma isn't equipped to handle any form of distance constraints. 
\end{itemize}

\paragraph{Graph construction:} Let us now see our graph construction which needs to keep all of these concerns in mind.  Let $t^{-}$ and $t^{+}$ represent the smallest and largest possible multiples of $\Delta$ that the length of any subpath of any feasible path could take.  $t^{-} = \lfloor \min (\lenlow \cdot (n-1) / \Delta , 0)  \rfloor$ - this happens when we take $n-1$ consecutive edges of scaled weight at least $\lenlow$ ; $t^{+} = \lceil \ell_{max} (1+\theta) / \Delta \rceil + | t^{-} | $. This is because any feasible scaled path has a length atmost $\ell_{max} \cdot (1+\theta)$  - but a subpath could be longer because it could decrease its length by as much as $t^{-} \cdot \Delta$ when it takes edges of negative length. 
We construct a layered graph $\bar{G}_r$ with the following vertices:
\begin{equation}
         \Bar{V}_r^L = \left( (V\setminus r) \times \{t^{-} \cdot \Delta, (t^{-}+1) \cdot \Delta, ,\ldots,(t^{+}-1)\cdot \Delta,t^{+} \cdot \Delta\} \times \{L\}\right) \cup \{(r,0)\},
\end{equation}

\begin{equation}
         \Bar{V}_r^R = \left( (V\setminus r) \times \{t^{-} \cdot \Delta, (t^{-}+1) \cdot \Delta, ,\ldots,(t^{+}-1)\cdot \Delta,t^{+} \cdot \Delta\} \times \{R\}\right) \cup \{(r,0)\},
\end{equation}

\begin{equation}
    \begin{split}
         \bar{V}_r = \bar{V}_r^R \cup \bar{V}_r^L
    \end{split}
\end{equation}

As an example, a vertex in the newly constructed graph looks as follows: $(u,I \cdot Delta, L)$. This denotes that the new vertex is a copy of the vertex $u$ from the scaled graph $\bar{G}$, and this vertex is in the $I^{th}$ layer from the root. Let us call this $I$ as the label of the layer. We will explain the relevance of $L$ and $R$ later on. For now, it suffices to think of them as two separate copies of the same vertex set.

We connect these vertices with the following edges:

\begin{equation}
    \begin{split}
        \Bar{E}_r^R = \{((u,I \cdot \Delta,R)(v,J \cdot \Delta,R)) | (u,I\cdot \Delta,R),(v,J\cdot \Delta,R) \in \bar{V}_r^R,(u,v)\in E \\
        \text{ and } \Bar{\ell}_{(u,v)}=(J-I )\cdot \Delta \text{ where } I,J \in  Z \text{ and } \Bar{\ell}_{(u,v)} \text{ is the scaled length of the edge} (u,v)\}
    \end{split}
\end{equation}

\begin{equation}
    \begin{split}
        \Bar{E}_r^L = \{((u,I \cdot \Delta,L)(v,J \cdot \Delta,L)) | (u,I\cdot \Delta,L),(v,J\cdot \Delta,L) \in \bar{V}_r^L,(u,v)\in E \\
        \text{ and } \Bar{\ell}_{(u,v)}=(J-I )\cdot \Delta \text{ where } I,J \in  Z \text{ and } \Bar{\ell}_{(u,v)} \text{ is the scaled length of the edge} (u,v)\}
    \end{split}
\end{equation}

\begin{equation}
    \Bar{E}_r = \Bar{E}_r^R \cup \Bar{E}_r^L
\end{equation}

Let, 
\begin{equation}
    \bar{G}_r = (\bar{V}_r,\bar{E}_r)v
\end{equation}

Intuitively, we add an edge between two vertices whenever it makes sense i.e. when the original copies of these two vertices are connected in the scaled graph $\bar{G}$ and the layer separation between these two vertices is equal to the length of the scaled edge in  $\bar{G}$. The edges in our layered graph $\Bar{G}_r$ inherit the upfront costs($\sigma(e)$) and pay-per-use costs($\delta(e)$) from the corresponding edges in the original graph. 

We call an integer $I$ {\em valid} if $t^{-} \leq I \leq t^{+}$. For every terminal pair $(s,t) \in D$, do the following, 

\begin{enumerate}
    \item Add new vertices $(s^t,I \cdot \Delta)$ and $(t^s,J \cdot \Delta)$ for all {\em valid} $I,J$ integers to $V_r$.
    \item For all such $I$ and $J$ add edges $((s^t,I \cdot \Delta)(s,(I\cdot \Delta,L)))$ and $((t,(J\cdot \Delta,R))(t^s,J\cdot \Delta))$ with zero \upc and \ppc to $E_r$.
    \item Now for every terminal pair $(s,t) \in P$ define:
    \begin{enumerate}
        \item terminal sets $S_{s,t} = \{(s^t,I \cdot \Delta) \:\forall \text{ {\em valid} integers } I\}$,
        \item $T_{s,t} = \{(t^s,J \cdot \Delta) | \:\forall \text{ {\em valid} integers }J\}$ and 
        \item relation $R_{s,t} = \{(s^t,I\cdot \Delta),(t^s,J\cdot \Delta) \in S_{s,t}\times T_{s,t} | (I+J) \cdot \Delta \leq \bdgt(s,t) \} $.
    \end{enumerate}
\end{enumerate}

Note that unlike \cite{chlamtavc2020approximating}, we have to scale the lengths first before building this layered graph. This scaling process necessitates several changes in our graph construction and also the proof.

\paragraph{Relating the layered graph with the scaled graph:}
Because of the construction of $\Bar{G}_r$, for every terminal pair $(s,t) \in D$ and valid integer $I$, there is a bijection between paths of length $I \cdot \Delta$ from $s$ to $r$ in $\bar{G}$, and paths from $(s^t,I\cdot \Delta)$ to $(r,0)$ in $\bar{G}_r,$ w. Similarly for every valid $J$, there is a bijection between paths of length $J\cdot \Delta$ from $r$ to $t$ in $\bar{G}$, and paths from $(r,0)$ to $(t^s,J\cdot \Delta)$ in $\bar{G}_r$. Now, to keep track of lengths in $\bar{G}$ we can just connect the appropriate terminal pairs in $\bar{G}_r$. 

\paragraph{Runtime:} 
Our runtime so far is polynomial in $|\bar{G}_r|$ (when we say $|\bar{G}_r|$, we mean the number of vertices in $|\bar{G}_r|$). Now, $|\bar{G}_r|$ is a polynomial in $O(n \cdot (| t^{-}  |+ | t^{+} |))$ and thus it is a polynomial in 

\begin{align}
    O(n \cdot (| \lenlow \cdot (n) | / \Delta + \ell_{max} \cdot (1+\theta) / \Delta) )  \\
    =  O(n \cdot (| \lenlow \cdot (n)^2 | / (\theta \cdot \ell_{min}) + n \cdot \ell_{max} \cdot (1+\theta) / (\theta \cdot \ell_{min}) ) )
\end{align}

Thus the overall runtime so far is a polynomial in $O(n^3 | \lenlow | / \theta \cdot \ell_{min} + n \cdot \ell_{max} \cdot (1+\theta) )/(\theta \cdot \ell_{min}))$.  Thus, when $1/\theta$, $| \lenlow |/ \ell_{min}$ and $\ell_{max}/\ell_{min}$ are polynomials in $n$, the runtime so far is a polynomial in $n$.

\paragraph{To summarize,} we have turned all budget constraints into connectivity constraints so far. We keep track of the budget constraints using some relations. We still have to deal with the \ppc. 

\subsubsection{Handling pay-per-use costs $\delta(e)$}

\paragraph{High level overview and tools used} We now need to account for the effects of the pay-per-use cost $\delta(e)$ for each edge. While we are doing this, we should not disrupt our previous work for handling the distance constraints - this means we do not disrupt the terminals in any way and retain ways to keep track of which of those need to be connected. 

We now present the height reduction lemma from \cite{cekp} here which is in turn a modified version of the work of \cite{helvig2001improved}. To improve clarity we use the terms \upc and \ppc to denote the cost of adding an edge and the cost of using it once in a path respectively.

\begin{restatable}{lemma}{lemmahrbb}
    \label{le:height_reduction_bb} \cite{cekp}
    (\textbf{Height Reduction}) Given a directed graph $G=(V,E)$ with upfront costs $\sigma(e)$ and pay-per-use costs $\delta(e)$, for all $h>0$, we can efficiently find an upward directed, layered graph $G_r^{up}$ on $(h+1)$ levels and edges (with new upfront and pay-per-use costs) only between consecutive levels going from bottom (level $h$) to top (level $0$), such that each layer has $n$ vertices corresponding to the vertices of $G$, and, for any set of terminals $X$ and any root vertex $r$,
    \begin{itemize}
        \item  the optimal objective value of the single-sink buy-at-bulk problem to connect $X$ (at level $h$) with $r$ (at level $0$) on the graph $G^{up}_r$ is at most $O(hk^{1/h})\rho$, where $\rho$ is the objective value of an optimal solution of the same instance on the original graph G,
        
        \item given an integral (fractional solution) of objective value $\rho$ for the single-sink buy-at-bulk problem to connect $X$ with $r$ on the graph $G^{up}_r$, we can efficiently recover an integral (fractional solution) of objective value at most $\rho$ for the problem on the original graph $G$.
    \end{itemize}
    In the same way, we can obtain a downward directed, layered graph $G^{down}_r$ on $(h+1)$-levels with edges going from top to bottom, satisfying the same properties as above except for single-source (as opposed to single-sink) instances instead.
\end{restatable}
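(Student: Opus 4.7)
The plan is to adapt the classical height-reduction argument of Helvig, Robins, and Zelikovsky to the two-metric (upfront plus pay-per-use) cost model, following the CEKP approach. I would prove the upward case for single-sink instances; the downward case is symmetric.

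First, I would construct $G_r^{up}$ explicitly. Place $r$ at level $0$ and put one copy of every vertex of $V$ at each level $1, \ldots, h$. For every ordered pair $(u,v) \in V \times V$ and every level $i \in \{0, \ldots, h-1\}$, add a single directed edge from $(u, i+1)$ to $(v, i)$ whose \emph{template} is a fixed shortest-$\sigma$ path $\pi_{uv}$ from $u$ to $v$ in $G$ (ties broken arbitrarily). Set the new upfront cost of this layered edge to $\sigma(\pi_{uv}) := \sum_{e \in \pi_{uv}} \sigma(e)$ and its new pay-per-use cost to $\delta(\pi_{uv}) := \sum_{e \in \pi_{uv}} \delta(e)$. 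By construction, all edges lie between consecutive levels, as required.

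The backward direction (a fractional or integral solution in $G_r^{up}$ of value $\rho$ yields one in $G$ of value at most $\rho$) is immediate from the construction: replace each used layered edge by its template path, preserving all flow values. The pay-per-use contribution is preserved exactly because $\delta$ was defined additively along the template; the upfront contribution can only decrease when distinct layered edges share an underlying edge of $G$, since in $G$ such an edge is paid for only once.

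The forward direction is the heart of the argument and the main obstacle. Given an optimal solution of cost $\rho$ in $G$, I would take its flow decomposition into paths from each $x \in X$ to $r$ and merge these into a rooted flow-tree $T^*$ at $r$ carrying the appropriate leaf-weights. I would then apply a recursive balanced decomposition: at each of the $h$ levels, repeatedly peel off maximal subtrees whose leaf-count lies in a target range, ensuring that the reshaped tree has height at most $h$ and branching factor at most $k^{1/h}$. Each such subtree is replaced by a single layered edge connecting its root to the next level; the template is chosen to be the path in $G$ from the subtree's root to $r$'s side that the original routing uses. The pay-per-use cost telescopes because flow is conserved layer by layer and each layered edge's $\delta$-cost matches the sum of $\delta$-costs along its template. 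The upfront cost is charged at most $k^{1/h}$ extra times per level due to the branching bound, yielding the $O(h k^{1/h})$ factor when summed over $h$ levels. The delicate step is handling subtrees in $T^*$ whose natural branching exceeds $k^{1/h}$; here I would split heavy subtrees by peeling along a spine and reattaching fragments at the local root so that no single layered edge carries more than a $k^{1/h}$ fraction of descendant leaves. For fractional solutions, the same argument applies by performing the decomposition on the fractional flow decomposition with weighted leaves.
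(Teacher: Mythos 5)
First, a point of reference: the paper does not prove this lemma at all --- it is imported from \cite{cekp} (itself a two-metric adaptation of the height reduction of \cite{helvig2001improved}), so you are reconstructing a cited black box rather than an argument the authors supply. Your skeleton is the right one: a metric-completion-style layered graph plus the Helvig--Robins--Zelikovsky balanced peeling of the optimal flow-tree, and your backward direction (replace each layered edge by its template path) is sound for whatever templates you choose.

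The gap is in the very first step of your construction: fixing, for each pair $(u,v)$, a \emph{single} template path $\pi_{uv}$ chosen to be shortest with respect to $\sigma$ alone. In the two-metric model there is no single ``shortest'' $u \leadsto v$ path; the right path depends on the load it will carry, and a $\sigma$-shortest path can have arbitrarily large $\delta$. One can build instances in which, for every relevant pair, the $\sigma$-shortest path has pay-per-use cost $M$ while a routing that is only a constant factor more expensive in $\sigma$ has pay-per-use cost $0$ (e.g., on a path $v_0,\dots,v_L$ give each consecutive ``good'' edge $\sigma=2,\ \delta=0$ and add ``bad'' shortcut edges $v_i \to v_j$ with $\sigma = j-i,\ \delta = M$; then every $\sigma$-shortest template is a bad edge). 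Routing $k$ units of demand then costs $O(L)$ in $G$ but at least $kM$ in your $G^{up}_r$, since every level-to-level hop inherits $\delta = M$; the first bullet fails by an unbounded factor, and your claim that ``the pay-per-use cost telescopes'' silently assumes the template's $\delta$-cost matches that of the path the optimum actually uses, which your construction does not guarantee. This is exactly the extra difficulty that separates the two-metric lemma from the classical single-cost one. The standard repair is to place, between consecutive levels and for each pair $(u,v)$, a logarithmic family of parallel edges indexed by a load scale $2^j$: the $j$-th edge is built from the path $p_j$ minimizing $\sigma(p)+2^j\delta(p)$ and carries upfront cost $\sigma(p_j)$ and pay-per-use cost $\delta(p_j)$. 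For whatever load $\ell$ the optimal tree sends between the endpoints, the edge with $2^j\approx\ell$ is within a constant factor of $\sigma(p^*)+\ell\,\delta(p^*)$, the backward mapping is unchanged, and your peeling analysis of the forward direction then goes through. Without this (or an equivalent device), the proof does not establish the first bullet.
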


Thankfully Lemma \ref{le:height_reduction_bb} is well equipped for our situation - so we use it right now (unlike \cite{chlamtavc2020approximating} which uses it only after formulating the \mslc instance).

\paragraph{Applying Lemma \ref{le:height_reduction_bb}:}  We first apply Lemma \ref{le:height_reduction_bb} on $(\bar{G}_r^R = (\bar{V}_r^R,\bar{E}_r^R))$ and $(\bar{G}_r^L = (\bar{V}_r^L,\bar{E}_r^L))$ seperately, and obtain two new $(h+1)$ layered graphs $G^{up}_r$ and $G^{down}_r$ where $h$ is some positive integer which depends on $1/\eps$. Unlike \cite{cekp} our input graph is going to be significantly more complicated than the base graph $G$ but it will have only one root $r$. We do not keep track of the intermediate vertices from the original graphs, and we don't need to do so either. But we can keep track of the terminal vertices and that is important.

\paragraph{Reduction to a layered tree:} Once we obtain this layered graph, we create two new tree-like graphs $\bar{T}^{up}_r$ and $\bar{T}^{down}_r$. The purpose of this tree-like graph is to ensure that we have only one path from a terminal to the root - which allows us to easily handle \ppc. We describe the construction of $\bar{T}^{up}_r$. The construction of $\bar{T}^{down}_r$ is similar (things need to be inverted appropriately):

\begin{itemize}
    \item The $0^{th}$ layer of $\bar{T}_r^{up}$ has just one vertex and it is the root $r$.
    
    \item For each $i$ such that $1 \leq i \leq h$, the $i^{th}$ layer of $\bar{T}_r^{up}$ contains all $(i+1)-$ length tuples $(r,v_1,\ldots,v_i)$ where $v_j$ is a vertex present in the $j^{th}$ layer of $\bar{G}^{up}_r$.
    
    \item For every edge $e = (v_i,v_{i-1}) \in \bar{G}^{up}_r$, there is an arc from $(r,v_1,\ldots,v_{i-1},v_i)$ to $(r,v_1,\ldots,v_{i-1})$ inheriting the same uprfront cost $\sigma(e)$ and pay-per-use cost $\delta(e)$ as in $\bar{G}_r^{up}$.
\end{itemize}

\paragraph{Eliminating \ppc:} After we get the layered tree, we create new terminal vertices $(s^t_r,I \cdot \Delta) \: (\forall \: (s,t) \in D, \text{ valid integers } I)$ and connect them to any leaf in $\bar{T}_r^{up}$ that is of the form $(r,v_1,v_2,\ldots,v_h)$ with $v_h=(s^t,I\cdot \Delta)$. We do the same by adding $(t^s_r,J\cdot \Delta)$ in $\bar{T}_r^{down}$ in the same way (but these vertices are sinks not sources). We then add an edge from $r$ in $\bar{T}_r^{up}$ to $r$ in $\bar{T}_r^{down}$ and call this final graph as $\bar{T}_r$. Because the $G^{up}_r$ has only $h$ layers, the newly created graph $\bar{T}^{up}_r$ has size only $O(|G^{up}_r|^{O(h)})$ which is polynomial in input size as $h$ is a constant and $|G^{up}_r|$ is also polynomial in input size. 

We now exploit the tree-like structure of our new graph where there is only one path from a leaf in $\bar{T}_r^{up}: (r,v_1,v_2,\ldots,v_h)$ to the root $r$. Let $v_h = (s^t,I \cdot \Delta)$ be a leaf node. Also, let $\eta(s^t,(r,v_1,v_2,\ldots,v_h)) = \sum_{e \in P} \delta(e)$ where $P$ is the only path that connects $(r,v_1,v_2,\ldots,v_h)$ to $r$. We set the \upc of the $((s^t_r,I\cdot \Delta)(r,v_1,v_2,\ldots,v_h))$ edges to $\eta(s^t,(r,v_1,v_2,\ldots,v_h)) \cdot \dem(s,t)$ and do something similar for the $(t^s_r,J\cdot \Delta)$ terminal vertices. Effectively, we are using the \upc of these newly created edges to capture the \ppc one needs to pay to use a specific path.  

After this step, we no longer need to do anything for the pay-per-use cost. $\bar{T}_r$ can discard the \ppc value stored in all edges (they are taken care of by the \upc in the terminal edges).

\paragraph{To summarize:} We use Lemma \ref{le:height_reduction_bb} to first restrict the number of layers one needs to traverse from a terminal to the root. Then we apply another reduction to ensure that there is only one path from the terminals to the root. To this, we add some dummy vertices that can capture the \ppc ($\delta(e)$) so that we no longer need to keep track of those in our main graph.

\subsubsection{Reduction to \mslc}

We create another much simpler graph: this one is composed of two graphs $G^{1}$ and $G^{2}$ which are both copies of $\bar{G}$, that intersect only in the vertex $r$. Let us call this graph $\hat{G}_r$. In addition, for every node $u \in V$, we use $u^{1}$ and $u^{2}$ to denote the copies of $u$ in $G^{1}$ and $G^{2}$. The following lemma (which is also in \cite{chlamtavc2020approximating}) follows from our construction. It relates our scaled problem to a problem in the graph $\bar{T}_r$. In simple words, it says for a partial solution of cost $f$ in the $\hat{G}_r$, we will have a partial solution (satisfying the same set of pairs) of cost $O(k^\eps \cdot f)$ in $\bar{T}_r$. In addition, it also says that for a partial solution of cost $f$ in the $\bar{T}_r$, we will have a partial solution of cost $\leq f$ in $\hat{G}_r$.

\begin{lemma} \label{le:junction_tree_helper}
    For any $f > 0$, and set of terminal pairs $D' \subseteq D,$ there exists a set of paths $\{p_1(s,t)\}_{(s,t) \in D'}$ in $\hat{G}$ of total cost (both \upc and \ppc) $\leq f$ containing a path of length at most $\bdgt(s,t)$ from $s^{1}$ to $t^{2}$ for every $(s,t) \in D'$ only if there exists a subgraph $\hat{Jun} \subseteq E(\bar{T}_r)$ of total weight $\leq f$ such that for every terminal pair $(s,t) \in D',$ $\hat{Jun}$ contains terminals $(s^t_r,I \cdot \Delta),(t^s_r,J \cdot \Delta)$ such that $((s^t,I\cdot \Delta)(t^s,J\cdot \Delta)) \in R_{s,t}$. Moreover, given such an edge set $\hat{Jun}$, we can efficiently find a corresponding edge set of paths $\{p_1(s,t)\}_{(s,t) \in D'}$ in $\hat{G}$. 

    For any $f > 0$, and set of terminal pairs $D' \subseteq D,$ if there exists a set of paths $\{p_1(s,t)\}_{(s,t) \in D'}$ in $\hat{G}$ of total cost (both \upc and \ppc) $\leq f$ containing a path of length at most $\bdgt(s,t)$ from $s^{1}$ to $t^{2}$ for every $(s,t) \in D'$ then there exists a subgraph $\hat{Jun} \subseteq E(\bar{T}_r)$ of total weight $\leq O(k^\eps \cdot f)$ such that for every terminal pair $(s,t) \in D',$ $\hat{Jun}$ contains terminals $(s^t_r,I \cdot \Delta),(t^s_r,J \cdot \Delta)$ such that $((s^t,I\cdot \Delta)(t^s,J\cdot \Delta)) \in R_{s,t}$. 
\end{lemma}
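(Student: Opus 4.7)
\medskip

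The plan is to prove both directions by chasing the construction pipeline $\bar G \to \bar G_r \to (\bar G_r^{up}, \bar G_r^{down}) \to \bar T_r$, relying on Lemma~\ref{le:height_reduction_bb} and on the way \ppc has been folded into \upc at the terminal-attaching edges of $\bar T_r$. Throughout, observe that in $\hat G_r$, any $s^{1} \leadsto t^{2}$ path must pass through $r$ (the only common vertex), so a length-$\bdgt(s,t)$ path in $\hat G$ is exactly a junction-tree path through $r$ in $\bar G$ of budget-respecting length, decomposing as an $s \leadsto r$ piece of scaled length $I \cdot \Delta$ and an $r \leadsto t$ piece of scaled length $J \cdot \Delta$ with $(I+J)\cdot \Delta \le \bdgt(s,t)$.

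For the first (easy) direction, take $\hat{Jun} \subseteq E(\bar T_r)$ of weight $\le f$ satisfying the terminal-pair relations. Split $\hat{Jun}$ into its $\bar T_r^{up}$ and $\bar T_r^{down}$ parts, and for each $(s,t) \in D'$ pick the unique $(s^t_r, I\cdot\Delta)\leadsto r$ and $r \leadsto (t^s_r, J\cdot\Delta)$ paths; by membership in $R_{s,t}$, $(I+J)\cdot\Delta \le \bdgt(s,t)$. Drop the terminal-attaching edges (whose \upc equals $\eta(s^t,\cdot)\cdot \dem(s,t)$, i.e.\ exactly the \ppc contribution the chosen leaf-to-root path in $\bar T_r^{up}$ would incur in $\bar G_r^{up}$), and apply the recovery direction of Lemma~\ref{le:height_reduction_bb} to obtain an integral solution in $\bar G_r^{up}$ and $\bar G_r^{down}$ of no-greater objective. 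Finally use the bijection between paths in $\bar G_r^L,\bar G_r^R$ of prescribed label difference $I\cdot\Delta,J\cdot\Delta$ and paths in $\bar G$ of scaled length exactly $I\cdot\Delta,J\cdot\Delta$ to produce the required $s^{1}\leadsto r \leadsto t^{2}$ paths in $\hat G$; the total \upc + \ppc in $\hat G$ is at most the weight of $\hat{Jun}$ by construction of the terminal edge weights, hence at most $f$.

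For the second direction, start from $\{p_1(s,t)\}_{(s,t)\in D'}$ in $\hat G$ of cost $\le f$. Split each path at $r$, lift each half via the $\bar G_r \leftrightarrow \bar G$ bijection into $\bar G_r^L$ and $\bar G_r^R$ (preserving \upc and \ppc since costs are inherited), and view the two sides as single-sink and single-source buy-at-bulk instances rooted at $(r,0)$. Apply Lemma~\ref{le:height_reduction_bb} in the forward direction with $h = \lceil 1/\eps \rceil$: this produces solutions in $\bar G_r^{up}, \bar G_r^{down}$ of objective at most $O(hk^{1/h})\cdot f = O(k^{\eps})\cdot f$. Unfold these into the tree-like $\bar T_r^{up},\bar T_r^{down}$: each path in $\bar G_r^{up}$ from a used terminal $(s^t, I\cdot\Delta)$ to $r$ corresponds to a unique leaf-to-root path in $\bar T_r^{up}$, and attaching the corresponding terminal edge $((s^t_r, I\cdot\Delta)(r,v_1,\ldots,v_h))$ has \upc equal to $\dem(s,t)\cdot \eta(s^t,\cdot)$, which exactly absorbs the \ppc component of the original path. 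Do the same on the down-side and union all chosen edges to obtain $\hat{Jun}$; by the length-bound on each half of $p_1(s,t)$, the chosen terminal pair lies in $R_{s,t}$, and the total weight is $O(k^{\eps})\cdot f$.

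The main obstacle is the cost bookkeeping in the second direction: the tree $\bar T_r$ nominally has only \upc, but we need the terminal-edge weights $\eta\cdot\dem$ plus the structural \upc inherited through height reduction to faithfully reproduce the \upc + \ppc objective of the original problem without double counting. This works because \ppc along a single leaf-to-root path (whose tree structure makes each used edge traversed by exactly one terminal path for a fixed leaf) is captured once per demand unit by the terminal-edge \upc, while the structural edges of $\bar T_r$ inherit only the \upc of $\bar G_r^{up}$; matching these two accounting rules precisely is the delicate point, and is exactly what makes the height-reduction variant of \cite{cekp} applicable here despite the extra distance constraints already hardwired into $\bar G_r$.
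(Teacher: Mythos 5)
Your proposal is correct and takes essentially the approach the paper intends: the paper gives no explicit proof of this lemma, asserting that it ``follows from our construction'' and deferring to \cite{chlamtavc2020approximating}, and your argument is exactly the construction-chasing (split at $r$, lift through the layered graph, apply Lemma \ref{le:height_reduction_bb} in the appropriate direction, fold/unfold the \ppc into the terminal-edge \upc of $\bar{T}_r$) that this assertion presupposes. The one detail worth making explicit is that the \upc accounting when unrolling $\bar{G}_r^{up}$ into $\bar{T}_r^{up}$ relies on the height-reduced solution being an arborescence (as guaranteed by the construction behind Lemma \ref{le:height_reduction_bb}), so that each used edge of $\bar{G}_r^{up}$ corresponds to a single tuple-edge of $\bar{T}_r^{up}$ and its upfront cost is not charged once per prefix.
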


 Let $w_r$ be the \upc on the graph $\bar{T}_r$. Also, define $S_{s,t}^r = \{(s_r^t,I \cdot \Delta) | ((s_r^t,I\cdot \Delta),(r,v_1,\ldots,v_h))\in E(\bar{T}_r) : v_h = (s^t,I\cdot \Delta) \in S_{s,t}\} $ (in other words $S_{s,t}^r$ contains all terminals in $\bar{T}_r$ that correspond to the terminals in $S_{s,t}$) and  $T_{s,t}^r = \{(t_r^s,I\cdot \Delta) \mid ((t_r^s,I\cdot \Delta),(r,v_1,\ldots,v_h))\in E(\bar{T}_r) : v_h = (t^s,I\cdot \Delta) \in T_{s,t}\} $. Finally, set $R_{s,t}^r = \{(s^t_r,I\cdot \Delta),(t^s_r,J\cdot \Delta) \in S_{s,t}^r\times T_{s,t}^r \mid I\cdot \Delta+J\cdot \Delta \leq \bdgt(s,t) \} $. 

We now define the \mslc problem.

\begin{definition}
    In the \mslc problem, we have a directed graph $G = (V,E),$ two collections of disjoint vertex sets $\hat{S},\hat{T} \subseteq 2^V$, a collection of set pairs $P\subseteq \hat{S} \times \hat{T},$ and for each set pair $(S,T) \in P,$ a relation $R(S,T) \subseteq S \times T$ and non-negative edge costs $c: E \to \R_{\geq 0},$. The objective here is to find an edge set $F\subseteq E$ that minimizes the ratio
    \begin{equation*}
        \frac{\sum_{e \in F}c(e)}{|\{(S,T)\in P \mid \exists (s,t) \in 
 R(S,T):\text{ $F$ contains an $s \leadsto t$ path }\}|}
    \end{equation*}
\end{definition}

Now, to prove Lemma \ref{le:junction_tree_helper}, we just need to show that we can achieve an $O(n^\ep)$ approximation for the \mslc instance $(\bar{T}_r,\{S_{s,t}^r, T_{s,t}^r, R_{s,t}^r \mid (s,t) \in D\},w_r)$ obtained from our reduction.  

Thus, it suffices to show the following lemma.

\begin{lemma} \label{le:junction_tree_intermediate}
    In the given setting, there exists a $O(\log^3 n)$ approximation algorithm for the following problem that runs in polynomial time.

    \begin{itemize}
        \item Find a tree $T \subseteq T_r$ minimizing the ratio
        \begin{equation*}
            \frac{\sum_{e \in F}w_r(e)}{\mid\{(s,t)\in P \mid \exists (\hat{s},\hat{t}) \in 
 R_{s,t}^r:\text{ T contains an $\hat{s}$-$\hat{t}$ path }\}|}.
        \end{equation*}
    \end{itemize}
\end{lemma}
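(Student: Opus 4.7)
The plan is to reduce this to an instance of minimum density group Steiner label cover on a tree and invoke the LP-rounding machinery of Chlamt{\'a}{\v{c}} et al. The key structural observation is that $\bar{T}_r$ consists of two trees (namely $\bar{T}_r^{up}$ and $\bar{T}_r^{down}$) joined at the single vertex $r$, so every potential $\hat{s} \leadsto \hat{t}$ path with $\hat{s} \in S_{s,t}^r$ and $\hat{t} \in T_{s,t}^r$ is forced to traverse $r$, and the two halves impose a genuine tree-routing structure. This makes the flow pattern tractable and is exactly the setting where Garg--Konjevod--Vazirani (GKV) style rounding for Group Steiner Tree applies.

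First, I would write a flow/cover LP with a density normalization: covering variables $y_{s,t} \in [0,1]$ for each $(s,t) \in P$, path/flow variables $f_{\hat{s},\hat{t}}$ for every allowed $(\hat{s},\hat{t}) \in R_{s,t}^r$ routing through $r$ along the unique $\hat{s} \leadsto r \leadsto \hat{t}$ path in $\bar{T}_r$, and edge capacities $x_e$, with constraints $\sum_{(\hat{s},\hat{t}) \in R_{s,t}^r} f_{\hat{s},\hat{t}} \geq y_{s,t}$, $\sum_{p \ni e} f_p \leq x_e$, and the density normalization $\sum_{(s,t)} y_{s,t} \geq 1$. I would minimize $\sum_e w_r(e)\,x_e$; the optimum lower-bounds the target minimum density. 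The LP is polynomial-size because $\bar{T}_r$ and $R_{s,t}^r$ are.

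Second, I would handle the fact that $R_{s,t}^r$ is an arbitrary binary relation (not a cross-product) via the \emph{cross-product subset extraction} from Chlamt{\'a}{\v{c}}--Dinitz--Kortsarz--Laekhanukit: iteratively peel off sub-relations of the form $\hat{S}_{s,t} \times \hat{T}_{s,t} \subseteq R_{s,t}^r$ that retain a $1/\polylog(n)$ fraction of the LP mass. Each resulting instance is a genuine group Steiner (forest) problem on the tree $\bar{T}_r$, where GKV/GKR rounding gives an $O(\log^2 n)$ integrality gap. Paying one extra logarithmic factor for the cross-product extraction and combining with the density iteration yields the claimed $O(\log^3 n)$ bound.

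The main obstacle I expect is the cross-product extraction step: one must argue that for the relation $R_{s,t}^r$ over the tree terminals one can still find large cross-product sub-relations whose LP value is preserved up to polylogarithmic factors, and one must verify that the standard GKV rounding on $\bar{T}_r$ can be applied simultaneously on both halves $\bar{T}_r^{up}$ and $\bar{T}_r^{down}$ (through the shared root $r$) without blowing up the analysis. A secondary concern is ensuring the density normalization is correctly maintained under iterative greedy selection, but this follows by the standard argument that iteratively picking approximately minimum density partial solutions reconstructs an approximate minimum density solution overall.
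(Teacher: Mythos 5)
Your proposal is correct and is essentially the same argument the paper relies on: the paper's proof of this lemma is a one-line citation to Chlamt{\'a}{\v{c}}--Dinitz--Kortsarz--Laekhanukit, noting that the relation $R_{s,t}^r$ is structurally unchanged from theirs, and your sketch (density LP, cross-product sub-relation extraction, group Steiner rounding on the tree through the shared root $r$) is precisely the machinery of that cited proof. The only nit is the attribution of the tree rounding, which is Garg--Konjevod--Ravi rather than Garg--Konjevod--Vazirani.
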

\begin{proof}
    This lemma has been proved in \cite{chlamtavc2020approximating} (we make no significant changes to the relation used in \cite{chlamtavc2020approximating}; we would need a different proof if we had a different relation).
\end{proof}

We will now see the proof of Corollary \ref{thm:mdrcjt}.

\thmmdrcjt*

\begin{proof}
    Note that when all $\ell_e \in [\poly(n)]_{\pm}$, we can assume without loss of generality that $\bdgt(s,t) \in [\poly(n)]_{\pm} \:\: \forall (s,t) \in D$. This means that $\eta,\xi \in [\poly(n)]_{\pm}$. Thus, using Theorem \ref{thm:mdrcjttheta}, when we set $\theta = 1/n \cdot \poly(n) \leq 1/\poly(n)$, our lemma is proved. 
\end{proof}

\section{Resource-constrained Shortest Path} \label{sec:rcsp}

In this section, we modify the result presented in \cite{horvath2018multi} to allow the lengths to be negative ( upto a specific range). Recall Definition \ref{def:rcsp}.

\defnrcsp*

Note that the only difference between this problem and the one solved in \cite{horvath2018multi} is that we allow the resource consumptions to be negative. This however makes many things more complicated.

Given resource constraints $L_1,L_2,\ldots,L_m$ for the \rcsp, let $\opt_{RCSP}$ be the cost of any minimum cost $s \leadsto t$ path that satisfies the resource constraints. An $(1; 1 + \eps_1, 1+\eps_2 . . . , 1 + \eps_m)$-approximation scheme finds an $s \leadsto t$ path whose cost is at most $\opt_{RCSP}$, but the $i^{th}$ resource constraint is satisfied up to a factor of $1 + \eps_i$ for that path. We will now present a $(1;1+\eps_1, 1+\eps_2,\ldots\,1+\eps_m)-$ FPTAS for $m$-\rcsp (under certain assumptions), where the number of weight functions, m, is a constant. Let $\boldsymbol{\epsvec}$ be a vector that is composed of $\eps_i \: \forall \: i \in [1,2,\ldots,m]$. 

Also, let $\boldsymbol{L}$ be a vector composed of all $L_i \: \forall \: i \in [1,2,\ldots,m]$ .Our problem can be reformulated as follows:

\begin{align} \label{eq:horvath_original}
    \arg \min_{P} \{c(P):w_i(P) \leq L_i, i=1,\ldots,m\}.
\end{align}  

One very important point that we wish to make right now is that any $\eps_i$ can be negative. The role of $\eps_i$ is to allow some leeway/approximation for our algorithm so that we can run it in polynomial time. Consider the case where $R_i < 0$ for some $i$. In this case, if $R_i<0$ then $\eps_i \cdot R_i < 0$ - which means that an algorithm has to use fewer resources than the allocated budget and get a cost at least as good as the optimal value for the allocated budget. This is clearly not possible in general. To allow some approximation, we need $\eps_i \cdot R_i > 0$ and when $R_i < 0$, we need $\eps_i <0$.

Given two vectors $\boldsymbol{a},\boldsymbol{b} \in Q^m$ we use $(\boldsymbol{a} \cdot \boldsymbol{b})$ to denote the Hadamard product vector $\boldsymbol{c}$ with $c_i = (a_i  b_i), 1\leq i \leq m$. We also use $\boldsymbol{a^{-1}}$ to denote a vector $\boldsymbol{c}$ with  $c_i = (1/a_i), 1\leq i \leq m$. Given two vectors $\boldsymbol{a}$ and $\boldsymbol{b}$, we say that $\boldsymbol{a} \leq \boldsymbol{b} \: (\boldsymbol{a}<\boldsymbol{b})$ if $a_i \leq b_i \: (a_i < b_i) \forall i \in \{1,\ldots,m\}$. Given a vector $\boldsymbol{a}$ and a real number $b$, we use $\boldsymbol{a} \cdot b$ to denote another vector $\boldsymbol{c}$ with $c_i = (a_i b), \: 1\leq i \leq m$ . Also, let $\maxni$ denote the $| \min(\min_{(u,v) \in E} w_i(u,v),0) |$. Finally, let $\boldsymbol{\maxn}$ be a vector of all the $\maxni$ for $i \in \{1,2,\ldots,m\}$.

The standard approach here for a specific $\boldsymbol{\epsvec}$ is to scale and round the given lengths. It can be shown that solving the same problem on the scaled version will give us an approximate solution to the original problem. Then a dynamic program approach is given to solve the problem using the scaled and rounded lengths. 

Intuitively the standard idea is to take advantage of the additional $\boldsymbol{\epsvec} \cdot \boldsymbol{L}$ budget available to us. We split this $\boldsymbol{\epsvec} \cdot \boldsymbol{L}$ into $n$ pieces for each edge in a path effectively allowing us to use up slightly more budget for each edge. Thus, it suffices to approximately track the edge weights since we are allowed some leeway for each of them.

\subsection{Scaling Procedure}

Throughout this section, we assume that the graph contains no negative cycle for any dimension (i.e., you can never find a path $P$ starting from and ending at $s$ that has $w_i(P) < 0$ for any $i$.

For a given an $\boldsymbol{\epsvec}$ (where $\boldsymbol{\epsvec}$ is a vector), the scaling and rounding procedure is as follows: i) for all $i=1,\ldots,m$ the scale vector $\boldsymbol{\Delta} \in \mathbb{Q}^m$ is given by, $\Delta_i = \eps_i L_i/(n-1)$ ii) for each edge $e \in E$, let $\bar{\boldsymbol{w}}(e) \in \mathbb{R}^m$ be the scaled weight of the edge $e$, defined as $\bar{w}_{i}(e) = d_i(a) \cdot  \Delta_i$  where $d_i(a)$ is some integer such that $(d_i(a) - 1) \Delta_i < w_i(a) \leq d_i(a) \Delta_i$ is satisfied.

Now, our scaled problem can be formulated as:

\begin{align} \label{eq:horvath_scaled}
    \arg \min \{c(P):\bar{w}_i(P) \leq (1+\eps_i) L_i, i=1,\ldots,m\}. 
\end{align}  

For any $u-v$ path $P$, for any $i=1,\ldots,m$ we have  

\begin{align} \label{eq:horvath_inequality}
    \bar{w}_i(P) = \sum_{e \in P} \bar{w}_i(e) \leq \sum_{e \in P} (w_i(e) + \Delta_i) \leq \sum_{e \in P} (w_i(e)) + (n-1)\Delta_i = w_i(P) +\eps_i L_i,
\end{align}

Note that because we do not allow negative cycles, we can be assured that any path here will only have $n-1$ edges (there is no reason to include a cycle).

The following lemma helps us relate the scaled version (\eqref{eq:horvath_scaled}) to the original problem (\eqref{eq:horvath_original}).

\begin{lemma} \label{le:horvath_conversion}
    If \eqref{eq:horvath_original} has a feasible solution, then \eqref{eq:horvath_scaled} will have a feasible solution. Further, in that case, any optimal solution for \eqref{eq:horvath_scaled} is a $(1;1+\eps_1,1+\eps_2,\ldots,1+\eps_m)-$ approximate solution for \eqref{eq:horvath_original}.
\end{lemma}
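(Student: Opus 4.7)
The plan is to deduce both claims directly from the key scaling inequality \eqref{eq:horvath_inequality}, together with the observation that the rounding always overestimates, i.e., $w_i(e) \le \bar{w}_i(e)$ for every edge $e$ and every $i \in [m]$. This second fact follows from the definition of $d_i(e)$: the assumption $\eps_i L_i > 0$ guarantees $\Delta_i = \eps_i L_i/(n-1) > 0$, so the rounding rule $(d_i(e)-1)\Delta_i < w_i(e) \le d_i(e)\Delta_i$ is well-defined and yields $w_i(e) \le d_i(e)\Delta_i = \bar{w}_i(e)$ regardless of the sign of $w_i(e)$ itself.

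For the first statement, let $P^*$ be any feasible solution of \eqref{eq:horvath_original}, so that $w_i(P^*) \le L_i$ for all $i$. Applying \eqref{eq:horvath_inequality} along $P^*$ gives $\bar{w}_i(P^*) \le w_i(P^*) + \eps_i L_i \le L_i + \eps_i L_i = (1+\eps_i)L_i$ for every $i$, so $P^*$ is also feasible for \eqref{eq:horvath_scaled}. Hence the scaled problem has a feasible solution whenever the original one does.

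For the second statement, let $\bar{P}$ be an optimal solution of \eqref{eq:horvath_scaled} and $P^*$ be an optimal solution of \eqref{eq:horvath_original} (which exists by hypothesis). I would split the approximation guarantee into a cost bound and a per-resource bound. For the cost bound, the previous paragraph shows $P^*$ is feasible for \eqref{eq:horvath_scaled}, so optimality of $\bar{P}$ gives $c(\bar{P}) \le c(P^*) = \opt_{RCSP}$. For the per-resource bound, feasibility of $\bar{P}$ in \eqref{eq:horvath_scaled} gives $\bar{w}_i(\bar{P}) \le (1+\eps_i)L_i$, and the edgewise inequality $w_i(e) \le \bar{w}_i(e)$ (summed along $\bar{P}$) yields $w_i(\bar{P}) \le \bar{w}_i(\bar{P}) \le (1+\eps_i)L_i$, which is exactly the $(1+\eps_i)$-relaxed $i$-th resource constraint. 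Combining these two bounds gives the claimed $(1;1+\eps_1,\ldots,1+\eps_m)$-approximation.

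The main subtlety, rather than an obstacle, is keeping the sign of $L_i$ consistent throughout. When $L_i < 0$ the requirement $\eps_i L_i > 0$ forces $\eps_i < 0$, so $(1+\eps_i)L_i > L_i$ is a genuine relaxation of the original constraint, matching the intuition that the approximation is allowed to use a slightly less restrictive (less negative) budget. Because $\Delta_i > 0$ in every case, both the round-up direction $w_i(e) \le \bar{w}_i(e)$ and the direction of \eqref{eq:horvath_inequality} are preserved, which is precisely why the same argument works uniformly for positive, zero, or negative budgets. Nothing deeper is needed for this lemma; the work will lie in the subsequent sections, where a dynamic program over the scaled weights must be constructed that actually runs in time polynomial in $n$ and the condition numbers $\gamma_i$, while tolerating negative $\bar{w}_i$ via a Bellman–Ford-style hop-counted table.
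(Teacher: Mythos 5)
Your proposal is correct and follows essentially the same route as the paper's proof: feasibility of the original optimum in the scaled problem via \eqref{eq:horvath_inequality}, the cost bound from the scaled problem being a relaxation, and the resource bound from the edgewise round-up $w_i(e) \le \bar{w}_i(e)$. Your version is somewhat more careful than the paper's (it explicitly verifies $\Delta_i > 0$ and tracks the sign of $L_i$), but the argument is the same.
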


\begin{proof}
    If $P$ is a feasible solution for \eqref{eq:horvath_original}, then $w_i(P) \leq L_i$ is satisfied for all $i = 1,\ldots,m$. Then using \eqref{eq:horvath_inequality}, we can see that $\bar{w}_i(P) \leq (1+\eps_i) \cdot  L_i \: \forall i \in \{1,\ldots,m\}$. Thus, $P$ is a feasible solution for the scaled version \eqref{eq:horvath_scaled} too. 
    
Note that any feasible and optimal solution for \eqref{eq:horvath_scaled} will be a $(1;1+\eps_1,\ldots,1+\eps_m)$ solution for \eqref{eq:horvath_original}. The feasibility is preserved because none of the weights will increase when we change from $\bar{w_i}(e)$ to $w_i(e)$ and in addition the budget for \eqref{eq:horvath_scaled} is $(1+\boldsymbol{\epsvec})$ times the budget for \eqref{eq:horvath_original}. The cost remains optimal because the edge costs are identical in the scaled and unscaled versions (only the edge weights are scaled, not costs) and the scaled version is a less restricted problem than the unscaled version (since paths are allowed a greater budget).
\end{proof}

\subsection{A dynamic program approach to solve \eqref{eq:horvath_scaled}}

We use the term {\em pattern} to denote a vector $\boldsymbol{\eta} = (\eta_1,\ldots,\eta_m)$, where $\eta_i \: \forall \: (i = 1,\ldots,m)$ is an integer. Note that for any path $P$, there is some pattern $\boldsymbol{\eta}$ which has $\bar{\boldsymbol{w}}(P) = \boldsymbol{\eta} \cdot \boldsymbol{\Delta}$. We define a pattern $\eta$ to be {\em feasible} if $\boldsymbol{\eta} \cdot \boldsymbol{\Delta} \leq (| 1+\boldsymbol{\epsvec} |) \cdot \boldsymbol{L} +|  \boldsymbol{\maxn} \cdot n | $ is satisfied\footnote{Note that we are overestimating our upper bounds. But this wouldn't have a significant impact on our performance}. In our dynamic program, we only need to deal with feasible patterns - this is because paths that aren't feasible will violate the budget constraints and such paths will never be the solution. Note that we are adding an extra term: $| \boldsymbol{\maxn} \cdot n |$ to the budget here when accounting for feasible paths. The reason is that a path that momentarily goes over budget could get back to being under budget by taking a series of negative weight edges.

We call a pattern $\boldsymbol{\eta}$ to be {\em valid} if it is feasible and has $\boldsymbol{\eta} \cdot \boldsymbol{\Delta} \geq \boldsymbol{\maxn} \cdot n $. Valid patterns are the only patterns our DP algorithm needs to consider. This is because there is no way any path can have a weight less than $ \boldsymbol{\maxn} \cdot n $ since a path can have at most $n$ edges each of scaled weight at least $\boldsymbol{\maxn} $ (scaled weight of an edge is at least the original weight of the edge). Furthermore, paths have to be feasible to fulfill budget requirements.

\begin{lemma} \label{le:horvath_pattern_bound}
    The number of valid patterns is $O\bigl(\Pi_{i=1}^{i=m} \left(n+n/\eps_i+ n \cdot \maxni/(\eps_i L_i)\right) \bigr)$.
\end{lemma}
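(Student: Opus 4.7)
The plan is to count the valid patterns coordinate-by-coordinate, since the feasibility and validity constraints decompose across the $m$ coordinates. For each $i \in \{1,\ldots,m\}$, I would read off the interval of allowed values of $\eta_i \Delta_i$ directly from the two definitions: the feasibility upper bound contributes $\eta_i \Delta_i \le (1+\eps_i) L_i + \maxni n$, while validity contributes the lower bound $\eta_i \Delta_i \ge -\maxni n$. The lower bound captures that any simple $s\leadsto t$ path uses at most $n$ edges, each of scaled weight at least $-\maxni$, and the no-negative-cycle assumption rules out cycles that could drive the weight further down.

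Next, I would count integer points in this interval. The interval has length at most $(1+\eps_i) L_i + 2\maxni n$, so the number of admissible integer values of $\eta_i$ is at most
\[
\frac{(1+\eps_i) L_i + 2\maxni n}{\Delta_i} + 1.
\]
Substituting $\Delta_i = \eps_i L_i/(n-1)$, this equals
\[
\frac{(n-1)(1+\eps_i)}{\eps_i} + \frac{2n(n-1)\maxni}{\eps_i L_i} + 1,
\]
which I would distribute into three asymptotic contributions: an $O(n)$ piece coming from the constant term in $(1+\eps_i)$, an $O(n/\eps_i)$ piece coming from the $1/\eps_i$ term, and a negative-weight correction term of order $n\maxni/(\eps_i L_i)$ from the term involving $\maxni$. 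Summing yields the per-coordinate factor $O\!\left(n + n/\eps_i + n\maxni/(\eps_i L_i)\right)$. Finally, I would multiply these bounds over $i = 1, \ldots, m$, using that validity of the multidimensional pattern is equivalent to simultaneous validity in each coordinate, to obtain the claimed product $O\bigl(\prod_{i=1}^m (n + n/\eps_i + n\maxni/(\eps_i L_i))\bigr)$.

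The main conceptual subtlety, rather than a technical obstacle, is handling the sign convention consistently when edge weights can be negative: the validity lower bound on $\eta_i \Delta_i$ must be $-\maxni n$ and not $+\maxni n$, so that the interval symmetrically accommodates both over-budget slack on the feasibility side and below-zero excursions of partial path sums on the validity side. Once the interval is set up correctly, the rest is elementary arithmetic, so I do not foresee any significant obstacle.
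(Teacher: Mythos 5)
Your proposal matches the paper's proof essentially verbatim: per coordinate, bound $\eta_i\Delta_i$ in the interval from $-\maxni n$ up to $(1+\eps_i)L_i + \maxni n$, divide the interval length by $\Delta_i = \eps_i L_i/(n-1)$, and multiply over the $m$ coordinates. One small point, shared with the paper's own write-up: your (correct) intermediate term $2n(n-1)\maxni/(\eps_i L_i)$ is of order $n^2\maxni/(\eps_i L_i)$, not the $n\maxni/(\eps_i L_i)$ that appears in the stated bound and in your final assembled factor; this extra factor of $n$ is harmless for the only downstream use, Corollary \ref{cor:horvath_pattern_bound}, which needs only polynomiality.
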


\begin{proof}
    Note that for any valid pattern $\boldsymbol{\maxn} \cdot n \cdot \leq \boldsymbol{\eta} \cdot \boldsymbol{\Delta} \leq (| 1+\boldsymbol{\epsvec} |) \cdot \boldsymbol{L} + | \boldsymbol{\maxn} \cdot n | $. Thus, for a specific dimension $i$, we would need only $(| 1+\eps_i | )L_i/\Delta_i + 2 n \cdot |\maxni/| \Delta_i = (n-1)(1+1/| \eps_i |)  + 2 n^2 \cdot |\maxni|/(| \eps_i | L_i) = O(n + n/\eps_i) + O(n^2 \cdot |\maxni|)/(| \eps_i | \cdot L_i) $ patterns overall.

    Thus, in total the number of valid patterns would be $O(\Pi_{i=1}^{i=m})(O(n + n/| \eps_i |) + O(n^2\cdot |\maxni|)/(| \eps_i | L_i) )$.
\end{proof}

Lemma \ref{le:horvath_pattern_bound} gives us the following corollary.

\begin{corollary} \label{cor:horvath_pattern_bound}
    If $\forall i \in \{1,2,\ldots,m\} \:\: |\maxni| / | L_i |$ and $1/| \eps_i |$ are polynomials in $n$, then the number of valid patterns is a polynomial in $n$. 
\end{corollary}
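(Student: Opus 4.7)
My plan is to read off the corollary directly from the explicit bound in Lemma \ref{le:horvath_pattern_bound}, substituting in the polynomial hypotheses. The bound is $O\bigl(\prod_{i=1}^{m}(n + n/|\eps_i| + n \cdot |\maxni|/(|\eps_i| \cdot |L_i|))\bigr)$, so it suffices to show that each factor lies in $\poly(n)$ and that the product of $m$ such factors remains in $\poly(n)$.

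For a fixed index $i$, I would rewrite the third summand as $n \cdot (1/|\eps_i|) \cdot \gamma_i$, where $\gamma_i = |\maxni|/|L_i|$ is the condition number introduced earlier in Section \ref{sec:rcsp}. Under the hypothesis $1/|\eps_i|, \gamma_i \in \poly(n)$, each of the three summands $n$, $n/|\eps_i|$, and $n \cdot (1/|\eps_i|) \cdot \gamma_i$ is a product of a constant number of $\poly(n)$ quantities, so each summand — and hence the full $i$-th factor — lies in $\poly(n)$. A sum of three $\poly(n)$ quantities is itself in $\poly(n)$, so the factor bound follows.

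Finally, because the corollary is invoked in the setting of Theorem \ref{thm:horvath_negative}, where $m$ is treated as a constant, the product of $m$ polynomial-in-$n$ factors is itself polynomial in $n$, giving the conclusion. There is no serious obstacle here; the only subtlety worth flagging is that the conclusion genuinely requires $m$ to be constant, since a bound of the form $(n^c)^m$ fails to be polynomial in $n$ once $m$ is permitted to grow with $n$. This matches the assumption stated in Theorem \ref{thm:horvath_negative} that the approximation scheme runs in polynomial time precisely when $m$ is a constant.
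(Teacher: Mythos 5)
Your proposal is correct and follows the same route as the paper, which derives the corollary immediately from the explicit product bound in Lemma \ref{le:horvath_pattern_bound} by substituting the polynomial hypotheses on $\gamma_i = |\maxni|/|L_i|$ and $1/|\eps_i|$. Your remark that the conclusion relies on $m$ being a constant is accurate and consistent with the paper's standing assumption in Section \ref{sec:rcsp} and in Theorem \ref{thm:horvath_negative}.
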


In \cite{horvath2018multi}, the approach is to use dynamic programming to find out the lowest cost one can pay to reach a specific vertex $v \in V$ for a specific feasible pattern. This works because any path's weight always increases when we add an edge.

We cannot directly use the algorithm presented in \cite{horvath2018multi} here because we now have negative weights. This means that the dynamic program might need to use values that haven't been solved yet. We get around this new issue by adding another dimension to track hop count (i.e., the number of edges we have seen in the path so far). This hop count will be non-decreasing when we add a new edge \footnote{On some level this is similar to what the Bellman Ford algorithm does.}.

In Algorithm \ref{alg:horvath_modified}, $DP(v,\boldsymbol{\eta},h)$ represents the least cost for any path to reach the vertex $v$ from the source $s$ within a budget of $\alpha = \boldsymbol{\eta} \cdot \boldsymbol{\Delta}$ and in less than or equal to $h$ hops. Let  $H = \{\boldsymbol{\eta}^1,\boldsymbol{\eta}^2 \ldots, \boldsymbol{\eta}^{|H|}\}$ be the set of valid patterns. The patterns in $H$  are partially ordered by the element-wise comparison we saw earlier. That is, if $\boldsymbol{\eta}^p \leq \boldsymbol{\eta}^q$ (this is done by comparing each element of $\boldsymbol{\eta}^p$ with the corresponding element of $\boldsymbol{\eta}^q$) is satisfied for two patterns $\boldsymbol{\eta}^p$ and $\boldsymbol{\eta}^q$, then $p \leq q$.

Note that for the scaled problem in \eqref{eq:horvath_scaled}, we only need to look at budgets which can be expressed as $\boldsymbol{\eta} \cdot \boldsymbol{\Delta}$ for some $\boldsymbol{\eta} \in H$ - this is because the resource consumptions of the edges are integral multiples of $\boldsymbol{\Delta}$. And any path in the scaled graph is a combination of the scaled edges, so the weight of any path can also be expressed as $\boldsymbol{\eta} \cdot \boldsymbol{\Delta}$.

We also have another DP matrix $PATH $ to track the path we need to use. $PATH(v,\boldsymbol{\eta},h)$ gives the previous vertex one should reach at to reach the vertex $v$ from $s$ within a budget of $\boldsymbol{\eta} \cdot \boldsymbol{\Delta}$ and within $h$ hops. We can backtrack from $PATH(t,\lfloor (1+\boldsymbol{\epsvec)} \cdot \boldsymbol{L} \cdot \boldsymbol{\Delta}^{-1} \rfloor,n)$ to retrieve our solution.

Here is a brief overview of how Algorithm \ref{alg:horvath_modified} works. Let any pattern that has at least one negative element be called a {\em negative pattern}. Patterns without any negative element are called {\em non-negative patterns}. We first set the cost of reaching the source vertex as 0 for any {\em non-negative pattern}  $\boldsymbol{\eta}$ and hop count $h$. Since we don't have any negative cycles we don't have to worry about reaching the source with a negative weight path. We then set the cost of reaching any other vertex $v$ in the graph to be infinity for all valid hop counts and budgets. After this, we find the minimum cost path for all budgets for each potential destination in the graph in increasing order of allowed hop count. For a specific hop count and a specific destination, we look at all incoming arcs and pick the cheapest path one can construct to this destination. Note that this path has fulfill both budget and hop count restrictions.

\begin{algorithm}[!htb]
\caption{Dynamic programming algorithm for \eqref{eq:horvath_scaled}} \label{alg:horvath_modified}
\begin{algorithmic}[1]

\State{$DP(s,\boldsymbol{\eta},h) \gets 0 \: (\forall \text{ non negative and valid } \boldsymbol{\eta}  \text{ and for all hop count) } h \in \{0,1,\ldots,n\}$.}

\State{$DP(s,\boldsymbol{\eta},h) \gets \infty \: (\forall \text{ negative and valid } \boldsymbol{\eta}  \text{ and for all hop count) } h \in \{0,1,\ldots,n\}$.}
\State{$DP(v,\boldsymbol{\eta},h) \gets \infty (\forall v \neq s, \text{ valid } \boldsymbol{\eta} \text{ and for all hop count) } h \in \{0,1,\ldots,n\}$.}

\For{$i = 1,2,\ldots,n$} \Comment{$i$ gives the hop count}
\For{$\boldsymbol{\eta} \in \boldsymbol{\eta}^1, \boldsymbol{\eta}^2, \boldsymbol{\eta}^3, \ldots , \boldsymbol{\eta}^{|H|}$}
\For{$v \in V$}
\For{$e \in \{(u,v) \in E: \bar{\boldsymbol{w}}(u,v) \leq \boldsymbol{\eta} \cdot \boldsymbol{\Delta}\}$} \Comment{Go over every incoming arc to this vertex}

\State{$DP(v,\boldsymbol{\eta},i) \gets min \{DP(v,\boldsymbol{\eta},i),DP(u,\boldsymbol{\eta} - \bar{\boldsymbol{w}}(e) \cdot \boldsymbol{\Delta} ^{-1},i-1) + c(e)\}$.}

\Comment{get the least possible cost by going over potential edges connecting from other vertices. For previous vertices allow one less hop. Also ensure overall weight is within budget.}

\State{$PATH(v,\boldsymbol{\eta},i)$ stores the vertex whose entry was used to finally update $DP(v,\boldsymbol{\eta},i)$}.

\EndFor
\EndFor
\EndFor
\EndFor

\State \Return $DP(t,\lfloor (1+\eps)L \cdot \boldsymbol{\Delta}^{-1} \rfloor,n)$
\State \Return $PATH$

\end{algorithmic}
\end{algorithm}

\begin{claim} \label{cl:horvath_helper_1}
    When Algorithm \ref{alg:horvath_modified} terminates, for each node $v$, if $DP(v,\boldsymbol{\eta}, h)$ is not infinity, then there exists some $s\leadsto v$  path $P$ with $\leq h$ hops such that $\bar{\boldsymbol{w}}(P) \leq \boldsymbol{\eta} \cdot \boldsymbol{\Delta}$ and the cost of path P is $DP(v,\boldsymbol{\eta},h)$ (roughly, this shows that the algorithm is correct). 
\end{claim}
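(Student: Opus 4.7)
The plan is to prove the claim by induction on the hop count $h$, matching the outer loop iteration variable $i$ in Algorithm \ref{alg:horvath_modified}. The inductive invariant will be: after iteration $i = h$ completes, for every vertex $v$, every valid pattern $\boldsymbol{\eta}$, and every $h' \le h$, if $DP(v, \boldsymbol{\eta}, h')$ is finite, then there exists an $s \leadsto v$ path $P$ with at most $h'$ edges such that $\bar{\boldsymbol{w}}(P) \le \boldsymbol{\eta} \cdot \boldsymbol{\Delta}$ (componentwise) and the cost of $P$ equals $DP(v, \boldsymbol{\eta}, h')$. Since the claim only asks about the value after termination, establishing this invariant for $h = n$ suffices.

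For the base case $h = 0$, only entries of the form $DP(s, \boldsymbol{\eta}, 0)$ with non-negative valid $\boldsymbol{\eta}$ are finite (equal to $0$) after initialization. The empty path from $s$ to $s$ has zero edges, zero cost, and its scaled weight vector is the zero vector, which is $\le \boldsymbol{\eta} \cdot \boldsymbol{\Delta}$ whenever $\boldsymbol{\eta}$ is non-negative; all other entries are $\infty$ and the statement is vacuous.

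For the inductive step, fix iteration $i = h \ge 1$ and consider any tuple $(v, \boldsymbol{\eta}, h)$ whose $DP$ value becomes finite during this iteration. By the update rule, there must exist an edge $e = (u, v)$ with $\bar{\boldsymbol{w}}(e) \le \boldsymbol{\eta} \cdot \boldsymbol{\Delta}$ that attained the minimum, so
\[
DP(v, \boldsymbol{\eta}, h) = DP\!\left(u,\ \boldsymbol{\eta} - \bar{\boldsymbol{w}}(e) \cdot \boldsymbol{\Delta}^{-1},\ h-1\right) + c(e).
\]
The edge filtering guarantees that $(\boldsymbol{\eta} - \bar{\boldsymbol{w}}(e) \cdot \boldsymbol{\Delta}^{-1}) \cdot \boldsymbol{\Delta} \ge \mathbf{0}$, so this reduced pattern is still among those the algorithm processes. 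By the inductive hypothesis applied to $(u, \boldsymbol{\eta} - \bar{\boldsymbol{w}}(e) \cdot \boldsymbol{\Delta}^{-1}, h-1)$, there is an $s \leadsto u$ path $P'$ with at most $h-1$ edges, $\bar{\boldsymbol{w}}(P') \le (\boldsymbol{\eta} - \bar{\boldsymbol{w}}(e) \cdot \boldsymbol{\Delta}^{-1}) \cdot \boldsymbol{\Delta}$, and $c(P') = DP(u, \boldsymbol{\eta} - \bar{\boldsymbol{w}}(e) \cdot \boldsymbol{\Delta}^{-1}, h-1)$. Appending $e$ yields $P := P' \cdot e$, an $s \leadsto v$ walk of at most $h$ edges with $\bar{\boldsymbol{w}}(P) = \bar{\boldsymbol{w}}(P') + \bar{\boldsymbol{w}}(e) \le \boldsymbol{\eta} \cdot \boldsymbol{\Delta}$ and cost matching $DP(v, \boldsymbol{\eta}, h)$. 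If $P$ happens not to be simple, the no-negative-cycle assumption lets us excise any closed subwalk without increasing either the cost or any coordinate of $\bar{\boldsymbol{w}}(\cdot)$, yielding a genuine simple path.

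The one delicate case to handle separately is $v = s$ with non-negative $\boldsymbol{\eta}$: here the entry starts at $0$ from initialization, and any incoming-edge update would only replace it if strictly smaller. The main obstacle I anticipate is exactly this bookkeeping subtlety — ensuring that entries initialized outside iteration $h$ (the source-vertex entries and the old $h-1$ entries) still satisfy the invariant, and that reducing walks to simple paths using the negative-cycle hypothesis goes through cleanly for each coordinate simultaneously. Since $c(e) \ge 0$ and $DP$ at the source along non-negative patterns is $0$, no update can strictly improve it, so the empty path remains a valid witness and the invariant is preserved.
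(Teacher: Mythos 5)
Your proof is correct and follows essentially the same route as the paper's: induction on the hop count, with the empty path at the source as the base case and, for the inductive step, appending the minimizing edge $e=(u,v)$ to the witness path guaranteed for $DP(u,\boldsymbol{\eta}-\bar{\boldsymbol{w}}(e)\cdot\boldsymbol{\Delta}^{-1},h-1)$. Your additional care about the source-vertex entries and about excising closed subwalks via the no-negative-cycle assumption only makes the argument tighter than the paper's version.
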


\begin{proof}
    We prove this by induction on hop count. The base case is true because when we allow zero hops, the cost to reach any vertex that is not the source is infinite (because it is impossible), while the cost to reach the source itself is zero for a non-negative pattern (and infinite for negative patterns). 
        
        To prove the inductive step, we will look at the update statement for $DP(v,\boldsymbol{\eta},h)$ in Algorithm \ref{alg:horvath_modified} (line 8). $DP(v,\boldsymbol{\eta},h)$ is set using some $DP(u,\boldsymbol{\eta} - \bar{\boldsymbol{w}}(e) \cdot \boldsymbol{\Delta}^{-1},h-1)$ where $u$ is a vertex with a $u \leadsto v$ edge. $DP(u,\boldsymbol{\eta} - \bar{\boldsymbol{w}}(e) \cdot \boldsymbol{\Delta}^{-1},h-1)$ allows strictly less than $h$ hops. Therefore it would have been examined and computed in a previous iteration. By induction, $DP(u,\boldsymbol{\eta} - \bar{\boldsymbol{w}}(e) \cdot \boldsymbol{\Delta}^{-1},h-1)$ is equal to the cost of an $s \leadsto u$ path $P_2$ with $\leq h-1$ hops such that $\bar{\boldsymbol{w}}(P_2)\leq \boldsymbol{\eta} \cdot \boldsymbol{\Delta} - \bar{\boldsymbol{w}}(e)$. Adding the edge $u-v$ to this, we have a path $P$ whose scaled weight is $\bar{\boldsymbol{w}}(P_2) + \bar{\boldsymbol{w}}(e) \leq \boldsymbol{\eta} \cdot \boldsymbol{\Delta}$ and thus our induction hypothesis is proved.
\end{proof}

\begin{claim} \label{cl:horavth_helper_2}
When Algorithm \ref{alg:horvath_modified} terminates, for each node $v$, if there is a $s \leadsto v$ path $P$ with $\leq h$ hops and $\bar{\boldsymbol{w}}(P) \leq \boldsymbol{\eta} \cdot \boldsymbol{\Delta}$, then $DP(v,\boldsymbol{\eta},h) \leq c(P)$ is satisfied (roughly, this shows that the algorithm is optimal).    
\end{claim}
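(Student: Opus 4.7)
The plan is to prove the claim by induction on the hop count $h$, mirroring the structure of Claim \ref{cl:horvath_helper_1}. At a high level, I will show that any concrete witness path $P$ from $s$ to $v$ with at most $h$ hops and scaled weight bounded by $\boldsymbol{\eta} \cdot \boldsymbol{\Delta}$ is \emph{tracked} by the DP, in the sense that one of the relaxations performed by Algorithm \ref{alg:horvath_modified} at or before iteration $h$ exactly corresponds to extending the analogous witness path for the prefix of $P$.

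For the base case $h=0$, the only candidate path is the empty path at $s$; then $\bar{\boldsymbol{w}}(P) = \boldsymbol{0}$ and $c(P) = 0$. Since $\boldsymbol{\Delta} > 0$ (recall $\Delta_i = \eps_i L_i/(n-1)$ and we assumed $\eps_i L_i > 0$), the hypothesis $\boldsymbol{0} = \bar{\boldsymbol{w}}(P) \leq \boldsymbol{\eta} \cdot \boldsymbol{\Delta}$ forces $\boldsymbol{\eta} \geq \boldsymbol{0}$, so $\boldsymbol{\eta}$ is non-negative and the initialization gives $DP(s, \boldsymbol{\eta}, 0) = 0 = c(P)$. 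For $v \neq s$ there is no zero-hop path, so the statement is vacuous.

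For the inductive step, suppose the claim holds for all hop counts strictly less than $h$, and let $P$ be any $s \leadsto v$ path with at most $h$ hops satisfying $\bar{\boldsymbol{w}}(P) \leq \boldsymbol{\eta} \cdot \boldsymbol{\Delta}$. If $P$ has at most $h-1$ hops the base or earlier inductive step already gives a corresponding bound, and one checks that the loop in Algorithm \ref{alg:horvath_modified} at iteration $h$ cannot increase $DP(v, \boldsymbol{\eta}, \cdot)$, so the bound is inherited (this will require me to note explicitly that the $\min$ in the update never removes an earlier good value). Otherwise $P$ decomposes as $P' \cdot e$ where $e = (u,v)$ is the final edge and $P'$ is an $s \leadsto u$ path with exactly $h-1$ hops. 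Define $\boldsymbol{\eta}' := \boldsymbol{\eta} - \bar{\boldsymbol{w}}(e)\cdot \boldsymbol{\Delta}^{-1}$; since $\bar{\boldsymbol{w}}(e)$ is an integer multiple of $\boldsymbol{\Delta}$, the vector $\boldsymbol{\eta}'$ is an integer pattern, and by construction $\bar{\boldsymbol{w}}(P') \leq \boldsymbol{\eta}' \cdot \boldsymbol{\Delta}$. The inductive hypothesis applied to $P'$, $u$, $\boldsymbol{\eta}'$, and $h-1$ gives $DP(u, \boldsymbol{\eta}', h-1) \leq c(P')$. Since line 8 of Algorithm \ref{alg:horvath_modified} considers exactly this edge $e$ with this $\boldsymbol{\eta}'$, we get $DP(v, \boldsymbol{\eta}, h) \leq DP(u, \boldsymbol{\eta}', h-1) + c(e) \leq c(P') + c(e) = c(P)$, completing the induction.

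The main obstacle is verifying that the residual pattern $\boldsymbol{\eta}'$ is actually within the range of patterns the DP stores and iterates over, i.e.\ that it is \emph{valid}. The lower bound is easy, since $P'$ is a path with at most $n$ edges and each scaled weight satisfies $\bar{w}_i(\cdot) \geq -\maxni$, yielding $\boldsymbol{\eta}' \cdot \boldsymbol{\Delta} \geq \bar{\boldsymbol{w}}(P') \geq -\boldsymbol{\maxn} \cdot n$, which matches the lower envelope used to define validity. The subtle direction is the upper envelope: subtracting an edge with a negative scaled weight makes $\boldsymbol{\eta}' \cdot \boldsymbol{\Delta}$ strictly larger than $\boldsymbol{\eta} \cdot \boldsymbol{\Delta}$, which could in principle push it past $(1+\boldsymbol{\epsvec})\boldsymbol{L} + |\boldsymbol{\maxn} \cdot n|$. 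I plan to resolve this either by reading the feasibility upper bound as already containing the $|\boldsymbol{\maxn} \cdot n|$ slack introduced precisely to absorb such one-edge overshoots, or equivalently by replacing $\boldsymbol{\eta}'$ with the coordinate-wise clipping $\boldsymbol{\eta}''$ defined by $\eta''_i = \min(\eta'_i, \lceil ((1+\eps_i)L_i + n\maxni)/\Delta_i \rceil)$, which remains $\geq \bar{w}_i(P')/\Delta_i$ because $P'$ itself respects the feasibility envelope and therefore leaves the inductive estimate $DP(u, \boldsymbol{\eta}'', h-1) \leq c(P')$ intact.
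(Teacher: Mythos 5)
Your proof is correct and follows essentially the same route as the paper's: induction on hop count, decomposing $P$ into a prefix $P'$ and final edge $e$, applying the inductive hypothesis to the prefix, and concluding via the relaxation in line 8. You are in fact slightly more careful than the paper on two points it glosses over — you apply the inductive hypothesis at the residual pattern $\boldsymbol{\eta}' = \boldsymbol{\eta} - \bar{\boldsymbol{w}}(e)\cdot\boldsymbol{\Delta}^{-1}$ (which is what the algorithm actually looks up, rather than the exact weight pattern of $P'$), and you verify that this residual pattern stays within the range of valid patterns, which is precisely what the $|\boldsymbol{\maxn}\cdot n|$ slack in the feasibility definition is there to absorb.
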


\begin{proof}
    We will again use induction on hop count here. For the base case, the source has a path $P$ with zero hops and $\bar{\boldsymbol{w}}(P) \leq \boldsymbol{\eta} \cdot \boldsymbol{\Delta}$ for non-negative $\boldsymbol{\eta}$ and it doesn't have any such path for negative $\boldsymbol{\eta}$. The $DP$ values fulfill this rule. For any other vertex, there is no path with zero hops and thus all $DP$ values are set to zero for them.
        
    For the inductive case, if there is a $s\leadsto v $ path $P$ with $\leq h$ hops and $\bar{\boldsymbol{w}}(P) \leq \boldsymbol{\eta} \cdot \boldsymbol{\Delta}$, then there is some vertex $u$ such that $(u,v) \in E$ and there exists a $s \leadsto u$ path $P_2$ with $\leq h-1 $ hops. Let $\bar{\boldsymbol{w}}(P_2) =\bar{\boldsymbol{w}}(P) - \bar{\boldsymbol{w}}(e) $ be the weight of the path $P_2$ and let $c(P_2) = c(P) - c(e)$ be its cost. Our algorithm would have evaluated $DP(u,\bar{\boldsymbol{w}}(P_2) \cdot \boldsymbol{\Delta}^{-1},h-1)$ in a previous iteration (because it has $\leq h-1$ hops), and by the induction hypothesis this value would be $\leq c(P_2)$. Thus, after examining the edge $(u,v) = e$, our algorithm will store a cost $\leq DP(u,\bar{\boldsymbol{w}}(P_2) \cdot \boldsymbol{\Delta}^{-1},h-1) + c(e) \leq c{(P_2)} - c{(e)}  \leq c{(P)}$ and in addition the weight of the path returned by our algorithm would be $\leq \bar{\boldsymbol{w}(P_2)} + \bar{\boldsymbol{w}(e)} \cdot \boldsymbol{\Delta} = \bar{\boldsymbol{w}(P_1)}$.  
\end{proof}

\begin{lemma} \label{le:horvath_modified_main}
    If $|E|,n$ are the number of edges and vertices in the input graph $G$, then Algorithm \ref{alg:horvath_modified} runs in $O(|E| \cdot n \cdot |H|)$ time and returns an optimal and feasible path for \eqref{eq:horvath_scaled}. 
\end{lemma}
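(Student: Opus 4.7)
The plan is to combine the two preceding claims, which have already done the substantive work, and layer a straightforward runtime analysis on top. Correctness reduces to observing that an optimal solution to \eqref{eq:horvath_scaled} uses at most $n-1$ edges (since the graph has no negative cycles in any coordinate, repeating a cycle can never strictly decrease the cost while keeping feasibility). Therefore the optimal cost is captured by $DP(t, \boldsymbol{\eta}^*, n)$ for $\boldsymbol{\eta}^* = \lfloor (1+\boldsymbol{\epsvec})\cdot \boldsymbol{L}\cdot \boldsymbol{\Delta}^{-1}\rfloor$, the pattern corresponding to the relaxed budget.

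Next I would invoke Claim~\ref{cl:horvath_helper_1} applied at $(v,\boldsymbol{\eta},h) = (t,\boldsymbol{\eta}^*,n)$: this yields an $s\leadsto t$ path $P$ with $\bar{\boldsymbol{w}}(P) \leq (1+\boldsymbol{\epsvec})\cdot \boldsymbol{L}$ whose cost equals the DP value, so the path retrieved by backtracking along $PATH$ is feasible for \eqref{eq:horvath_scaled} and has cost exactly $DP(t,\boldsymbol{\eta}^*,n)$. In the other direction, I would apply Claim~\ref{cl:horavth_helper_2} to an optimal path $P^*$ for \eqref{eq:horvath_scaled} (which may be assumed simple and hence to have at most $n$ hops): this gives $DP(t,\boldsymbol{\eta}^*,n)\leq c(P^*)$. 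Combining the two inequalities shows that the returned path attains the optimum of \eqref{eq:horvath_scaled}.

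For the runtime, I would simply count the work performed by the four nested loops of Algorithm~\ref{alg:horvath_modified}. The outer hop-count loop has $n$ iterations, the pattern loop has $|H|$ iterations, and the innermost double loop over $v \in V$ and then $e = (u,v) \in E$ touches each edge exactly once per $(\boldsymbol{\eta}, i)$ combination, contributing $|E|$ work rather than $n\cdot|E|$. Each inner step does a constant number of vector operations on $m$-dimensional patterns (lookup in $DP$, one addition, one comparison, one write to $PATH$), which is $O(m)$; treating $m$ as a constant (as in the statement of Theorem~\ref{thm:horvath_negative}) this is $O(1)$. Multiplying gives the claimed $O(|E|\cdot n \cdot |H|)$ bound. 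Initialisation costs $O(n\cdot |H|\cdot n)$, which is absorbed.

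The only mildly delicate point is justifying that we need at most $n$ hops in the optimum: I would argue this by noting that any walk of length $>n-1$ contains a directed cycle, and since all cost values $c(e)$ are non-negative and there is no $\bar{\boldsymbol{w}}$-negative cycle, removing such a cycle can only weakly decrease both cost and each scaled weight, preserving feasibility. This lets us restrict attention to simple paths, aligning with the $h\le n$ bound used in the DP and in Claims~\ref{cl:horvath_helper_1}--\ref{cl:horavth_helper_2}. Everything else is bookkeeping.
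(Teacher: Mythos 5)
Your proposal is correct and follows essentially the same route as the paper: the runtime is obtained by counting that each edge is examined once per hop count and per pattern, the restriction to simple paths (at most $n$ hops) is justified by removing cycles using non-negativity of costs and the absence of $\bar{\boldsymbol{w}}$-negative cycles, and correctness follows by combining Claim~\ref{cl:horvath_helper_1} (achievability of the DP value by a feasible path) with Claim~\ref{cl:horavth_helper_2} (the DP value lower-bounds the optimum) at the final table entry. Your write-up is if anything slightly more explicit than the paper's about where the two claims are instantiated, but there is no substantive difference.
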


\begin{proof}

    For the runtime, see that we examine each edge once per hop and pattern. We have $n$ hops and $|H|$ patterns.

    Note that our algorithm does not need to consider any walk that is not a path. This is because, both the cost $c(e)$ and the scaled weight $\bar{\boldsymbol{w}(e)}$ cannnot have negative cycles. Therefore, if a walk that is used as a solution happens to include a cycle, then we can safely remove the cycle from the walk without violating feasibility (resource-wise) or increasing the cost. 
    
    The Lemma is true when the following two statements are true.

    \begin{enumerate}
        \item For each node $v$, if $DP(v,\boldsymbol{\eta}, h)$ is not infinity, then there exists some $s\leadsto v$  path $P$ with $\leq h$ hops such that $\bar{\boldsymbol{w}}(P) \leq \boldsymbol{\eta} \cdot \boldsymbol{\Delta}$ and the cost of path P is $DP(v,\boldsymbol{\eta},h)$. 
       
        \item For each node $v$, if there is a $s \leadsto v$ path $P$ with $\leq h$ hops and $\bar{\boldsymbol{w}}(P) \leq \boldsymbol{\eta} \cdot \boldsymbol{\Delta}$, then $DP(v,\boldsymbol{\eta},h) \leq c(P)$ is satisfied.
       
    \end{enumerate}
    
    Now, using Claim \ref{cl:horvath_helper_1} and Claim \ref{cl:horavth_helper_2} we see that the Lemma is proved.
   
\end{proof}

Recall Theorem \ref{thm:horvath_negative}. We now present its proof.

\thmhorvathnegative*

\begin{proof}
    From Lemma \ref{le:horvath_modified_main} we can solve \eqref{eq:horvath_scaled} in $O(|E| \cdot n \cdot |H|)$ time. This solution is both optimal and feasible. 

    Using Lemma \ref{le:horvath_conversion} we can use this to retrieve a solution for \eqref{eq:horvath_original}.

    Recall that $ \gamma_i:=|\min\{\min_{e \in E}\{w_i(e)\},0\}|/|L_i|$. Thus, $\gamma_i = |\maxni| / | L_i |$.
    
     Now, using Corollary \ref{cor:horvath_pattern_bound} we can see that $|H|$ is polynomial when $|\maxni| / | L_i |$ and $1/| \eps_i |$ are polynomials for all $i \in \{1,2,\ldots,m\}$. This proves our theorem.
\end{proof}

Also note that as in \cite{gkl2023}, when all the resource consumptions are integers polynomial in the graph size, we can use Algorithm \ref{alg:horvath_modified} to get a path of minimum cost that exactly satisfies the resource requirements. The idea is to just set $\eps_i$ to a small enough value so that any error that occurs from the usage of Algorithm \ref{alg:horvath_modified} is smaller than the smallest possible error for the given input graph. This gives us the following corollary.

\begin{corollary} \label{cl:horvath_modified_integer}
    When for all edges $e \in E$, $w_{e,i} \in [\poly(n)]_{\pm} \:\: \forall i \in \{1,2,\ldots,m\}$, there exists a fully polynomial time $(1;1,\ldots,1)-$ algorithm for $m-\rcsp$ that runs in time polynomial in input size.
\end{corollary}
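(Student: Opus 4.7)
The plan is to derive Corollary \ref{cl:horvath_modified_integer} directly from Theorem \ref{thm:horvath_negative} by picking the tolerance vector small enough that approximate feasibility collapses to exact feasibility due to integrality of the edge weights. This is the same ``pick $\eps_i$ tiny'' trick sketched immediately above the statement and used in \cite{gkl2023}; the only subtlety is sign bookkeeping when $L_i$ can be negative.

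First I would normalize the budgets: since $w_i(e) \in \Z$ for every edge, every $s \leadsto t$ path has $w_i(P) \in \Z$ with $|w_i(P)| \le (n-1)\max_e |w_i(e)| \in \poly(n)$. So I may round $L_i$ to the integer side consistent with the constraint, and drop any $i$ for which $|L_i|$ exceeds the above bound since the corresponding constraint is then vacuous. After this preprocessing every remaining $L_i$ is a nonzero integer with $|L_i| \in \poly(n)$. Next I would set $\eps_i := \siign(L_i)/(2|L_i|+1)$, which satisfies the side-condition $\eps_i L_i > 0$ from Theorem \ref{thm:horvath_negative} and guarantees $|\eps_i L_i| < 1$. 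Invoking the theorem produces an $s \leadsto t$ path $P$ of cost at most $\opt_{RCSP}$ with $w_i(P) \le (1+\eps_i)L_i$ for every $i$. Since both $w_i(P)$ and $L_i$ are integers and the slack $|\eps_i L_i|$ is strictly less than $1$, this relaxed inequality is equivalent to the exact constraint $w_i(P) \le L_i$; hence $P$ is a $(1;1,\ldots,1)$-approximate solution.

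It remains to check the running time. The condition numbers $\gamma_i = |\min\{\min_e w_i(e),0\}|/|L_i|$ are polynomial in $n$ (numerator by hypothesis, denominator at least $1$ after normalization), and $1/|\eps_i| = 2|L_i|+1 \in \poly(n)$, so Theorem \ref{thm:horvath_negative} runs in time $\poly(n^m)$, which is polynomial in the input size for constant $m$. I do not expect any substantive obstacle here; the only point needing care is sign handling when $L_i < 0$, which is exactly what the $\siign(L_i)$ factor in the definition of $\eps_i$ handles, ensuring that $(1+\eps_i)L_i$ is a genuine relaxation (not a tightening) of the original constraint.
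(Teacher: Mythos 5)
Your proposal is correct and follows essentially the same route as the paper: choose each $\eps_i$ small enough that the slack $|\eps_i L_i|$ is strictly below the minimum integer gap of $1$, so that the $(1+\eps_i)$-relaxed constraint collapses to the exact one for integer path weights, then invoke Theorem \ref{thm:horvath_negative}. Your version is in fact slightly more careful than the paper's (explicit sign handling via $\siign(L_i)$ and normalization of the budgets to control $\gamma_i$, which the paper glosses over), but the underlying idea is identical.
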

\begin{proof}

    For the moment, let us assume all edge lengths are non-negative. Even if we allow edge lengths to be negative, the proof wouldn't have any significant changes. The smallest possible nonzero difference for the length between two different paths is $1$. This is because all the lengths are integers and therefore the total path length is also an integer. The smallest nonzero difference (in terms of absolute value) between two integers is $1$.

    The maximum length of any path is a polynomial in $n$. This is because we have $\leq n$ edges in a path and each of those edges has lengths that are polynomial in $n$. 

    When we set $\eps_i =  1/(n^2 \cdot \poly(n)) \leq 1/(n \cdot \poly(n))$ and run Algorithm \ref{alg:horvath_modified}, we can find a path with the exact resource requirements in polynomial time. Note that, because all edge lengths are integers polynomial in $n$,  $|\maxni| / (L_i)$ is a polynomial in $n$. Therefore using \ref{thm:horvath_negative} the claim is proved.
\end{proof}

We also present a slightly modified version of Corollary \ref{cl:horvath_modified_integer} in \ref{cl:horvath_modified_onerational} that can handle rational numbers in one resource but at the cost of going slightly over budget in that resource. 

Recall Corollary \ref{cor:rcsprational}. We now present its proof.

\corrcsprational*

\begin{proof}
    The proof is very similar to Claim \ref{cl:horvath_modified_integer}. We need a $\zeta_1$ that is small enough to ensure there is no error for the first $m-1$ resources. Then, we use $\zeta_2 = \min(\zeta_1,\zeta)$ to get a $(1;1+\zeta_2,\ldots,1+\zeta_2)$ path that fits all our requirements.  Note that the $m^{th}$ resource is always non-negative and thus  $|\maxni| / (L_i) = 0$ is a polynomial in $n$.
\end{proof}

\bibliographystyle{acm}
\bibliography{reference}

\begin{thebibliography}{10}

\bibitem{abboud2018reachability}
{\sc Abboud, A., and Bodwin, G.}
\newblock Reachability preservers: New extremal bounds and approximation algorithms.
\newblock In {\em Proceedings of the Twenty-Ninth Annual ACM-SIAM Symposium on Discrete Algorithms\/} (2018), SIAM, pp.~1865--1883.

\bibitem{ahmed2020graph}
{\sc Ahmed, R., Bodwin, G., Sahneh, F.~D., Hamm, K., Jebelli, M. J.~L., Kobourov, S., and Spence, R.}
\newblock Graph spanners: A tutorial review.
\newblock {\em Computer Science Review 37\/} (2020), 100253.

\bibitem{aaabn-set-cover}
{\sc Alon, N., Awerbuch, B., Azar, Y., Buchbinder, N., and Naor, J.}
\newblock The online set cover problem.
\newblock {\em {SIAM} J. Comput. 39}, 2 (2009), 361--370.

\bibitem{andrews2004hardness}
{\sc Andrews, M.}
\newblock Hardness of buy-at-bulk network design.
\newblock In {\em 45th Annual IEEE Symposium on Foundations of Computer Science (FOCS)\/} (2004), IEEE, pp.~115--124.

\bibitem{antonakopoulos2009}
{\sc Antonakopoulos, S.}
\newblock {\em Buy-at-bulk-related problems in network design}.
\newblock PhD thesis, Columbia University, 2009.

\bibitem{Awerbuch}
{\sc Awerbuch, B.}
\newblock Communication-time trade-offs in network synchronization.
\newblock In {\em Proceedings of the Fourth Annual ACM Symposium on Principles of Distributed Computing\/} (New York, NY, USA, 1985), PODC ’85, Association for Computing Machinery, p.~272–276.

\bibitem{awerbuch1997buy}
{\sc Awerbuch, B., and Azar, Y.}
\newblock Buy-at-bulk network design.
\newblock In {\em Proceedings 38th Annual Symposium on Foundations of Computer Science (FOCS)\/} (1997), IEEE, pp.~542--547.

\bibitem{BaswanaK10}
{\sc Baswana, S., and Kavitha, T.}
\newblock Faster algorithms for all-pairs approximate shortest paths in undirected graphs.
\newblock {\em {SIAM} J. Comput. 39}, 7 (2010), 2865--2896.

\bibitem{berman2013approximation}
{\sc Berman, P., Bhattacharyya, A., Makarychev, K., Raskhodnikova, S., and Yaroslavtsev, G.}
\newblock Approximation algorithms for spanner problems and directed steiner forest.
\newblock {\em Information and Computation 222\/} (2013), 93--107.

\bibitem{bhattacharyya2012transitive}
{\sc Bhattacharyya, A., Grigorescu, E., Jung, K., Raskhodnikova, S., and Woodruff, D.~P.}
\newblock Transitive-closure spanners.
\newblock {\em SIAM Journal on Computing 41}, 6 (2012), 1380--1425.

\bibitem{BodwinW16}
{\sc Bodwin, G., and Williams, V.~V.}
\newblock Better distance preservers and additive spanners.
\newblock In {\em SODA\/} (2016), R.~Krauthgamer, Ed., {SIAM}, pp.~855--872.

\bibitem{cekp}
{\sc Chakrabarty, D., Ene, A., Krishnaswamy, R., and Panigrahi, D.}
\newblock Online buy-at-bulk network design.
\newblock {\em {SIAM} J. Comput. 47}, 4 (2018), 1505--1528.

\bibitem{charikar1999approximation}
{\sc Charikar, M., Chekuri, C., Cheung, T.-Y., Dai, Z., Goel, A., Guha, S., and Li, M.}
\newblock Approximation algorithms for directed steiner problems.
\newblock {\em Journal of Algorithms 33}, 1 (1999), 73--91.

\bibitem{charikar2005non}
{\sc Charikar, M., and Karagiozova, A.}
\newblock On non-uniform multicommodity buy-at-bulk network design.
\newblock In {\em Proceedings of the 37th Annual ACM Symposium on Theory of computing (STOC)\/} (2005), pp.~176--182.

\bibitem{chekuri2011set}
{\sc Chekuri, C., Even, G., Gupta, A., and Segev, D.}
\newblock Set connectivity problems in undirected graphs and the directed steiner network problem.
\newblock {\em ACM Transactions on Algorithms (TALG) 7}, 2 (2011), 1--17.

\bibitem{chekuri2007approximation}
{\sc Chekuri, C., Hajiaghayi, M., Kortsarz, G., and Salavatipour, M.}
\newblock Approximation algorithms for node-weighted buy-at-bulk network design.
\newblock In {\em Proceedings of the 18th Annual ACM-SIAM Symposium on Discrete Algorithms (SODA)\/} (2007), pp.~1265--1274.

\bibitem{chekuri2010approximation}
{\sc Chekuri, C., Hajiaghayi, M.~T., Kortsarz, G., and Salavatipour, M.~R.}
\newblock Approximation algorithms for nonuniform buy-at-bulk network design.
\newblock {\em SIAM Journal on Computing 39}, 5 (2010), 1772--1798.

\bibitem{chlamtavc2020approximating}
{\sc Chlamt{\'a}{\v{c}}, E., Dinitz, M., Kortsarz, G., and Laekhanukit, B.}
\newblock Approximating spanners and directed steiner forest: Upper and lower bounds.
\newblock {\em ACM Transactions on Algorithms (TALG) 16}, 3 (2020), 1--31.

\bibitem{chuzhoy2008approximability}
{\sc Chuzhoy, J., Gupta, A., Naor, J., and Sinha, A.}
\newblock On the approximability of some network design problems.
\newblock {\em ACM Transactions on Algorithms (TALG) 4}, 2 (2008), 1--17.

\bibitem{CowenW04}
{\sc Cowen, L., and Wagner, C.~G.}
\newblock Compact roundtrip routing in directed networks.
\newblock {\em J. Algorithms 50}, 1 (2004), 79--95.

\bibitem{dinitz2011directed}
{\sc Dinitz, M., and Krauthgamer, R.}
\newblock Directed spanners via flow-based linear programs.
\newblock In {\em STOC\/} (2011), pp.~323--332.

\bibitem{dinitz2016approximating}
{\sc Dinitz, M., and Zhang, Z.}
\newblock Approximating low-stretch spanners.
\newblock In {\em Proceedings of the twenty-seventh annual ACM-SIAM symposium on Discrete algorithms\/} (2016), SIAM, pp.~821--840.

\bibitem{dodis1999design}
{\sc Dodis, Y., and Khanna, S.}
\newblock Design networks with bounded pairwise distance.
\newblock In {\em Proceedings of the thirty-first annual ACM symposium on Theory of computing\/} (1999), pp.~750--759.

\bibitem{DorHZ00}
{\sc Dor, D., Halperin, S., and Zwick, U.}
\newblock All-pairs almost shortest paths.
\newblock {\em {SIAM} J. Comput. 29}, 5 (2000), 1740--1759.

\bibitem{Elkin05}
{\sc Elkin, M.}
\newblock Computing almost shortest paths.
\newblock {\em {ACM} Trans. Algorithms 1}, 2 (2005), 283--323.

\bibitem{elkin1999client}
{\sc Elkin, M., and Peleg, D.}
\newblock The client-server 2-spanner problem with applications to network design.
\newblock In {\em {SIROCCO} 8, Proceedings of the 8th International Colloquium on Structural Information and Communication Complexity, Vall de N{\'{u}}ria, Girona-Barcelona, Catalonia, Spain, 27-29 June, 2001\/} (2001), F.~Comellas, J.~F{\`{a}}brega, and P.~Fraigniaud, Eds., vol.~8 of {\em Proceedings in Informatics}, Carleton Scientific, pp.~117--132.

\bibitem{ElkinP07}
{\sc Elkin, M., and Peleg, D.}
\newblock The hardness of approximating spanner problems.
\newblock {\em Theory Comput. Syst. 41}, 4 (2007), 691--729.

\bibitem{feldman2012improved}
{\sc Feldman, M., Kortsarz, G., and Nutov, Z.}
\newblock Improved approximation algorithms for directed steiner forest.
\newblock {\em Journal of Computer and System Sciences 78}, 1 (2012), 279--292.

\bibitem{fleischer2006simple}
{\sc Fleischer, L., K{\"o}nemann, J., Leonardi, S., and Sch{\"a}fer, G.}
\newblock Simple cost sharing schemes for multicommodity rent-or-buy and stochastic steiner tree.
\newblock In {\em Proceedings of the thirty-eighth annual ACM symposium on Theory of computing\/} (2006), pp.~663--670.

\bibitem{gkl2023}
{\sc Grigorescu, E., Kumar, N., and Lin, Y.-S.}
\newblock Approximation algorithms for directed weighted spanners, 2023.

\bibitem{grigorescu2021online}
{\sc Grigorescu, E., Lin, Y.-S., and Quanrud, K.}
\newblock Online directed spanners and steiner forests.
\newblock In {\em Approximation, Randomization, and Combinatorial Optimization. Algorithms and Techniques (APPROX/RANDOM 2021)\/} (2021), Schloss Dagstuhl-Leibniz-Zentrum f{\"u}r Informatik.

\bibitem{guha2009constant}
{\sc Guha, S., Meyerson, A., and Munagala, K.}
\newblock A constant factor approximation for the single sink edge installation problem.
\newblock {\em SIAM Journal on Computing 38}, 6 (2009), 2426--2442.

\bibitem{gupta2003approximation}
{\sc Gupta, A., Kumar, A., P{\'a}l, M., and Roughgarden, T.}
\newblock Approximation via cost-sharing: a simple approximation algorithm for the multicommodity rent-or-buy problem.
\newblock In {\em 44th Annual IEEE Symposium on Foundations of Computer Science, 2003. Proceedings.\/} (2003), IEEE, pp.~606--615.

\bibitem{gupta2003simpler}
{\sc Gupta, A., Kumar, A., and Roughgarden, T.}
\newblock Simpler and better approximation algorithms for network design.
\newblock In {\em Proceedings of the 35th Annual ACM Symposium on Theory of computing (STOC)\/} (2003), pp.~365--372.

\bibitem{gupta2017last}
{\sc Gupta, A., Ravi, R., Talwar, K., and Umboh, S.~W.}
\newblock Last but not least: Online spanners for buy-at-bulk.
\newblock In {\em Proceedings of the Twenty-Eighth Annual ACM-SIAM Symposium on Discrete Algorithms\/} (2017), SIAM, pp.~589--599.

\bibitem{hajiaghayi2014near}
{\sc Hajiaghayi, M., Liaghat, V., and Panigrahi, D.}
\newblock Near-optimal online algorithms for prize-collecting steiner problems.
\newblock In {\em Automata, Languages, and Programming: 41st International Colloquium, ICALP 2014, Copenhagen, Denmark, July 8-11, 2014, Proceedings, Part I 41\/} (2014), Springer, pp.~576--587.

\bibitem{hajiaghayi2013online}
{\sc Hajiaghayi, M.~T., Liaghat, V., and Panigrahi, D.}
\newblock Online node-weighted steiner forest and extensions via disk paintings.
\newblock In {\em 2013 IEEE 54th Annual Symposium on Foundations of Computer Science (FOCS)\/} (2013), IEEE, pp.~558--567.

\bibitem{hassin1992approximation}
{\sc Hassin, R.}
\newblock Approximation schemes for the restricted shortest path problem.
\newblock {\em Mathematics of Operations research 17}, 1 (1992), 36--42.

\bibitem{helvig2001improved}
{\sc Helvig, C.~S., Robins, G., and Zelikovsky, A.}
\newblock An improved approximation scheme for the group steiner problem.
\newblock {\em Networks: An International Journal 37}, 1 (2001), 8--20.

\bibitem{horvath2018multi}
{\sc Horv{\'a}th, M., and Kis, T.}
\newblock Multi-criteria approximation schemes for the resource constrained shortest path problem.
\newblock {\em Optimization Letters 12}, 3 (2018), 475--483.

\bibitem{klein1995nearly}
{\sc Klein, P., and Ravi, R.}
\newblock A nearly best-possible approximation algorithm for node-weighted steiner trees.
\newblock {\em Journal of Algorithms 1}, 19 (1995), 104--115.

\bibitem{Kortsarz2001OnTH}
{\sc Kortsarz, G.}
\newblock On the hardness of approximating spanners.
\newblock {\em Algorithmica 30\/} (2001), 432--450.

\bibitem{lorenz2001simple}
{\sc Lorenz, D.~H., and Raz, D.}
\newblock A simple efficient approximation scheme for the restricted shortest path problem.
\newblock {\em Operations Research Letters 28}, 5 (2001), 213--219.

\bibitem{meyerson2008cost}
{\sc Meyerson, A., Munagala, K., and Plotkin, S.}
\newblock Cost-distance: Two metric network design.
\newblock {\em SIAM Journal on Computing 38}, 4 (2008), 1648--1659.

\bibitem{naor2011online}
{\sc Naor, J., Panigrahi, D., and Singh, M.}
\newblock Online node-weighted steiner tree and related problems.
\newblock In {\em 2011 IEEE 52nd Annual Symposium on Foundations of Computer Science (FOCS)\/} (2011), IEEE, pp.~210--219.

\bibitem{PachockiRSTW18}
{\sc Pachocki, J., Roditty, L., Sidford, A., Tov, R., and Williams, V.~V.}
\newblock Approximating cycles in directed graphs: Fast algorithms for girth and roundtrip spanners.
\newblock In {\em SODA\/} (2018), A.~Czumaj, Ed., {SIAM}, pp.~1374--1392.

\bibitem{PelegS89}
{\sc Peleg, D., and Sch{\"{a}}ffer, A.~A.}
\newblock Graph spanners.
\newblock {\em Journal of Graph Theory 13}, 1 (1989), 99--116.

\bibitem{PelegU89a}
{\sc Peleg, D., and Ullman, J.~D.}
\newblock An optimal synchronizer for the hypercube.
\newblock {\em {SIAM} J. Comput. 18}, 4 (1989), 740--747.

\bibitem{RodittyTZ08}
{\sc Roditty, L., Thorup, M., and Zwick, U.}
\newblock Roundtrip spanners and roundtrip routing in directed graphs.
\newblock {\em {ACM} Trans. Algorithms 4}, 3 (2008), 29:1--29:17.

\bibitem{salman2001approximating}
{\sc Salman, F.~S., Cheriyan, J., Ravi, R., and Subramanian, S.}
\newblock Approximating the single-sink link-installation problem in network design.
\newblock {\em SIAM Journal on Optimization 11}, 3 (2001), 595--610.

\bibitem{shen2020online}
{\sc Shen, X., and Nagarajan, V.}
\newblock Online covering with $l_q$-norm objectives and applications to network design.
\newblock {\em Mathematical Programming 184\/} (2020).

\bibitem{talwar2002single}
{\sc Talwar, K.}
\newblock The single-sink buy-at-bulk lp has constant integrality gap.
\newblock In {\em International Conference on Integer Programming and Combinatorial Optimization (IPCO)\/} (2002), Springer, pp.~475--486.

\end{thebibliography}

\end{document}